\numberwithin{equation}{section}
\newcommand{\beE}{\begin{equation}}
\newcommand{\enE}{\end{equation}}
\def\beA#1\enA{\begin{align}#1\end{align}}
\def\beM#1\enM{\begin{multline}#1\end{multline}}
\newmdenv[%
  roundcorner=5pt,
  linecolor=blue!15,
  linewidth=2pt,
  subtitlebackgroundcolor=blue!15,
  subtitleaboveskip=0pt,
  subtitlebelowskip=0pt,
  subtitleinneraboveskip=0pt,
  innerbottommargin=0pt,
  subtitlefont=\normalfont
]{mdfigure}
\def\myfigureInternal#1#2#3{
\refstepcounter{figure} #1
\begin{mdfigure}[
  frametitle={
    \tikz[baseline=(current bounding box.east),outer sep=0pt]
    \node[anchor=east,rectangle,fill=blue!15,rounded corners]
    {Figure~\thefigure};},
  frametitleaboveskip=-10pt
  innertopmargin=0pt,
  innerbottommargin=0pt,
  roundcorner=5pt,
  linecolor=blue!15,
  linewidth=2pt,
  subtitlebackgroundcolor=blue!15,
  subtitleaboveskip=0pt,
  subtitlebelowskip=0pt,
  subtitleinneraboveskip=0pt,
  subtitlefont=\normalfont
]
#3
\mdfsubtitle{\medskip #2}
\end{mdfigure}
}
\def\myfigure#1#2#3{
\begin{figure}[tb]
\myfigureInternal{#1}{#2}{#3}
\negbigskip
\end{figure}
}
\tikzset{>={Stealth[width=2mm,length=2mm]}}
\newcommand{\sqtimes}{\mathbin{\ooalign{$\sqcup$\cr$\hidewidth\times\hidewidth$}}}
\title{Tight Quantum Lower Bound for Approximate Counting with Quantum States}
\author{
Aleksandrs Belovs\thanks{Faculty of Computing, University of Latvia}
\and
 Ansis Rosmanis\thanks{Graduate School of Mathematics, Nagoya University, Japan}
}
\date{\negbigskip\negbigskip}
\newcommand{\inn}{\bullet}
\newcommand{\out}{\circ}
\newcommand{\Spe}[1]{\mathcal{S}^{#1}}
\begin{document}

\maketitle

\begin{abstract}
We prove tight lower bounds for the following variant of the counting problem considered by Aaronson \etal~\cite{aaronson:counting}.
The task is to distinguish whether an input set $x\subseteq [n]$ has size either $k$ or $k'=(1+\eps)k$.
We assume the algorithm has access to
\begin{itemize}
\item the membership oracle, which, for each $i\in [n]$, can answer whether $i\in x$, or not; and
\item the uniform superposition $\ket|\psi_x> = \sum_{i\in x} \ket|i>/\sqrt{|x|}$ over the elements of $x$. 
Moreover, we consider three different ways how the algorithm can access this state:
    \begin{itemize}
    \item the algorithm can have copies of the state $\ket|\psi_x>$;
    \item the algorithm can execute the reflecting oracle which reflects about the state $\ket|\psi_x>$;
    \item the algorithm can execute the state-generating oracle (or its inverse) which performs the transformation $\ket|0>\mapsto\ket|\psi_x>$.
    \end{itemize}
\end{itemize}
Without the second type of resources (the ones related to $\ket|\psi_x>$), the problem is well-understood, see Brassard \etal~\cite{brassard:counting}.
The study of the problem with the second type of resources was recently initiated by Aaronson \etal~\cite{aaronson:counting}.

We completely resolve the problem for all values of $1/k \le \eps\le 1$, giving tight trade-offs between all types of resources available to the algorithm.
We also demonstrate that our lower bounds are tight.
Thus, we close the main open problems from~\cite{aaronson:counting}.

The lower bounds are proven using variants of the adversary bound from~\cite{belovs:variations} and employing representation theory of the symmetric group applied to the $S_n$-modules $\bC^{\binom{[n]}k}$ and $\bC^{\binom{[n]}k}\otimes \bC$.
\end{abstract}

\tableofcontents

\clearpage

\section{Introduction}

\subsection{Motivation}

The theory of quantum query algorithms deals extensively with the standard input oracle, which is the canonical quantum counterpart of the usual deterministic input oracle.
Other kinds of input oracles have received much less attention, especially from the lower bound point of view.
We find this unfortunate, since these questions are not only interesting theoretically, but also come up in practice.

There are two main techniques for proving quantum query lower bounds: the polynomial method and the adversary method.
The polynomial method was developed by Beals, Buhrman, Cleve, Mosca, and de Wolf~\cite{beals:pol}, and it has been applied to a large number of problems afterwards.
The adversary bound was first formulated by Ambainis~\cite{ambainis:adv} and slightly generalised by H{\o}yer, Neerbek, and Shi~\cite{hoyer:orderedSearch} to what we call a positive-weighted version of the bound, which has been used extensively ever after.
Afterwards, a significant strengthening was obtained by H\o yer, Lee, and \v Spalek~\cite{hoyer:advNegative}: the so-called negative-weighted version of the bound%
\footnote{%
Note that ``negative-weighted'' here means that negative entries in the adversary matrix are allowed, but not imposed.  
This formulation is also commonly known as the general version, but we keep the term ``general'' for the version of~\cite{belovs:variations} that allows general unitary input oracles.
}.
The latter version was shown to be tight by Reichardt~\cite{reichardt:spanPrograms, reichardt:advTight}.
It was further generalised in works by Ambainis, Magnin, R\"otteler, and Roland~\cite{ambainis:symmetryAssisted}; and Lee, Mittal, Reichardt, \v Spalek, and Szegedy~\cite{lee:stateConversion} to include further problems besides function evaluation.

However, all these versions of the adversary bound assume the standard input oracle.
Motivated by the task of distinguishing the main ideas at the heart of the quantum adversary method from the details specific to the standard input oracle, Belovs~\cite{belovs:variations} developed version of the bound that allows arbitrary unitaries as input oracles.  
The preceding versions of the bound could then be obtained as special cases thereof.
One thing missing from~\cite{belovs:variations} though are applications of the developed techniques to actual problems.
This brings us to our main motivation behind this paper: can the general version of the adversary bound be applied to real-world problems?
We show that this is indeed possible on the case of the counting problem.

\subsection{The Approximate Counting Problem}
\label{sec:counting}
The main problem under consideration in this paper is approximate counting.
Given some subset\footnote{%
It is customary to denote input strings to quantum algorithms using lower case Latin letters like $x$ or $y$, with $x_i$ denoting individual symbols of the input string.
We continue with this tradition, and, while we mostly think of the input $x$ as a subset of $[n]$ rather then the corresponding bit-string, we still denote it by a lower case Latin letter.
}
$x\subseteq[n]$ estimate its size with multiplicative precision $\eps$.
The decision version of the problem --- distinguishing the cases when $x$ has size exactly $k$ or exactly $(1+\eps)k$ --- is more appropriate for lower bounds.
Clearly, a lower bound for the latter also gives a lower bound for the former.

The canonical way of encoding $x$ is via its characteristic bit-sting $(x_i)_{i\in[n]}$ with $x_i=1$ iff $i\in x$.
The corresponding input oracle is called the membership oracle because a query tells whether a specific element is a member of the input set.

Quantum query complexity of approximate counting with access to the membership oracle was settled down early on in the history of quantum computation.
Brassard, H{\o}yer, and Tapp showed 
that $O\s[\frac1\eps \sqrt{n/|x|}]$ queries to the membership oracle suffice for approximate counting~\cite{brassard:counting}.
The matching lower bound of $\Omega\s[\frac1\eps \sqrt{n/|x|}]$ was obtained by Nayak and Wu~\cite{nayak:approximatingMedian} using the polynomial method.
A simple proof can be obtained using the positive-weighted adversary.
The original paper by Ambainis~\cite{ambainis:adv} already contains the proof for the special case of $k=n/2$.  The general case has essentially the same proof, which can be found in \rf{app:posWeighted}.

But the membership oracle is not the only way how to encode the input set $x$.
Aaronson, Kothari, Kretschmer, and Thaler~\cite{aaronson:counting} raised the question of estimating complexity of approximate counting when the quantum algorithm has access not only to the membership oracle, but also to the uniform superposition of the elements of the set:
\begin{equation}
\label{eqn:psix}
\psi_x = \frac1{\sqrt{|x|}}\sum_{i\in x}\ket|i>.
\end{equation}
This requires clarification: what does it mean to have access to $\psi_x$?
Ref.~\cite{aaronson:counting} assumes the following two models:
the algorithm can have copies of the state $\psi_x$ or it can reflect about $\psi_x$.
Possession of copies is a rather standard assumption.
The reflecting oracle is more unusual, but it is helpful in algorithms, as it can be used for amplitude amplification and estimation, see \rf{app:upper} for more detail.

The main result of~\cite{aaronson:counting} is as follows.
In order to distinguish whether the size of the input set $x$ is $k$ or $2k$, the quantum algorithm either has to invoke the membership oracle $\Omega(\sqrt{n/k})$ times or access the state $\psi_x$ at least
$\Omega\sB[\min\{ k^{1/3}, \sqrt{n/k} \}]$ times in the aforementioned way.
It was shown to be optimal in the sense that $O(\sqrt{n/k})$ membership queries \emph{alone} suffice to solve this problem, as well as $O\sB[\min\{ k^{1/3}, \sqrt{n/k} \}]$ accesses to the state $\psi_x$ \emph{alone} suffice.
Thus, nothing can be gained by combining the two resources.

The following open problems were formulated in~\cite{aaronson:counting}.
The first one was to distinguish the cases $|x|=k$ and $|x|=(1+\eps)k$ for $\eps\ll 1$.
The second one was to determine the complexity of the problem when the algorithm only has access to copies of the state $\psi_x$, without having access to the reflecting oracle.

\subsection{Our Results}
In this paper, we completely resolve these problems for all values of $\eps$ between 0 and 1, and go beyond that.
In addition to accessing $\psi_x$ via copies and the reflecting oracle, we also allow the state-generating oracle, which performs the transformation $\ket|0>\mapsto \ket|\psi_x>$ for some predetermined state $\ket|0>$  (as it is customary, we also allow to run this transformation in reverse).
The state-generating oracle encompasses both of the aforementioned models of accessing $\psi_x$ from Ref.~\cite{aaronson:counting}.%
\footnote{
    In the second version of their paper~\cite{aaronson:counting}, Aaronson \etal also considered a state-generating oracle.  However, due to technical reasons, the oracle prepared two copies of the state at once: $\ket|0>\mapsto \ket|\psi_x>\ket|\psi_x>$.
    Such an oracle can be simulated with two executions of the standard state-generating oracle $\ket|0>\mapsto \ket|\psi_x>$, while simulation in the opposite direction is not clear.
}
Indeed, one invocation of the state-generating oracle suffices to get a copy of $\psi_x$, while two invocations (one direct and one reverse) suffice to reflect about $\psi_x$.
On the other hand, it is hard to simulate the state-generating oracle using just copies and reflections.

\newcommand{\QL}{\ell}
\newcommand{\QM}{q_{\mathsf{M}}}
\newcommand{\QR}{q_{\mathsf{R}}}
\newcommand{\QG}{q_{\mathsf{G}}}

The results of our paper are summarised in \rf{tbl:main}.
We show that the algorithm has one of the eight options to solve the problem.
In each of the options, the algorithm either uses one type of resources (copies of the state, or one of the oracles), or a pair of resources.
In the case of a single resource, we state the corresponding lower bound; and in the case of a pair of resources, we state a trade-off between them.
It is not possible to gain anything by combining more than two types of resources:
If the algorithm uses three or all four types of resources to solve the problem, then among them there exists a pair or a single resource that satisfies one of the eight conditions of \rf{tbl:main}.
The Rows 6 and 7 of the table do seem to use a triple of resources, but because the number of copies and the number of invocations of the state-generating oracle are joined by a sum: $\ell+\QG$, each of these rows can be further broken down into two trade-offs between pairs of resources.
From the algorithmic point of view, the state-generating oracle is only used to prepare copies of the state in this case.

\begin{table}[tbh]
\[
\begin{tabular}{|l|r@{$\;=\;$}l|}
\hline
Copies of the state & 
$\ell$ &
$\displaystyle \Omega\sC[\min\sfigB{k,\; \frac{\sqrt{k}}\eps,\; \frac{n}{k\eps^2} } ]$\\

\hline
Membership oracle & 
$\QM$ &
$\displaystyle \Omega \s[\frac1\eps\sqrt{\frac nk}]$ \\

\hline
State-generating oracle &
$\QG$ &
$\displaystyle\Omega\sD[\min\sfigC{\frac1\eps \sqrt{\frac nk},\;\; \;\; \frac{k^{1/3}}{\eps^{2/3}} }]$ \\
    \qquad $\bullet$ with copies of the state &
$\QG\sqrt{\ell}$ &
$\displaystyle \Omega\sC[\frac{\sqrt k}\eps] $\\

\hline
Reflecting oracle &
$\QR$ &
$\displaystyle \Omega \s[\min\sfigC{\frac1\eps\sqrt{\frac nk},\; \sqrt{\frac k\eps} + \sqrt{\frac nk}}]$ \\

\qquad \parbox{12em}{$\bullet$ with copies of the state\\ \strut\quad or state-generating oracle} &
$\QR\sqrt{\ell + \QG}$ &
$\displaystyle \Omega\sC[\frac{\sqrt k}\eps] $\\

\qquad \parbox{12em}{$\bullet$ with copies of the state\\ \strut\quad or state-generating oracle} &
$\QR$ &
$\displaystyle \Omega\s[\sqrt{\frac k\eps}]$  \quad and\quad $\displaystyle\ell+\QG\ge 1$\\

\qquad $\bullet$ with membership oracle: &
$\QR$ &
$\displaystyle \Omega\s[\sqrt{\frac k\eps}] $ \quad and\quad $\displaystyle \QM = \Omega \s[\sqrt{\frac nk}]$\\
\hline
\end{tabular}
\]
\caption{Eight different options how the algorithm can use resources to solve the approximate counting problem.
Here, $\ell$ stands for the number of the copies of the state; while $\QM$, $\QG$, and $\QR$ stand for the number of executions of the membership, state-generating, and the reflecting oracles, respectively.
The rows without a bullet correspond to the case when the algorithm uses one type of resources.  The ones with a bullet correspond to pairs of resources.
There are two different options for the case of the reflecting oracle combined with the copies of the state or the state-generating oracle.
}
\label{tbl:main}
\end{table}

\begin{thm}
\label{thm:main}
Consider a quantum algorithm that distinguishes whether the input set $x\subseteq [n]$ has size $k$ or $k'=(1+\eps)k$.
For simplicity, we assume that $n\ge 5k$ and $1/k\le\eps\le 1$.
Suppose the algorithm uses $\ell$ copies of the state $\psi_x$ from~\rf{eqn:psix} and executes the membership, the state-generating, and the reflecting oracles or their inverses $\QM$, $\QG$, and $\QR$ times, respectively.
Then, in order to solve the problem, the resource consumption of the algorithm has to satisfy at least one of the eight conditions in \rf{tbl:main}.
\end{thm}

This is tight as demonstrated by the algorithms in \rf{app:upper}.
Let us note though that there is a $\log k$ discrepancy for the first term of Row 1 of the table, in the sense that we only show an $\ell = O(k\log k)$ algorithm in \rf{prp:coupon}.
However, similar techniques were used in a follow-up paper~\cite{belovs:couponCollector} to show an $\Omega(k\log k)$ lower bound for the task of learning the subset $x$.
All other lower bounds of \rf{thm:main} are tight up to constant factors.

\subsection{Techniques}
\label{sec:techniques}

A substantial difference between our paper and~\cite{aaronson:counting} is in the techniques used.  Aaronson \etal use the method of Laurent polynomials, whereas we use the adversary method combined with representation theory of the symmetric group.

The method of Laurent polynomials is a generalisation of the aforementioned polynomial method.
In Laurent polynomials, negative powers of the variables are allowed.
The positive degree bounds the number of executions of the membership oracle (like in the original version),  executions of the reflecting oracle, and the number of copies, whereas the negative degree bounds the latter two.

As mentioned previously, we use the general version of the adversary bound from~\cite{belovs:variations}.
Moreover, we use the version of the bound that allows for several input oracles that can be queried independently.
The adversary method with several input oracles was already used by Kimmel, Lin, and Lin in~\cite{kimmel:oraclesWithCosts}, but we use a different approach.

Let us return to the proof of \rf{thm:main}.
Using the above machinery, we formulate the lower bound as an optimisation problem.
Since the problem is symmetric, we can use representation theory of the symmetric group.
This gives a simplified optimization problem, which yields the required lower bound in a relatively simple way.
This simplification step can be applied to other symmetric problems that use access to the state~\rf{eqn:psix}.
Also, we give a different approach to representation theory of the symmetric group compared to previous papers like~\cite{rosmanis:smallRange}.
It is based on the use of group algebra to identify isotypical subspaces within a module.

Our technique has a number of advantages compared to the Laurent polynomials.
The Laurent-polynomial method depends crucially on the state being a uniform superposition over some set, the size of the set being an important parameter.
While Laurent polynomials can handle a restricted number of possible input resources,
adversary method, in principle, can be applied to any state-generating, reflecting, or any other type of input oracle.
Laurent polynomials also cannot distinguish between copies of the state and invocations of the reflecting oracle, whereas we are able to count all input resources independently.

\subsection{Organisation}
The paper can be divided into two big parts.
The first one, consisting of Sections~\ref{sec:preliminaries}--\ref{sec:lower}, covers quantum-computational aspect of the paper.
The second one, spanning over Sections~\ref{sec:JohnsonPrelim}--\ref{sec:bigModule}, deals with the representation-theoretical machinery.
Both parts are essentially independent from each other.
In particular, the first part can be read without any representation-theoretical background.  Similarly, the second part does not require any knowledge of quantum algorithms.

The gateway between the two parts are the main technical Lemmata~\ref{lem:CoeffsUnderA}--\ref{lem:DeltaINorm} in \rf{sec:adversaryGeneral}.
They characterise complexity with relation to various input resources as outlined above: copies of the state; state-generating, reflecting, and the membership oracles.
These lemmata are proven in the second part of the paper.

In \rf{sec:preliminaries}, we introduce the main notions from quantum computation.  We define the state conversion and the function evaluation problems, formally define the input oracles mentioned above, and also explain how we allow access to various input oracles simultaneously.

\rf{sec:adv} defines the general adversary bound, which is the main technical tool behind our lower bound.  In essence, this section parallels \rf{sec:preliminaries} and shows how the corresponding algorithmic objects look from the lower bound perspective.
The proofs of relevant results from~\cite{belovs:variations} and the related papers are given for completeness in the appendix.

\rf{sec:lower} is the proof of \rf{thm:main}.  First, we use the results of \rf{sec:adv} to formulate the optimisation problem yielding the lower bound for the approximate counting problem.  Next, the main technical Lemmata~\ref{lem:CoeffsUnderA}--\ref{lem:DeltaINorm} are stated, from which the proof of the lower bound is obtained.

\rf{sec:JohnsonPrelim} starts the second part of the paper.  It contains the main results from representation theory of the symmetric group required for the proofs of Lemmata~\ref{lem:CoeffsUnderA}--\ref{lem:DeltaINorm}.
It also gives a representation-theoretical analysis of the $S_n$-module $\bC^{\binom{[n]}k}$.

In \rf{sec:DeltaINormProof}, we prove the main lemma related to the membership oracle.
In \rf{sec:FirstProofs}, we prove the main lemmata related to the remaining types of resources.
They are related to the state $\psi_x$ from~\rf{eqn:psix} and are consequently based on representation-theoretical analysis of the $S_n$-module $\bC^{\binom{[n]}{k}}\otimes \bC^n$, which we perform in \rf{sec:bigModule}. 

The aim of \rf{app:upper} is to show that \rf{thm:main} is tight by providing the matching upper bounds.  Most of the algorithms are rather simple, but not without an occasional twist.
In Appendices~\ref{app:proofs} and~\ref{app:multipleProof}, we give proofs omitted from \rf{sec:adv}.
In \rf{app:posWeighted}, we argue that a natural generalisation of positive-weighted adversary cannot be applied to this version of the counting problem.

\subsection{Discussion and Future Work}

With Lemmata~\ref{lem:CoeffsUnderA}--\ref{lem:DeltaINorm} at our disposal, the proof of our lower bound is relatively concise and direct.
We believe that similar estimates can be helpful for other problems involving oracles that prepare uniform superpositions over some unstructured set.
Our approach to representation theory of the symmetric group follow a clear pattern, and can be used to analyse other related modules.

Let us mention one open problem here.
In the $k$-fold search problem, the input $x$ is a set of size $k$, and the task is to output $x$.
We are interested in trade-offs between different resources, as defined in this paper, \emph{and the success probability}.
This problem is solved for the membership oracle by Klauck, \v Spalek, and de Wolf~\cite{klauck:kFoldedSearch} using the polynomial method, and by Ambainis~\cite{ambainis:kFoldedSearch} using a variant of the adversary method.
A recent paper~\cite{belovs:couponCollector} is a preliminary step in that direction.

One complication is that our techniques are not well-suited for problems with small success probability.
We are using a variant of the so-called \emph{additive} adversary, which is not capable of proving better-than-linear dependence on the success probability~\cite{spalek:multiplicative}.
Therefore, one might require generalisation of the techniques in~\cite{belovs:variations} to allow small success probability and reformulation of the techniques of the current paper to fit that generalisation.
Let us note that the ideas from the aforementioned proof by Ambainis~\cite{ambainis:kFoldedSearch} were later generalised by \v Spalek into a \emph{multiplicative} version of the adversary bound~\cite{spalek:multiplicative}, suited for small success probabilities.
It is not clear at the moment how to merge the ideas of \v Spalek with our techniques.

Finally, the version of the adversary bound for several input oracles used in this paper inspired formulation of quantum Las Vegas query complexity in~\cite{belovs:LasVegas, belovs:transducers}.

\section{Preliminaries}
\label{sec:preliminaries}

We mostly use standard linear-algebraic notation.
The vector spaces are finite-dimensional complex inner product spaces.
We use ket-notation for vectors representing quantum states, but generally avoid it.
We use $A^*$ to denote conjugate operators (transposed and complex-conjugated matrices).
We use $A\elem[x,y]$ to denote the $(x,y)$-th entry of the matrix $A$.
Symbol $\oplus$ stands for direct sum of spaces, matrices, or vectors.
The context should suffice to tell them apart.

For matrices $A$ and $B$ of the same dimension, we use $A\circ B$ to denote their Hadamard (entry-wise) product.
We use $A^{\otimes \ell}$ and $A^{\circ\ell}$ to denote the $\ell$-th tensor and Hadamard powers, $A\otimes A\otimes\cdots \otimes A$ and $A\circ A\circ\cdots\circ A$, respectively, each repeated $\ell$ times.

For $P$ a predicate, we use $1_P$ to denote 1 if $P$ is true, and 0 if $P$ is false.
We use $[n]$ to denote the set $\{1,2,\dots,n\}$.
\medskip

In the remaining part of this section, we describe our model of quantum algorithms, few aspects of which are not standard.
In \rf{sec:model}, we define a quantum query algorithm and explain how it is supposed to solve a task.
In \rf{sec:multiple}, we discuss how we allow the algorithm to access several input oracles.
Finally, in \rf{sec:inputOracles}, we formally define the input oracles available to the algorithm.

\subsection{State Conversion and Function Evaluation}
\label{sec:model}

We first define a quantum query algorithm. For now, we consider the case when the algorithm has access to one input oracle.
Later, in \rf{sec:multiple}, we will explain how to deal with several input oracles.

A quantum query algorithm $A$ works in some space $\cH$, which we call the \emph{workspace} of the algorithm.
The algorithm is given an \emph{oracle} $O$, which is a unitary in some space $\cM$.
The interaction between the algorithm and the oracle is in the form of queries defined as follows.
We assume the space $\cM$ is embedded in $\cH$ in the following way:
$\cH = \cH_0\oplus (\cH_1\otimes \cM)$, and the \emph{query} is
\begin{equation}
\label{eqn:query}
\tO = I' \oplus (I\otimes O),
\end{equation}
where $I'$ and $I$ are the identities in $\cH_0$ and $\cH_1$, respectively.

\begin{defn}
\label{defn:algorithm}
A \emph{quantum query algorithm} $A$ with input oracle $O$ is a sequence of linear transformations of the following form:
\begin{equation}
\label{eqn:preAlgorithm}
A(O) = U_Q\, \tO^{\pm1} \, U_{Q-1}\,\tO^{\pm1}\,\cdots U_{1}\, \tO^{\pm1}\, U_0,
\end{equation}
where $U_t$ are some input-independent unitaries in $\cH$,
and each $\tO^{\pm1}$ is either a query $\tO$ or its inverse $\tO^*$.
\end{defn}

Now we define the \emph{state conversion} problem~\cite{lee:stateConversion} with \emph{general input oracles}~\cite{belovs:variations}.

\begin{defn}
\label{defn:stateConversion}
A \emph{state conversion problem} is specified by pairs of states $\xi_x\mapsto \tau_x$ and unitaries $O_x$, where $x$ ranges over some set $D$.
We assume here that $\xi_x \in \cK_0$, $\tau_x \in \cK_1$, and $O_x\colon \cM\to\cM$ for some spaces $\cK_0, \cK_1$ and $\cM$.
We say that an algorithm $A$ as in \rf{defn:algorithm} \emph{solves} this problem if, assuming that $\cK_0$ and $\cK_1$ are embedded in $\cH$, we have $A(O_x)\xi_x = \tau_x$ for all $x\in D$.
The choice of embedding is irrelevant.
\end{defn}

In the settings of \rf{defn:stateConversion}, we will usually write $O = (O_x)$ to denote the collection of oracles over all $x\in D$.

The most common case of state conversion is evaluation of a function $f$ defined on some set $D$.
In this case (assuming exact computation for now), one can let $\xi_x = \ket|0>$ and $\tau_x = \ket|f(x)>$.
In our version of approximate counting, the goal is to evaluate a function as well: $f(x)=0$ if $|x|=k$, and $f(x)=1$ if $|x|=k'$.
However, the algorithm is also provided with an additional resource: a state $\xi_x$ that depends on $x\in D$, consisting of multiple copies of the state $\psi_x$.
And, as usual, we allow the algorithm to err with small probability.
This gives the following definition.
\begin{defn}[Boolean Function Evaluation with Initial States]
\label{defn:functionEvaluation}
Let $f\colon D\to\bool$ be a function, and $\cM$ and $\cK_0$ be vector spaces.
For each $x\in D$, let $\xi_x\in\cK_0$ be a quantum state and $O_x$ be a unitary acting in $\cM$.
We say that a quantum algorithm \emph{evaluates the function $f$ with initial states $\xi_x$ and permitted error $\delta$} if it solves the state conversion problem $\xi_x\mapsto\tau_x$ for some collection of vectors $\tau_x\in \cH = \bC^2\otimes \cH'$ such that measuring $\tau_x$ in the first register gives $f(x)$ with probability at least $1-\delta$ for all $x\in D$.
\end{defn}

\begin{rem}[Error Reduction]
\label{rem:reduce}
In \rf{defn:functionEvaluation}, we opted out of having a fixed value of $\delta$, say $1/3$, because, in general, it is not possible to perform standard error reduction by repetition for this task.
The reason is that the algorithm is given only a single copy of the initial state $\xi_x$.
However, for our specific problem of approximate counting, this issue is irrelevant, because the state $\xi_x$ is defined as $\psi_x^{\otimes \ell}$ for some $\ell$.
Therefore, having multiple copies of $\xi_x$ results in changing $\ell$ by a constant factor, which we mostly allow.
In the occasional cases when we do not, like Row 7 of \rf{tbl:main} below, we solve it separately.
\end{rem}

\subsection{Multiple Input Oracles}
\label{sec:multiple}

Assume that in the settings of state conversion, \rf{defn:stateConversion}, the algorithm has not just one input oracle $O = (O_x)$, but several $O^{(1)} = \sA[O^{(1)}_x], \dots, O^{(r)} = \sA[O^{(r)}_x]$, all indexed by $x\in D$.
Each oracle $O^{(i)}$ acts in some space $\cM^{(i)}$.

It is possible to come up with various ways of providing access to these oracles simultaneously.
For instance, it is possible to query them in some pre-defined way.
It is also possible to perform intermediate measurements, which would classically determine which of the input oracles should be applied on each step.
This is the model we assume in \rf{app:upper} for the algorithms that match the lower bounds of \rf{thm:main}.

Since we are interested in proving lower bounds, we allow as flexible a model as possible, which later was formalised as quantum Las Vegas query complexity in~\cite{belovs:LasVegas}.
This model incorporates the scenarios as above via standard technique of postponed measurements.
Additionally, it is easy to reconcile with \rf{defn:stateConversion}, as we formally combine all the input oracles into one.
The only thing that changes is the way how we count the number of queries.

We model our algorithm as a usual single-oracle quantum query algorithm as in \rf{defn:stateConversion} with access to the combined input oracle $O_x$ on $\cM = \cM^{(1)}\oplus\cdots \oplus \cM^{(r)}$ given by
\[
O_x = O^{(1)}_x\oplus O^{(2)}_x \cdots \oplus O^{(r)}_x.
\]
We will not restrict the number of queries to this combined oracle per se.
Instead of that, we will account for the query complexity of each of the individual input oracles post factum.

Let $\psi_{t,x}$ be the state of the algorithm on input $x$ just before the $t$-th application of the combined input oracle.
As in \rf{defn:stateConversion}, assume that on the $t$-th query, the oracle is executed as
\[
I'\oplus (I\otimes O_x) 
= 
I' \oplus \sA[I\otimes O^{(1)}_x]\oplus \sA[I\otimes O^{(2)}_x] \cdots \oplus \sA[I\otimes O^{(r)}_x] .
\]
Decompose the state $\psi_{t,x}$ in the same way:
\[
\psi_{t,x} = \psi^{(0)}_{t,x} \oplus \psi^{(1)}_{t,x} \oplus \psi^{(2)}_{t,x} \oplus \cdots \oplus \psi^{(r)}_{t,x}
\]
so that $\psi^{(i)}_{t,x}$ is processed by $I\otimes O^{(i)}_x$, and $\psi^{(0)}_{t,x}$ remains intact.
In the case when the inverse oracle $I' \oplus (I\otimes O_x^{-1})$ is applied, we define $\psi^{(i)}_{t,x}$ similarly.

Now we \emph{define} the query complexity of the oracle $O^{(i)}$ on input $x$ as
\begin{equation}
\label{eqn:Ti_x}
L^{(i)}_x = \sum_t \norm|\psi_{t,x}^{(i)}|^2.
\end{equation}
Thus, query complexity can vary for different inputs $x$.
The total query complexity of the $i$-th input oracle is defined as the maximum of $L^{(i)}_x$ over all $x\in D$.


\subsection{Definitions of the Input Oracles}
\label{sec:inputOracles}
In this section, we define the input oracles available to the algorithm.
We use the same notation $O=(O_x)$ to denote any input oracle, as it will be clear from the context which one is meant in each case.
For state-generating and membership oracles, we first give a more general definition, and then a more restricted version, which is without loss of generality.

\begin{itemize}
\item 
First, we define the \emph{state-generating input oracle}.
In this case, for each $x\in D$, the algorithm is given a black-box access to a unitary $O_x$ performing the transformation $\ket|0> \mapsto \ket|\psi_x>$ in $\cM=\bC^n$, where $\ket|0>$ is some fixed state in $\cM$.
The algorithm should work for any unitary $O_x$ performing this transformation%
\footnote{%
\label{foot:stateGenerating}
To formally encapsulate this into the formalism of \rf{defn:stateConversion}, we can assume the labels are of the form $(x, O)$, where $x\in D$ is the original label, and $O$ is some legitimate input oracle for $x$.
}.

Alternatively, one may assume that $\cM$ is $n+1$-dimensional: $\cM = \bC^{\{0\}\cup[n]}$, and the state $\ket|0>$ is orthogonal to all $\psi_x$.
The transformation $O_x$ is the reflection about the orthogonal complement of $(\ket|0> - \ket|\psi_x>)/\sqrt2$:
\begin{equation}
\label{eqn:oracleStateGenerating}
O_x = I - \sA[\ket|0> - \ket|\psi_x>]\sA[\ket|0> - \ket|\psi_x>]^*.
\end{equation}
This is a special case of the above transformation, and it can be implemented using two invocations of \emph{any} input oracle mapping $\ket|0> \mapsto \ket|\psi_x>$: one inverse and one direct.  

\item
Second, we define the \emph{reflecting input oracle}.
In this case, the black-box unitary $O_x$ is the reflection about the state $\psi_x$: $O_x = 2\psi_x\psi_x^* - I$ in $\cM = \bC^n$.

\item
Third, we use the \emph{standard (membership) input oracle}.
In this case, we treat $D$ as a subset of the set $\bool^n$ of bit-strings of length $n$ and $\cM = \bC^n\otimes \bC^2$.
For $x\in D$, the corresponding input oracle performs the transformation $O_x\colon \ket|i>\ket|0>\mapsto \ket|i>\ket|x_i>$.
The algorithm should work for any unitary $O_x$ performing this transformation.
This can be seen as a direct sum of state-generating oracles performing transformations $\ket|0>\mapsto\ket|x_i>$ over all $i\in [n]$.

Alternatively, one may assume that $O_x$ performs the transformation 
\begin{equation}
\label{eqn:oracleMembership}
O_x\colon \ket|i>\ket|b> \mapsto \ket|i>\ket|b\oplus x_i>,
\end{equation}
where $i\in [n]$, $b\in\bool$ and $\oplus$ is the XOR operation.
This is the usually assumed special case of the above transformation~\cite{buhrman:querySurvey}.
And, again, it can be implemented using two invocations of any input oracle mapping $O_x\colon \ket|i>\ket|0>\mapsto \ket|i>\ket|x_i>$.
\end{itemize}

It is easy to see that, for any given state, the state-generating input oracle is at least as strong as both the reflecting oracle and a copy of the state.
Indeed, it is possible to implement the reflecting oracle using the state-generating oracle twice; and, using it once, it is possible to get a copy of the state.
Other than that, the above resources are incomparable.

\section{General Adversary Method}
\label{sec:adv}
The aim of this section is to describe the techniques used in the proof of the lower bound.
We extend on the general adversary method from~\cite{belovs:variations}, and we adopt a lower-bound-related perspective.

\subsection{Formulation for Multiple Input Oracles}

We need few pieces of notation.
First, for an input oracle $O = (O_x)_{x\in D}$, we define a family of matrices \emph{representing} the input oracle, where $x,y$ range over $D$:
\begin{equation}
\label{eqn:oracleObject}
\Delta_{x,y} = O_x - O_y.
\end{equation}

For pairs $\xi_x\mapsto \tau_x$ of states, where $x$ ranges over $D$, we define a $D\times D$-matrix $E$ given by
\begin{equation}
\label{eqn:stateConversionObject}
E\elem[x,y] = \ip<\xi_x, \xi_y> - \ip<\tau_x, \tau_y>,
\end{equation}
which \emph{represents} state conversion $\xi_x\mapsto \tau_x$.
We often treat $E$ as a family of $1\times 1$-matrices indexed by $x,y\in D$.

Assume $(A_{x,y})_{x,y\in D}$ is a family of matrices all of the same size.
Let $\Gamma$ be an $X\times Y$ matrix with $X,Y\subseteq D$.
We define $\Gamma\circ A$ as the $X\times Y$ block matrix, where the block corresponding to $x\in X$ and $y\in Y$ is given by $\Gamma\elem[x,y]A_{x,y}$.
This is a generalisation of the usual Hadamard product.

Now we define a variant of the adversary bound for multiple input oracles, which follows the same ideas as in the proof of negative-weighted adversary~\cite{hoyer:advNegative, belovs:variations}.

\begin{thm}[State Conversion, Multiple Input Oracles]
\label{thm:multipleGamma2}
Consider the state-conversion problem $\xi_x\mapsto \tau_x$ with input oracles $O^{(1)}, O^{(2)},\dots, O^{(r)}$.
Let  $\Delta^{(1)},\Delta^{(2)},\dots,\Delta^{(r)}$ be defined like in~\rf{eqn:oracleObject} for these oracles, and $E$ be as in~\rf{eqn:stateConversionObject}.
Then, for any $D\times D$-matrix $\Gamma$ and every algorithm performing the state conversion exactly, we have
\[
\sum_{i=1}^r \normA|\Gamma\circ\Delta^{(i)}| \max_{x\in D} L_x^{(i)} \ge \|\Gamma\circ E\|,
\]
where $L_x^{(i)}$ are as in~\rf{eqn:Ti_x}.
\end{thm}

We prove this theorem in \rf{app:multipleProof}.
The following easy corollary is useful.

\begin{cor}
\label{cor:multipleGamma2}
Consider the state-conversion problem $\xi_x\mapsto \tau_x$ with $r=O(1)$ input oracles $O^{(1)}, O^{(2)},\dots, O^{(r)}$.
Let $E$ be as in~\rf{eqn:stateConversionObject}, and  $\Delta^{(1)},\Delta^{(2)},\dots,\Delta^{(r)}$ be as in~\rf{eqn:oracleObject}.

Then, for any $D\times D$-matrix $\Gamma$, and every algorithm performing the state conversion exactly, there exists $i$ such that the query complexity of $O^{(i)}$ is 
\begin{equation}
\label{eqn:multipleGamma}
\max_{x\in D} L^{(i)}_x = \Omega\s[\frac{\|\Gamma\circ E\|}{\|\Gamma\circ \Delta^{(i)}\|}].
\end{equation}
\end{cor}

For one input oracle, and if $E$ corresponds to function evaluation, this becomes the usual adversary bound by H\o yer \etal~\cite{hoyer:advNegative}.
Intuitively, our formulation says that if we want to prove a lower bound against a collection of oracles, it suffices to find an adversary matrix $\Gamma$ which is a solution to all single-oracle versions of the bounds simultaneously.

\subsection{Alternative Input and Output Conditions}
\label{sec:advInputOracles}

\rf{cor:multipleGamma2} is rather cumbersome to apply in its original form due to various $\tau_x$ vectors allowed in \rf{defn:functionEvaluation} because of permitted error $\delta$, and the necessity to deal with complicated matrices $\Delta_{x,y}$.
Luckily, it is possible to simplify the bound using equivalent ways to represent input and output conditions.

We start with the output condition for approximate function evaluation with initial states.
We assume notation of \rf{defn:functionEvaluation}.
In particular, $\delta$ is the error parameter of the algorithm, and we also let $X=f^{-1}(1)$ and $Y = f^{-1}(0)$.
In this context, we will switch from $D\times D$-matrices to $X\times Y$-matrices.
Let $\Xi$ be the $X\times Y$-matrix defined by $\Xi\elem[x,y] = \ip<\xi_x, \xi_y>$.
The following result~\cite{lee:stateConversion} effectively reduces the approximate case to the exact case.
See \rf{app:proofs} for a proof.

\begin{lem}
\label{lem:advFunctionEvaluation}
In the above notation, assume that $\Gamma$ is an $X\times Y$-matrix that satisfies
\begin{equation}
\label{eqn:advFunctionCondition}
\|\Gamma\| = 1
\qqand
\|\Gamma\circ\Xi\| \ge 3\sqrt\delta.
\end{equation}
Then, for every state conversion problem $\xi_x\mapsto \tau_x$ that corresponds to evaluation of the function $f$ with initial states $\xi_x$ and permitted error $\delta$, we get
\begin{equation}
\label{eqn:advFunctionOutcome}
\|\Gamma\circ E\| \ge \sqrt\delta,
\end{equation}
where $E$ is defined in~\rf{eqn:stateConversionObject}.
\end{lem}

\mycutecommand{\Dp}{\Delta^{\psi}}
\mycutecommand{\Dps}{\Delta^{\psi^*}}
\mycutecommand{\Dpp}{\Delta^{\psi\psi^*}}

Now let us move to the input conditions.
The key idea is the following definition.

\begin{defn}[$\gamma_2$-equivalence]
\label{defn:equivalence}
Let $\Delta = (\Delta_{x,y})$ and $\Delta' = (\Delta'_{x,y})$ be two families of matrices both indexed by $x,y\in D$.
If, for every $D\times D$-matrix $\Gamma$, we have
\[
\frac{\|\Gamma\circ \Delta\|}{\|\Gamma\circ\Delta'\|} = \Theta(1),
\]
then we say that $\Delta$ and $\Delta'$ are \emph{$\gamma_2$-equivalent}.
If $\Delta'$ is $\gamma_2$-equivalent to the family $\Delta$ defined in~\rf{eqn:oracleObject}, then we say that $\Delta'$ \emph{represents} the input oracle $O = (O_x)$.
\end{defn}

We will often have that $\Delta' = A^{(1)}\oplus\cdots\oplus A^{(\ell)}$ meaning that $\Delta'_{x,y} = A^{(1)}_{x,y}\oplus\cdots\oplus A^{(\ell)}_{x,y}$ for all $x,y\in D$.
In this case, we will say that the input oracle is represented by a tuple $A^{(1)},\dots,A^{(\ell)}$.
The following identity can be used in this case:
\[
\normA|\Gamma\circ (A^{(1)}\oplus\cdots\oplus A^{(\ell)})| 
= \max_{i\in[\ell]}\normA|\Gamma\circ A^{(i)}|.
\]

The idea is that we replace $\Delta^{(i)}$ in the denominator of~\rf{eqn:multipleGamma} with  any $\gamma_2$-equivalent $\Delta^{\prime(i)}$, which changes the bound by at most a constant factor.
Thus, we can find a representation that is easier to work with.
This is what we will do in the remaining part of this section for the input oracles defined in \rf{sec:inputOracles}.
The results are from previous papers, and their proofs are given in~\rf{app:proofs} for completeness.

For the reflecting oracle, defined by $O_x = 2\psi_x\psi_x^*-I$, we will use the standard definition~\rf{eqn:oracleObject} removing the constant factor 2.
Thus, the following family represents the reflecting oracle:
\begin{equation}
\label{eqn:Deltapsipsi}
\Dpp_{x,y} := \psi_x\psi_x^* - \psi_y\psi_y^*.
\end{equation}
We use mnemonic $\psi\psi^*$, which reminds of the definition.

The state-generating oracle is represented~\cite{belovs:distributions} by the family $\Dp\oplus\Dps$, where 
\begin{equation}
\label{eqn:Deltapsi}
\Dp_{x,y} := \psi_x - \psi_y
\qqand
\Dps_{x,y} := \psi_x^* - \psi_y^*.
\end{equation}

\mycutecommand{\Dmem}{\Delta^{\mathrm{membership}}}

Finally, the following family represents~\cite{belovs:variations} the membership oracle:
\begin{equation}
\label{eqn:Delta_xy}
\Dmem_{x,y} := \bigoplus\nolimits_{i\in [n]} 1_{x_i\ne y_i}.
\end{equation}
Here $\bigoplus$ stands for the direct sum of $1\times 1$ matrices, hence, $\Dmem_{x,y}$ is a diagonal $n\times n$ matrix.
The usual way to write $\Dmem$ is as $\Delta_1\oplus\cdots\oplus \Delta_n$, where
\begin{equation}
\label{eqn:Deltai}
\Delta_i\elem[x,y] := 1_{x_i\ne y_i}.
\end{equation}
Again, we treat these $D\times D$-matrices as families of $1\times 1$-matrices.

\subsection{Application to Approximate Counting}
\label{sec:technicalFormulation}

Let us apply the above results for the special case of approximate counting of \rf{thm:main}.
In this case, the set $D$ of labels is $X \cup Y$, where $X$ consists of all the $n$-bit strings of Hamming weight $k$ and $Y$ consists of all the $n$-bit strings of Hamming weight $k' = (1+\eps)k$.

Define an $X\times Y$ matrix $\Psi$ by
\begin{equation}
\label{eqn:Psi}
\Psi\elem[x,y] := \ip<\psi_x, \psi_y>,
\end{equation}
where $\psi_x$ are defined in~\rf{eqn:psix}.
The Gram matrix of the initial states, $\Xi$ of~\rf{lem:advFunctionEvaluation}, is given by
\begin{equation}
\label{eqn:XiSpecific}
\Xi = \Psi^{\circ\ell},
\end{equation}
where $\ell$ is the number of copies of the state $\psi_x$ available to the algorithm.

\begin{prp}
\label{prp:formulation}
Assume we can find an $X\times Y$ matrix $\Gamma$ such that 
\itemstart
\item $\|\Gamma\| = 1$;
\item $\|\Gamma\circ \Psi^{\circ\ell}\| = \Omega(1)$ with $\Psi$ as in~\rf{eqn:Psi};
\item for each $i\in[n]$, we have $\|\Gamma\circ\Delta_i\| \le 1/T_1$ with $\Delta_i$ as in~\rf{eqn:Deltai};
\item both $\|\Gamma\circ\Dp\|$ and $\|\Gamma\circ\Dps\|$ are at most $1/T_2$ with $\Dp$ and $\Dps$ as in~\rf{eqn:Deltapsi};
\item and $\|\Gamma\circ\Dpp\| \le 1/T_3$ with $\Dpp$ as in~\rf{eqn:Deltapsipsi}.
\itemend
Then, there exists a constant $C>0$ such that every quantum algorithm, given $C\ell$ copies of the state $\psi_x$, and solving the approximate counting problem with bounded error must
\itemstart
\item either execute the membership oracle $\Omega(T_1)$ times;
\item or execute the state-generating oracle $\Omega(T_2)$ times;
\item or execute the reflecting oracle $\Omega(T_3)$ times.
\itemend
\end{prp}

In order to prove~\rf{thm:main} it suffices to find a feasible solution $\Gamma$ for this optimisation problem with good objective value.

\pfstart[Proof of \rf{prp:formulation}]
Let a constant $C'$ be such that $\|\Gamma\circ \Psi^{\circ\ell}\|\ge C'$.
Assume the error parameter of the algorithm is $\delta=(C'/3)^2 = \Omega(1)$.
We may achieve this by repeating the algorithm $1/C$ times for some constant $C$.
By \rf{rem:reduce}, we can do this, assuming we have $\frac1C\cdot C\ell = \ell$ copies of the state $\psi_x$.

The proposition follows from \rf{cor:multipleGamma2} using the representations of the input oracles from \rf{sec:advInputOracles}, since by \rf{lem:advFunctionEvaluation} and~\rf{eqn:XiSpecific}, $\|\Gamma\circ E\| = \Omega(1)$.
\pfend

\section{Lower Bound}
\label{sec:lower}
In this section, we prove the lower bound for approximate counting, \rf{thm:main}.
First, in \rf{sec:adversaryGeneral}, we formulate the general form of the adversary matrix $\Gamma$, which follows from representation theory, and state the main technical results: Lemmata~\ref{lem:CoeffsUnderA}--\ref{lem:DeltaINorm}.  
Their proofs constitute the second part of this paper.
In \rf{sec:construction}, we fix parameters in this general form, and get specific estimates.
Finally, in \rf{sec:proof}, we use these results to prove \rf{thm:main}.

\subsection{Main Lemmata}
\label{sec:adversaryGeneral}

In this section, we give the general form of the adversary matrix $\Gamma$ and give estimates on the norm of the matrix when various $\Delta$-operators are applied to it.
Using a variant of the automorphism principle as in~\cite{hoyer:advNegative}, we can assume that the matrix $\Gamma$ is symmetric with respect to the permutations of the input variables.
Then representation theory of the symmetric group tells us that the adversary matrix has the form
\begin{equation}
\label{eqn:Gamma}
\Gamma = \sum_{j=0}^k \gamma_j \Phi_j,
\end{equation}
where $\Phi_j$ are the isometric isomorphisms between the copies of the irreps of the symmetric group in $S_n$-modules $\bC^{\binom {[n]}{k'}}$ and $\bC^{\binom {[n]}k}$.
(See \rf{sec:JohnsonPrelim} for more detail.)
The matrices $\Phi_j$ will only have real entries though, and all $\gamma_i$ will be non-negative real numbers.
For the purposes of this section, it suffices to know that the ranges of different $\Phi_j$ are pairwise orthogonal, and so are their coimages.
The isometry implies $\|\Phi_j\|=1$ for all $j$.
Hence, in particular, $\|\Gamma\| = \max_j |\gamma_j|$.

\mycutecommand{\fita}{\phi}

Now let us describe how operations from~\rf{sec:advInputOracles} act on matrices of the form~\rf{eqn:Gamma}.
Here we only state the results, the proofs being postponed to the next sections.
We will need the following $4\times 4$ diagonal matrices and 4-dimensional vectors, where $j$ ranges from 0 to $k$.
\begin{equation}
\label{eqn:fita}
\Gamma_j = \begin{pmatrix}
\gamma_{j-1} & & &\\
& \gamma_j & & \\
&& \gamma_j &\\
&&& \gamma_{j+1}
\end{pmatrix}
\qqand
\renewcommand{\arraystretch}{1.8}
\fita^k_j = 
\begin{pmatrix}
\fita_{j,1}^k\\
\fita_{j,2}^k\\
\fita_{j,3}^k\\
\fita_{j,4}^k
\end{pmatrix}
=
\begin{pmatrix}
\sqrt{\frac{j(k-j+1)(n-k-j+1)}{(n-2j+2)(n-2j+1)k}}\\
\sqrt{\frac{k}{n}}\\
\frac{n-2k}{\sqrt{nk}} \sqrt{\frac{j(n-j+1)}{(n-2j+2)(n-2j)}}\\
\sqrt{\frac{(n-j+1)(k-j)(n-k-j)}{(n-2j+1)(n-2j)k}}
\end{pmatrix}.
\end{equation}
In $\Gamma_0$, it is assumed that $\gamma_{-1} = 0$.  By our assumption on $n\ge 5k$, all the entries of $\fita^k_j$ are non-negative real numbers.

\begin{clm}
\label{clm:phi_j is unit}
The vector $\fita_j^k$ has norm 1.
\end{clm}

\pfstart
This can be verified by a direct computation. 
However, there is a reason behind this, since the entries of $\fita_j^k$ give the coordinates of a unit vector with respect to an orthonormal basis, see \rf{sec:operatorV}.
\pfend

The following lemmata will be proven in \rf{sec:FirstProofs}.
We assume here that $\Gamma$ is an $X\times Y = \binom{[n]}{k}\times \binom{[n]}{k'}$ matrix defined as in~\rf{eqn:Gamma}, and $\fita_j = \fita_j^k$ and $\fita'_j = \fita_j^{k'}$ as above.

\begin{lem}
\label{lem:CoeffsUnderA}
For $\Gamma$ as in~\rf{eqn:Gamma} and $\Psi$ as in~\rf{eqn:Psi}, we have
\[
\Gamma\circ \Psi = \sum_{j=0}^k  \fita_j^* \Gamma_j \fita'_j\cdot \Phi_j. 
\]
\end{lem}

\begin{lem}
\label{lem:DeltaGenNorm}
For $\Gamma$ as in~\rf{eqn:Gamma} and $\Dp$ and $\Dps$ as in~\rf{eqn:Deltapsi}, we have
\[
\norm|\Gamma\circ\Dp| = 
\max_j 
\norm|\gamma_j \fita_j - \Gamma_j \fita'_j |
\qqand
\norm|\Gamma\circ\Dps| = 
\max_j 
\norm|\Gamma_j \fita_j - \gamma_j \fita'_j |.
\]
\end{lem}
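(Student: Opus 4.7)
The plan is to rewrite $\Gamma\circ\Delta_\psi$ as a composition of $S_n$-equivariant operators and then exploit the isotypic decompositions provided by the representation theory of the symmetric group. Introduce the diagonal-embedding maps $\widetilde\Psi_X\colon\bR^X\to\bR^n\otimes\bR^X$, $e_x\mapsto\psi_x\otimes e_x$, and analogously $\widetilde\Psi_Y$. A short check of entries gives the identity
\[
\Gamma\circ\Delta_\psi \;=\; \widetilde\Psi_X\,\Gamma \;-\; (I_n\otimes\Gamma)\,\widetilde\Psi_Y,
\]
viewed as an operator $\bR^Y\to\bR^n\otimes\bR^X$. Each factor intertwines the $S_n$-action, so the whole operator does too. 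In particular, the $V_j^Y$-isotypic blocks of $\bR^Y$ are mutually orthogonal and are sent into mutually orthogonal isotypic components of the target, so the norm equals $\max_{j}\norm|(\Gamma\circ\Delta_\psi)|_{V_j^Y}|$.

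By Pieri's rule for the symmetric group (restricted to two-row partitions), the $V_{(n-j,j)}$-isotypic component of $\bR^n\otimes\bR^X$ splits into exactly four mutually orthogonal copies of $V_{(n-j,j)}$, one inside each of $V_{(n)}\otimes V_j^X$, $V_{(n-1,1)}\otimes V_{j-1}^X$, $V_{(n-1,1)}\otimes V_j^X$ and $V_{(n-1,1)}\otimes V_{j+1}^X$ (with the boundary conventions announced after~\rf{eqn:vectors} handling $j\in\{0,k\}$); the same holds in $\bR^n\otimes\bR^Y$. By Schur's lemma, each of our equivariant maps, restricted to an isotypic block, is described by a $4$-vector of scalar coefficients that record the amplitude with which it hits these four target copies. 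The main technical input, to be established in Sections~\ref{sec:SinglingOut1} and~\ref{sec:technical} via the Johnson association scheme, is that $\widetilde\Psi_X|_{V_j^X}$ and $\widetilde\Psi_Y|_{V_j^Y}$ are described precisely by the vectors $\phi_j$ and $\phi'_j$ of~\rf{eqn:vectors}, with the four entries ordered to match the four target subspaces listed above.

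With this identification, composition is immediate. The map $\widetilde\Psi_X\Gamma$, restricted to $V_j^Y$, first scales by $\gamma_j$ (because $\Gamma|_{V_j^Y}=\gamma_j\Phi_j$) and then spreads with magnitudes $\phi_j$, producing the coefficient vector $\gamma_j\phi_j$. The map $(I_n\otimes\Gamma)\widetilde\Psi_Y$, restricted to $V_j^Y$, first spreads with magnitudes $\phi'_j$ into the four copies of $V_{(n-j,j)}$ sitting in $V_{(n-1,1)}\otimes V_{j-1}^Y$, $V_{(n)}\otimes V_j^Y$, $V_{(n-1,1)}\otimes V_j^Y$ and $V_{(n-1,1)}\otimes V_{j+1}^Y$, and then $(I_n\otimes\Gamma)$ multiplies these four copies by $\gamma_{j-1},\gamma_j,\gamma_j,\gamma_{j+1}$ respectively, producing exactly $\widetilde\phi'_j$. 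Orthogonality of the four target copies turns the operator norm into the Euclidean norm of the coefficient difference, so $\norm|\Gamma\circ\Delta_\psi|=\max_{j}\norm|\gamma_j\phi_j-\widetilde\phi'_j|$.

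For $\Delta_{\psi^*}$ I pass to the adjoint. From $(\Gamma\circ\Delta_{\psi^*})^*=(I_n\otimes\Gamma^*)\widetilde\Psi_X-\widetilde\Psi_Y\Gamma^*$, I would run the same argument with the roles of $X$ and $Y$ swapped: the coefficient vectors become $\widetilde\phi_j$ (from $(I_n\otimes\Gamma^*)\widetilde\Psi_X|_{V_j^X}$) and $\gamma_j\phi'_j$ (from $\widetilde\Psi_Y\Gamma^*|_{V_j^X}$), giving $\norm|\Gamma\circ\Delta_{\psi^*}|=\max_j\norm|\gamma_j\phi'_j-\widetilde\phi_j|$. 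The main obstacle is the representation-theoretic computation identifying $\phi_j$ and $\phi'_j$ as the Clebsch--Gordan-like magnitudes of $\widetilde\Psi_X$ and $\widetilde\Psi_Y$ along the four copies of $V_{(n-j,j)}$; once that is in hand, the rest is a concise application of Schur's lemma and bookkeeping.
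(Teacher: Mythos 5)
Your proposal follows the same route as the paper. Your maps $\widetilde\Psi_X,\widetilde\Psi_Y$ are precisely the paper's isometries $V=\matRp{I_X}$ and $\hat V=\matRp{I_Y}$ (up to reordering the tensor factors), your identity $\Gamma\circ\Delta_\psi=\widetilde\Psi_X\Gamma-(I_n\otimes\Gamma)\widetilde\Psi_Y$ is \rf{eqn:VA} combined with \rf{eqn:GammaDeltaPsiLinear}, your Pieri-rule decomposition is \rf{eqn:TensorDecomp}, and the statement that $\widetilde\Psi_X$ and $\widetilde\Psi_Y$ are "described by the vectors $\phi_j$ and $\phi'_j$" is exactly \rf{lem:VDecomp}. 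So the structure and conclusion match.

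The one step you are glossing over is the \emph{sign} consistency of the transporters, which is where the argument could quietly fail if not checked. When you compute the coefficient vector for $\widetilde\Psi_X\Gamma$ you use transporters of the form $\Xi_j^{\ell,m}\Phi_j$, and for $(I_n\otimes\Gamma)\widetilde\Psi_Y$ you use transporters of the form $(\Phi_{j+m}\otimes I_n)\hat\Xi_j^{\ell,m}$. Schur's lemma only tells you these two transporters (between the same pair of irreps) agree up to an overall $\pm 1$; to subtract the two coefficient vectors entrywise and take the Euclidean norm of the difference, you need them to agree \emph{exactly}, entry by entry. The paper isolates this nontrivial check as \rf{lem:PhiCommute} and proves it by pairing test vectors and verifying the inner products have matching signs. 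Your phrase ``a concise application of Schur's lemma and bookkeeping'' is hiding this: Schur alone does not pin the sign, and if one of the four signs were $-1$ the formula would become $\norm|\gamma_j\phi_j \mp \widetilde\phi'_j|$ with some flipped entries. So the proposal is correct in spirit and matches the paper's approach, but you should state explicitly that the two families of transporters must be shown to agree with the same sign, and cite (or re-derive) the analogue of \rf{lem:PhiCommute}.
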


\begin{lem}
\label{lem:DeltaReflNorm}
For $\Gamma$ as in~\rf{eqn:Gamma} and $\Dpp$ as in~\rf{eqn:Deltapsipsi}, we have
\[
\norm|\Gamma\circ\Dpp| = \max_j \norm| \Gamma_j \fita_j \fita_j^* - \fita_j' {\fita_j'}^*\Gamma_j |.
\]
\end{lem}

The following lemma is proven in~\rf{sec:DeltaINormProof}.

\begin{lem}
\label{lem:DeltaINorm}
For $\Gamma$ as in~\rf{eqn:Gamma}, $\Delta_i$ as in~\rf{eqn:Deltai}, and for all values of $i\in[n]$, we have
\begin{align*}
\|\Gamma\circ\Delta_i\| =
\max_j \max \bigg\{\,
&\bigg|\frac{\sqrt{(k-j)(n- k'-j)}}{n-2j} \gamma_j - \frac{\sqrt{(k'-j)(n- k-j)}}{n-2j} \gamma_{j+1}\bigg|,\\
&\bigg|\frac{\sqrt{(k'-j)(n- k-j)}}{n-2j} \gamma_j - \frac{\sqrt{(k-j)(n- k'-j)}}{n-2j} \gamma_{j+1}\bigg|\,\bigg\}.
\end{align*}
\end{lem}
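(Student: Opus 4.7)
\medskip
\noindent\textbf{Proof plan for \rf{lem:DeltaINorm}.}

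The plan is to reduce $\Gamma\circ\Delta_i$ to a block-diagonal form indexed by the $S_{n-1}$-irreps, where each block is a simple scalar (or a $2\times 1$ vector whose Euclidean norm is the quantity inside the absolute value). The starting observation is that since both $\Gamma$ and the family $\{\Delta_i\}_{i\in[n]}$ are equivariant under $S_n$, the norm $\|\Gamma\circ\Delta_i\|$ does not depend on $i$, so I fix some $i\in[n]$ once and for all.

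First I would decompose by the value of $x_i$ and $y_i$. Write $X=X_0\sqcup X_1$ and $Y=Y_0\sqcup Y_1$ according to whether $i\notin x$ or $i\in x$ (resp.\ for $y$). Since $\Delta_i\elem[x,y]=1_{x_i\ne y_i}$ is the indicator of $(x,y)\in (X_0\times Y_1)\cup(X_1\times Y_0)$, and equals the all-ones matrix on each of those blocks, the Schur product satisfies
\[
\Gamma\circ\Delta_i \;=\; \Gamma\elem[X_0,Y_1]\,\oplus\,\Gamma\elem[X_1,Y_0],
\qquad
\|\Gamma\circ\Delta_i\|=\max\bigl\{\|\Gamma\elem[X_0,Y_1]\|,\;\|\Gamma\elem[X_1,Y_0]\|\bigr\}.
\]
Each of the two summands in the max will produce one of the two expressions in the statement, with the roles of $k$ and $k'$ swapped between them (reflecting the asymmetric way in which the removal of index $i$ interacts with the two weights).

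Next I would invoke the branching rule $S_{n-1}\subset S_n$ for the Johnson-type permutation modules. Restricting to the stabiliser of $i$, the $S_n$-irrep $(n-j,j)$ decomposes as $(n-1-j,j)\oplus(n-j,j-1)$ of $S_{n-1}$. Consequently the isotypic component of the $S_{n-1}$-irrep $(n-1-j,j)$ in $\bR^X$ is two-dimensional in terms of $S_n$-multiplicities: it picks up one copy from $\Phi_j$ and one from $\Phi_{j+1}$. The same holds for $\bR^Y$ with $k$ replaced by $k'$. The coordinate projectors $\pi_{X_0},\pi_{X_1}$ (and similarly for $Y$), when restricted to a single such $S_{n-1}$-isotypic, act as $2\times 2$ orthogonal change-of-basis matrices whose entries are the standard Johnson-scheme branching coefficients; in particular, the $\pi_{X_0}$-components of the normalised vectors coming from $\Phi_j$ and $\Phi_{j+1}$ are proportional to $\sqrt{(k-j)/(n-2j)}$ and $\sqrt{(n-k-j)/(n-2j)}$, respectively, and analogously on the $X_1$ side and on the $Y$ side with $k'$.

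Assembling these pieces, $\Gamma\elem[X_0,Y_1]$ becomes block-diagonal over the $S_{n-1}$-isotypic labels $j$, with each block of the form
\[
\tfrac{1}{n-2j}\Bigl(\sqrt{(k-j)(n-k'-j)}\,\gamma_j\;-\;\sqrt{(k'-j)(n-k-j)}\,\gamma_{j+1}\Bigr)
\]
times a rank-one outer product of unit vectors (so the operator norm equals the absolute value of this scalar). The block $\Gamma\elem[X_1,Y_0]$ produces the second expression by swapping which side supplies the $k$ vs.\ $k'$ branching coefficient. Taking maxima over $j$ and over the two block types gives the formula in the lemma.

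The main obstacle is the accurate derivation, in Step~3, of the branching coefficients for the specific copies of $(n-1-j,j)$ inside the $\Phi_j$- and $\Phi_{j+1}$-isotypics; the numerators $\sqrt{(k-j)(n-k'-j)}$ and $\sqrt{(k'-j)(n-k-j)}$ and the common denominator $n-2j$ must fall out of the product of the $X$-side and $Y$-side branching coefficients, after checking that the sign in the cross-term aligns so that the two contributions subtract (rather than add). This bookkeeping is exactly what the Johnson-scheme preliminaries of \rf{sec:JohnsonPrelim}--\rf{sec:technical} are set up to carry out, and I expect the computation to parallel the ones used for \rf{lem:CoeffsUnderA}--\rf{lem:DeltaReflNorm}.
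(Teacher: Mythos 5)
Your plan matches the paper's proof essentially step for step: decompose $X$ and $Y$ by membership of $i$, observe that $\Gamma\circ\Delta_i$ splits into the two off-diagonal blocks, restrict to a single $S_{n-1}$-irrep via Schur's lemma, and compute the scalar via the $2\times2$ branching change-of-basis (the paper's Table~\ref{tab:TwoDimBasis}, with $\{v,v_\sim\}\leftrightarrow\{w_\out,w_\inn\}$), obtaining the second expression from the first by the adjoint swap $k\leftrightarrow k'$. One small slip in your sketch: the $X_0$ (i.e.\ $w_\out$) component of the $\Phi_j$-vector $v$ is $\sqrt{(n-k-j)/(n-2j)}$ and of the $\Phi_{j+1}$-vector $v_\sim$ is $\sqrt{(k-j)/(n-2j)}$, i.e.\ the reverse of what you wrote; but since you explicitly flag this bookkeeping (and the sign alignment) as the step to verify carefully, and your stated final formula is correct, this does not affect the soundness of the plan.
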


We end this section with the following estimates on the entries of $\fita_j' - \fita_j$.

\begin{lem}
\label{lem:fita}
Assume that $n\ge 5k$ and $k\ge 5j$.
For $\fita_j = \fita_j^k$ as defined in~\rf{eqn:fita} and $\fita'_j = \fita_j^{k'}$ with $k' = (1+\eps)k \le 2k$, we have the following estimates on the entries of $\fita_j$, $\fita_j'$ and their differences:
\[
\begin{array}{|c|cccc|}
\hline
i & 1 & 2 & 3 & 4\\ \hline
\fita_{j,i},\;\fita'_{j,i} & 
\displaystyle O\sC[\sqrt{\frac jn}] &
\displaystyle O\sC[\sqrt{\frac kn} ] &
\displaystyle O\sC[\sqrt{\frac jk}] &
\displaystyle O(1)
\\
|\fita_{j,i} - \fita_{j,i}'| & 
\displaystyle O\sC[\eps \frac{j^{3/2} }{k\sqrt n} + \eps\frac{\sqrt jk}{n^{3/2}} ] &
\displaystyle O\sC[\eps\sqrt{\frac kn}] &
\displaystyle  O\sC[\eps\sqrt{\frac jk}] &
\displaystyle O\sC[ \eps \frac jk + \eps \frac kn ]\\
\hline
\end{array}
\]
In particular, where $\|\cdot\|_1$ stands for the $\ell_1$-norm:
\begin{equation}
\label{eqn:phiDifferenceEstimate}
\normA| \fita_j - \fita_j' | \le \normA| \fita_j - \fita_j'|_1 
= O\sC[\eps \sqrt{\frac kn} + \eps \sqrt{\frac jk} ].
\end{equation}
\end{lem}

\pfstart
The estimates on the values are straightforward.
The estimates on the differences are also easy to derive using the inequality
$
\abs|h(1+\eps) - h(1)| \le \int_{1}^{1+\eps} |h'(t)|\, \mathrm{d}t
$.

For $|\fita'_{j,1}-\fita_{j,1}|$, we have the upper bound of
\[
O\s[\frac{\sqrt j}{n}] 
\abs|\sqrt{\s[1-\frac {j+1}{k(1+\eps)}]\sA[n-j+1-k(1+\eps)]} - \sqrt{\s[1-\frac {j+1}k](n-j+1-k)} |,
\]
which is $O\s[\eps\sqrt{j/n}](j/k + k/n)$.
For $|\fita'_{j,2}-\fita_{j,2}|$, we have
\[
\abs| \sqrt{\frac{k(1+\eps)}n} - \sqrt{\frac kn} | = 
O\s[\eps\sqrt{\frac kn}].
\]
For $|\fita'_{j,3}-\fita_{j,3}|$, we have
\[
O\s[\frac{\sqrt j}n]
\abs| \frac{n-2k(1+\eps)}{\sqrt{k(1+\eps)}} - \frac{n-2k}{\sqrt{k}} | 
= O\sC[\eps \frac{\sqrt{jk}}n + \eps\sqrt{\frac jk}] = O\sC[\eps\sqrt{\frac jk}].
\]
For $|\fita'_{j,4}-\fita_{j,4}|$, we have
\[
O\s[\frac1{\sqrt{n}}]
\abs|\sqrt{\s[1-\frac j{k(1+\eps)}]\sA[n-k(1+\eps)-j]} - \sqrt{\s[1-\frac jk](n-k-j)} |
= O\sC[ \eps \frac jk + \eps \frac kn ].\qedhere
\]
Equation~\rf{eqn:phiDifferenceEstimate} follows from the last row of the table and by removing the subdominant terms.
\pfend

\subsection{Specific Form and Estimates}
\label{sec:construction}

It turns out, that for all the lower bounds of \rf{thm:main}, it suffices to have a very simple choice of the parameters $\gamma_j$ in~\rf{eqn:Gamma}.
Our choice follows a rather standard construction~\cite{belovs:kSumLower, belovs:variations} of
a gradient between a high value of the coefficient for $j=0$ and 0 for large $j$.
The adversary matrix is specified by a single parameter $1\le t\le k/5$, which dictates the steepness of the gradient.
By changing $t$, we will be able to ``slide'' along the trade-off curve.
Namely, we define
\begin{equation}
\label{eqn:gammaj}
\gamma_j = \max\sfig{1 - \frac jt, 0  }.
\end{equation}
We have $\|\Gamma\|=1$.  
Also, $\gamma_i=0$ for all $j\ge t$.

We will now use the general results from \rf{sec:adversaryGeneral} to get estimates in this specific case.


\begin{prp}
\label{prp:gammaNormPsi}
In order to have $\|\Gamma\circ \Psi^{\circ\ell}\| = \Omega(1)$, it suffices that the following conditions are met:
\[
t \ge 2\ell,
\qqand
\ell \le C \min\sfigC{\frac{\sqrt k}{\eps},\;  \frac{n}{k\eps^2}}
\]
for a sufficiently small constant $C$.
\end{prp}

\pfstart
Using \rf{lem:CoeffsUnderA} repeatedly, we see that 
$\Gamma\circ \Psi^{\circ s} = \sum_{j=0}^k\gamma_j^{(s)} \Phi_j$ for some coefficients $\gamma_j^{(s)}$.
For $s \in \{0,1,\dots,\ell\}$, let us denote
\[
D = \min\nolimits_{j=0}^\ell \ipO<\fita_j, \fita_j'>
\qqand
\gamma^{(s)} = \min\nolimits_{j=0} ^{\ell-s} \gamma^{(s)}_j.
\]
First, from our assumption $t\ge 2\ell$ and using~\rf{eqn:gammaj}, we get that $\gamma^{(0)} \ge 1/2$.
Next, from \rf{lem:CoeffsUnderA} applied to $\Gamma\circ\Psi^{\circ s}$, we get for $s<\ell$ and $j < \ell-s$:
\[
\gamma_j^{(s+1)}  = 
\gamma^{(s)}_{j-1}\fita_{j,1}\fita'_{j,1} + 
\gamma^{(s)}_{j}\fita_{j,2}\fita'_{j,2} +
\gamma^{(s)}_{j}\fita_{j,3}\fita'_{j,3} +
\gamma^{(s)}_{j+1}\fita_{j,4}\fita'_{j,4} 
\ge \gamma^{(s)} \ip<\fita_j, \fita_j'> \ge \gamma^{(s)} D.
\]
This implies by induction $\gamma^{(\ell)} \ge D^{\ell}/2$, and
\[
\|\Gamma\circ \Psi^{\circ \ell}\| \ge \gamma_0^{(\ell)} = \gamma^{(\ell)} \ge D^\ell/2.
\]

To prove the lemma, it suffices to lower bound $D$.
Let $\alpha_j$ be the angle between $\fita_j$ and $\fita_j'$.  
Since both vectors have unit norm and using~\rf{eqn:phiDifferenceEstimate}:
\[
\ipO<\fita_j, \fita_j'>^2 = 
\cos^2 \alpha_j = 
1 - \sin^2 \alpha_j  =  1 - O\s[\| \fita_j - \fita_j' \|^2 ]
= 1 - O\s[\eps^2 \frac jk + \eps^2\frac kn].
\]
Hence, by the definition of $D$:
\[
D^\ell 
\ge \sC[{1 - O\s[\frac \ell k\eps^2 + \frac kn\eps^2]}]^{\ell/2}
\ge 1 - O\s[\frac{\ell^2}{k} \eps^2 + \frac {\ell k}n\eps^2 ] = \Omega(1)
\]
by our assumption on $\ell$.
This ends the proof of \rf{prp:gammaNormPsi}.
\pfend

\begin{prp}
Both $\norm|\Gamma\circ\Dp|$ and $\norm|\Gamma\circ\Dps|$ are 
\( \displaystyle
O\sC[ \eps\sqrt{\frac kn} + \eps \sqrt{\frac tk} + \frac 1t ].
\)
\end{prp}

\pfstart
We prove the estimate on $\norm|\Gamma\circ\Dps|$, the second one being completely analogous.
We use \rf{lem:DeltaGenNorm}.
If $j>t+1$, we get $\Gamma_j \fita_j - \gamma_j \fita'_j = 0$ by the definition of $\gamma_j$ in~\rf{eqn:gammaj}.
On the other hand, for a fixed $j\le t+1$: 
\[
\norm|\Gamma_j \fita_j - \gamma_j \fita'_j | 
\le 
\absA|\gamma_j-\gamma_{j-1}|\fita_{j,1} +
\absA|\gamma_j-\gamma_{j+1}|\fita_{j,4} +
\gamma_j\|\fita_j' - \fita_j\|
= O\sC[ \frac1t + \eps \sqrt{\frac kn} + \eps \sqrt{\frac jk} ],
\]
where we used that $|\gamma_j-\gamma_{j-1}| \le 1/t$, all $\fita_{j,i}$ and $\gamma_j$ are at most 1, and~\rf{eqn:phiDifferenceEstimate}.
The lemma follows by our assumption $j\le t+1$.
\pfend

\begin{prp}
We have $\displaystyle \|\Gamma\circ\Dpp\| 
= O\sC[\frac 1t + \eps]\sC[\sqrt{\frac kn} + \sqrt{\frac tk}]$.
\end{prp}

\pfstart
We have
\begin{align*}
\Gamma_j{ \fita_j} \fita_j^*
&=
\begin{pmatrix}
\gamma_{j-1}\fita_{j,1}\fita_{j,1} & \gamma_{j-1}\fita_{j,1}\fita_{j,2} & \gamma_{j-1}\fita_{j,1}\fita_{j,3} & \gamma_{j-1}\fita_{j,1}\fita_{j,4} \\
\gamma_{j}\fita_{j,2}\fita_{j,1} & \gamma_{j}\fita_{j,2}\fita_{j,2} & \gamma_{j}\fita_{j,2}\fita_{j,3} & \gamma_{j}\fita_{j,2}\fita_{j,4} \\
\gamma_{j}\fita_{j,3}\fita_{j,1} & \gamma_{j}\fita_{j,3}\fita_{j,2} & \gamma_{j}\fita_{j,3}\fita_{j,3} & \gamma_{j}\fita_{j,3}\fita_{j,4} \\
\gamma_{j+1}\fita_{j,4}\fita_{j,1} & \gamma_{j+1}\fita_{j,4}\fita_{j,2} & \gamma_{j+1}\fita_{j,4}\fita_{j,3} & \gamma_{j+1}\fita_{j,4}\fita_{j,4} \\
\end{pmatrix}\\
\fita_j' {\fita_j'}^* \Gamma_j
&=
\begin{pmatrix}
\gamma_{j-1}\fita'_{j,1}\fita'_{j,1} & \gamma_{j}\fita'_{j,1}\fita'_{j,2} & \gamma_j\fita'_{j,1}\fita'_{j,3} & \gamma_{j+1}\fita'_{j,1}\fita'_{j,4} \\
\gamma_{j-1}\fita'_{j,2}\fita'_{j,1} & \gamma_{j}\fita'_{j,2}\fita'_{j,2} & \gamma_j\fita'_{j,2}\fita'_{j,3} & \gamma_{j+1}\fita'_{j,2}\fita'_{j,4} \\
\gamma_{j-1}\fita'_{j,3}\fita'_{j,1} & \gamma_{j}\fita'_{j,3}\fita'_{j,2} & \gamma_j\fita'_{j,3}\fita'_{j,3} & \gamma_{j+1}\fita'_{j,3}\fita'_{j,4} \\
\gamma_{j-1}\fita'_{j,4}\fita'_{j,1} & \gamma_{j}\fita'_{j,4}\fita'_{j,2} & \gamma_j\fita'_{j,4}\fita'_{j,3} & \gamma_{j+1}\fita'_{j,4}\fita'_{j,4} \\
\end{pmatrix}
\end{align*}
We estimate the difference that comes from changing the index of $\gamma$ in the first matrix to align it with the second matrix.
After that, we bound the difference using the inequality $\|A\|\le \|A\|_1$, where $\|A\|_1$ is the sum of the absolute values of all the entries of $A$.
Using that all the involved number are non-negative, all $\gamma_j\le 1$, and the estimates from \rf{lem:fita}, we obtain:
\begin{align*}
\|\Gamma_j{ \fita_j} \fita_j^* - \fita_j' {\fita_j'}^* \Gamma_j\|
&\le
\frac4t \sB[(\fita_{j,1}+\fita_{j,4})(\fita_{j,2}+\fita_{j,3})+\fita_{j,1}\fita_{j,4}] 
+ \norm|\fita_j' \fita_j'^* - \fita_j \fita_j^* |_1\\
& = O\sC[\frac 1t]\sC[\sqrt{\frac kn} + \sqrt{\frac jk}] + \norm|\fita_j' \fita_j'^* - \fita_j \fita_j^* |_1.
\end{align*}
For the latter, using \rf{lem:fita} again:	
\begin{align*}
\norm|\fita_j' \fita_j'^* - \fita_j \fita_j^* |_1
&\le 
\norm|\fita_j' \fita_j'^* - \fita_j \fita_j'^* |_1
+
\norm|\fita_j \fita_j'^* - \fita_j \fita_j^* |_1
\\
&\le
\s[\norm|\fita'_j|_1 + \norm|\fita_j|_1]  \norm|\fita'_j - \fita_j |_1 
= O\sC[\eps \sqrt{\frac kn} + \eps \sqrt{\frac jk} ].
\end{align*}
Combining the last two equations:
\[
\|\Gamma_j{ \fita_j} \fita_j^* - \fita_j' {\fita_j'}^* \Gamma_j\| = 
O\sC[\frac 1t + \eps]\sC[\sqrt{\frac kn} + \sqrt{\frac jk}].
\]
By \rf{lem:DeltaReflNorm}, and using that $ \Gamma_j{ \fita_j} \fita_j^* - \fita_j' {\fita_j'}^* \Gamma_j = 0 $ for $j>t+1$ due to~\rf{eqn:gammaj}, we get the required bound.
\pfend

\begin{prp}
We have $\displaystyle \|\Gamma\circ \Delta_i \| = O\sC[\frac 1t + \eps]\sqrt{\frac kn}$.
\end{prp}

\pfstart
We use \rf{lem:DeltaINorm}.  
We estimate the first difference 
\[
\bigg|\frac{\sqrt{(k-j)(n- k'-j)}}{n-2j} \gamma_j - \frac{\sqrt{(k'-j)(n- k-j)}}{n-2j} \gamma_{j+1}\bigg|
\]
in the formulation of the lemma, the second one being similar.
It is bounded by
\begin{align*}
&O\sC[\sqrt{\frac kn}] \absA|\gamma_{j+1} - \gamma_j|
+ O\sC[\frac1{\sqrt n}] \absB|\sqrt{k-j} - \sqrt{k'-j}|
+ O\sC[\frac{\sqrt k}{n}] \absB|\sqrt{n-k'-j}-\sqrt{n-k-j}| \\
&\qquad = O\sC[\sqrt{\frac kn}]\frac 1t  +
O\sC[\frac1{\sqrt n}] \eps\sqrt k + 
O\sC[\frac{\sqrt k}{n}] \frac{\eps k}{\sqrt n}.\qedhere
\end{align*}
\pfend

\subsection{Proof of \rf{thm:main}}
\label{sec:proof}
Let us gather up all the inequalities from \rf{sec:construction}.
We assume that
\begin{equation}
\label{eqn:ell}
2\ell\le t \le k/5,\qquad \ell \le C \min\sfigC{\frac{\sqrt k}{\eps},\;  \frac{n}{k\eps^2}},
\end{equation}
and we have that
\begin{align}
\norm|\Gamma\circ\Dp|,\, \norm|\Gamma\circ\Dps|
&=O\sC[ \eps\sqrt{\frac kn} + \eps \sqrt{\frac tk} + \frac 1t ],
\label{eqn:nu}\\
\|\Gamma\circ\Dpp\| 
&= O\sC[\frac 1t + \eps]\sC[\sqrt{\frac kn} + \sqrt{\frac tk}], 
\label{eqn:nunu*}\\
\|\Gamma\circ \Delta_i \| &= O\sC[\frac 1t + \eps]\sqrt{\frac kn}.
\label{eqn:i}
\end{align}


The proof of the Theorem follows from the two lemmata below.
In both of them, we assume notation of \rf{thm:main}.

\begin{lem}
\label{lem:case1}
Assume the algorithm has $\ell$ copies of the state $\psi_x$ where $\ell$ satisfies the conditions of~\rf{eqn:ell}, the first inequality reading as $2\ell\le k/5$.
Assume also that the algorithm uses the state-generating oracle
\begin{equation}
\label{eqn:ell'}
\QG \le C \min\sfigC{\frac1\eps \sqrt{\frac nk},\;\; \frac1\eps \sqrt{\frac k\ell}, \;\; \frac{k^{1/3}}{\eps^{2/3}} }
\end{equation}
times, where $C$ is a sufficiently small constant.
Then, in order to solve the problem, the algorithm has to
\begin{itemize}
\item either
execute the reflecting oracle
\( \displaystyle
\Omega\sD[\min\sfigC{\frac1\eps \sqrt{\frac nk},\;\; \frac1\eps \sqrt{\frac k{\ell+\QG}},\;\; \sqrt{\frac k\eps} }]
\)
times;

\item or
execute the membership oracle 
$\displaystyle \Omega \s[\frac1\eps\sqrt{\frac nk}]$ 
times;
\end{itemize}
\end{lem}

\begin{lem}
\label{lem:case2}
Assume the algorithm does not have any copies of the state and does not execute the state-generating oracle.  
Then, in order to solve the problem, it has to
\begin{itemize}
\item execute the reflecting or the membership oracle 
\( \displaystyle
\Omega\sC[\sqrt{\frac nk}]
\)
times.
\end{itemize}
\end{lem}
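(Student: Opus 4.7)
The plan is to invoke \rf{prp:formulation} with an adversary matrix $\Gamma$ supported entirely on the trivial irreducible, namely $\gamma_0 = 1$ and $\gamma_j = 0$ for $j \ge 1$ in~\rf{eqn:Gamma} (equivalently, $t = 1$ in~\rf{eqn:gammaj}). With $\ell = 0$, the entrywise power $\Psi^{\circ 0}$ is the all-ones matrix, so $\|\Gamma \circ \Psi^{\circ \ell}\| = \|\Gamma\| = \max_j |\gamma_j| = 1$, satisfying the first two items of \rf{prp:formulation}.

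The membership-oracle estimate is immediate from \rf{lem:DeltaINorm}: the $j$-th term involves only $\gamma_j$ and $\gamma_{j+1}$, so only $j = 0$ contributes, and both $\sqrt{k(n-k')}/n$ and $\sqrt{k'(n-k)}/n$ are $O(\sqrt{k/n})$. Hence $\|\Gamma\circ\Delta_i\| = O(\sqrt{k/n})$, giving $T_1 = \Omega(\sqrt{n/k})$.

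The main calculation is $\|\Gamma\circ\Delta_{\psi\psi^*}\| = O(\sqrt{k/n})$, obtained from \rf{lem:DeltaReflNorm}. For $j \ge 2$ all three of $\gamma_{j-1},\gamma_j,\gamma_{j+1}$ vanish, so $\widetilde{\phi_j} = \widetilde{\phi'_j} = 0$ and the $j$-th block is identically zero. For $j = 0$ and $j = 1$ the vector $\widetilde{\phi_j}$ has only a single nonzero entry (namely $\fita_{0,1} = \sqrt{k/n}$ for $j=0$, using $\fita_{0,0}=\fita_{0,2}=0$, and $\fita_{1,0} = O(1/\sqrt{n})$ for $j=1$, by \rf{lem:fita}; since $k \ge 1$, both are $O(\sqrt{k/n})$). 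Therefore $\widetilde{\phi_j}\phi_j^*$ is a rank-one matrix whose operator norm equals $\|\widetilde{\phi_j}\|\|\phi_j\| = O(\sqrt{k/n})$, using \rf{clm:phi_j is unit}; the same bound holds for $\phi'_j\widetilde{\phi'_j}^*$, and the triangle inequality finishes the estimate.

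Assembling these bounds into \rf{prp:formulation}, the algorithm must either execute the membership oracle $\Omega(\sqrt{n/k})$ times, the state-generating oracle $\Omega(1)$ times, or the reflecting oracle $\Omega(\sqrt{n/k})$ times; the middle option is excluded by the hypothesis of the lemma, yielding the claim. The one subtlety is that the sum-of-entries upper bound for $\|\Gamma\circ\Delta_{\psi\psi^*}\|$ used in \rf{sec:construction} is too lossy at $t=1$ (the $\sqrt{t/k}$ term degenerates to $1/\sqrt{k}$, which exceeds $\sqrt{k/n}$ whenever $n > k^2$), so one must invoke \rf{lem:DeltaReflNorm} directly and exploit the rank-one structure that emerges once $\Gamma$'s weight is concentrated at a single $j$.
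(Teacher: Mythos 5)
Your proof is correct and takes essentially the same approach as the paper: set $t=1$, verify the first two conditions of \rf{prp:formulation} trivially with $\ell=0$, and give a tailored estimate for $\|\Gamma\circ\Delta_{\psi\psi^*}\|$ via \rf{lem:DeltaReflNorm} because, as you rightly note, the generic bound~\rf{eqn:nunu*} is too lossy at $t=1$. The paper establishes the $O(\sqrt{k/n})$ bound by writing out the two nonzero $4\times 4$ matrices and inspecting their entries; your observation that $\widetilde{\phi_j}$ and $\widetilde{\phi'_j}$ each have a single nonzero coordinate of magnitude $O(\sqrt{k/n})$ --- so each rank-one term $\widetilde{\phi_j}\phi_j^*$ and $\phi'_j\widetilde{\phi'_j}^*$ has operator norm $O(\sqrt{k/n})$ by \rf{clm:phi_j is unit} --- is a modestly cleaner route to the same conclusion.
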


\begin{proof}[Proof of \rf{thm:main} assuming Lemmata~\ref{lem:case1} and~\ref{lem:case2}]
\rf{thm:main} states that, in order to solve the problem, the algorithm must satisfy one of the conditions in Rows 1 through 8 of \rf{tbl:main}.
Let us prove that if Rows 1 through 7 do not hold, then Row 8 holds.

From the falsity of Row 1 and the falsity of Rows 3 and 4, we get that, respectively, the conditions of~\rf{eqn:ell} and~\rf{eqn:ell'} are satisfied.
These being the two conditions of \rf{lem:case1}, one of the two bullets of \rf{lem:case1} must hold. 
However, the falsity of Row 2 rules out the second bullet, so the first bullet must hold.

Further on, from the falsity of Rows 5 and 6, we have that
\[q_R = o\s[\min\sfig{
   \frac1\eps \sqrt{\frac nk},\;
   \frac 1\eps \sqrt{\frac{k}{\ell+q_G}},\;
   \sqrt{\frac k\eps} + \sqrt{\frac nk}
}].\]
Contrasting this with the truth of the first bullet of \rf{lem:case1}, we see that both $q_R = \Omega(\sqrt{k/\eps})$ and $q_R = o(\sqrt{n/k})$.
The former, together with the falsity of Row 7, implies that $\ell = q_G = 0$, therefore the conditions of \rf{lem:case2} are satisfied. 
By this lemma, since $q_R = o(\sqrt{n/k})$, it must be that $q_M = \Omega(\sqrt{n/k})$, and thus Row 8 holds.
\end{proof}

%
%

\begin{proof}[Proof of \rf{lem:case1}]
In the proof, we will need a bit more careful tracking of constants than usually.
We will use two constants: $C$ as in the statement of the lemma, and one additional constant $C'$.  The $O$s and $\Omega$s in the proof of the lemma do not depend on $C$ and $C'$, whereas the $\Omega$s in the formulation of the lemma do depend on them.

We will take
\[
t = \max\sfig{ 2\ell,\;\; C'\QG,\;\; \frac1{5\eps} },
\]
where $C'$ is a sufficiently large constant.
First, we have to check that $t$ satisfies~\rf{eqn:ell}.
Indeed, $t\ge 2\ell$.  Also, the three possible values of $t$ are upper-bounded as
\[
2\ell\le k/5,
\qquad
C'\QG \le C'C \frac{k^{1/3}}{\eps^{2/3}} \le \frac k5,
\qqand
\frac1{5\eps}\le \frac{k}{5},
\]
if $C$ is small enough and because $\eps\ge 1/k$.

Now we can use~\rf{eqn:nu}--\rf{eqn:i}.
Since $t = \Omega(1/\eps)$, we have that $O\sA[\eps+1/t] = O(\eps)$.
\rf{prp:formulation} gives us that at least one of the following three cases holds:
\begin{itemize}
\item The state-generating oracle is used
\begin{equation}
\label{eqn:ellEstimate}
\Omega\sD[\min\sfigC{ 
    t,\;\; \frac1\eps \sqrt{\frac nk},\;\; \frac1\eps \sqrt{\frac kt}
}]
=
\Omega\sD[\min\sfigC{ 
    t,\;\; \frac1\eps \sqrt{\frac nk},\;\; \\
    \frac1\eps \sqrt{\frac k\ell},\;\; \frac1\eps \sqrt{\frac k{C'\QG}},\;\; \sqrt{\frac k\eps}
}]
\end{equation}
times; or
\item the reflecting oracle is used
\[
\Omega\sD[\min\sfigC{ 
    \frac1\eps \sqrt{\frac nk},\;\; \frac1\eps \sqrt{\frac kt}
}]
=
\Omega\sD[\min\sfigC{ 
    \frac1\eps \sqrt{\frac nk},\;\; \frac1\eps \sqrt{\frac k{\ell+C'\QG}},\;\; \sqrt{\frac k\eps}
}]
\]
times; or
\item the membership oracle is used
\(
\displaystyle 
\Omega\sC[\frac1\eps \sqrt{\frac nk}]
\)
times.
\end{itemize}

Now, in order to prove the lemma, it suffices to show that the first case cannot hold, that is, $\QG$ is smaller than~\rf{eqn:ellEstimate}.
For that, we have to compare $\QG$ to the five elements in the minimum on the right-hand side of~\rf{eqn:ellEstimate}.

First, $t\ge C'\QG$.  This gives contradiction to $\QG = \Omega(t)$ if $C'$ is large enough.
Second and third, we have $\QG \le C \frac1\eps \sqrt{\frac n k}$ and $\QG \le C \frac 1\eps \sqrt{\frac k\ell}$, and we can take $C$ small enough.
Forth, we have
\[
\QG \le C\frac{k^{1/3}}{\eps^{2/3}}
\quad\Longrightarrow\quad
\QG^{ 3/2} \le C^{3/2} \frac{\sqrt k}{\eps}
\quad\Longrightarrow\quad
\QG \le C^{3/2} \frac1\eps \sqrt{\frac k{\QG}}
\]
and we can take $C$ small enough.
Fifth and finally, we have
\[
\QG \le C\frac{k^{1/3}}{\eps^{2/3}}
\le C \sqrt{\frac k\eps}
\]
because $\eps \ge 1/k$.
Again, we can take $C$ small enough.

The order of choosing the values of the constants above is as follows.
First, we choose the value of $C'$ large enough to get contradiction to $\QG = \Omega(t)$.
Then, based on the value of $C'$, we choose the value of $C$ small enough.
This ends the proof of \rf{lem:case1}
\end{proof}

\begin{proof}[Proof of \rf{lem:case2}]
In this case we have $\ell = \QG = 0$.
We take $t=1$.  Thus, $\gamma_0=1$ and $\gamma_j = 0$ for all $j\ge 1$.
Clearly~\rf{eqn:ell} is satisfied. 

By~\rf{eqn:i}, we have that%
\footnote{
This estimate can be also obtained directly, since in this case $\Gamma$ is the normalised all-1 matrix.
}
\[
\frac 1{\norm|\Gamma\circ\Delta_i|} = \Omega \s[\sqrt{\frac nk}].
\]
We will use a different analysis of $\|\Gamma\circ\Dpp \|$, tailored for this special case. Note that 
\(
\fita_{0,1}=\fita'_{0,1} = \fita_{0,3} = \fita'_{0,3} = 0,
\)
hence, we get that the matrix 
$\Gamma_0{ \fita_0} \fita_0^* - \fita_0' \fita_0' \Gamma_0$
from~\rf{lem:DeltaReflNorm} equals
\[
\begin{pmatrix}
0 & 0 & 0 & 0 \\
0 & \fita^{2}_{0,2} -\fita_{0,2}^{\prime\; 2} & 0 & \fita_{0,2}\fita_{0,4} \\
0 & 0 & 0 & 0 \\
0 & -\fita'_{0,2}\fita'_{0,4} & 0 & 0 \\
\end{pmatrix},
\]
and
\[
\Gamma_1{ \fita_1} \fita_1^* - \fita_1' \fita_1'^*\Gamma_1= 
\begin{pmatrix}
\fita^{2}_{1,1} - \fita^{\prime\; 2}_{1,1} &  \fita_{1,1}\fita_{1,2} &  \fita_{1,1}\fita_{1,3} &  \fita_{1,1}\fita_{1,4} \\
-\fita'_{1,1}\fita'_{1,2} & 0 & 0 & 0 \\
-\fita'_{1,1}\fita'_{1,3} & 0 & 0 & 0 \\
-\fita'_{1,1}\fita'_{1,4} & 0 & 0 & 0 \\
\end{pmatrix}
\]
and all the remaining matrices are zeroes.
Now we can use \rf{lem:DeltaReflNorm}.  From \rf{lem:fita}, it is straightforward to see that
\[
\frac 1{\norm|\Gamma\circ\Dpp|} =
\Omega \s[\sqrt{\frac nk}].
\]
Application of \rf{prp:formulation} finishes the proof of \rf{lem:case2}.
\end{proof}

\newcommand{\topp}{*}

\renewcommand{\inn}{1}
\renewcommand{\out}{0}

\renewcommand{\inn}{{\in}}
\renewcommand{\out}{{\notin}}

\renewcommand{\inn}{i}
\renewcommand{\out}{o}

\renewcommand{\inn}{\mathrm{i}}
\renewcommand{\out}{\mathrm{o}}

\newcommand{\beEs}{\begin{equation*}}
\newcommand{\enEs}{\end{equation*}}
\def\beA#1\enAs{\begin{align*}#1\end{align*}}
\def\beM#1\enMs{\begin{multline*}#1\end{multline*}}

\section{Preliminaries on Representation Theory}
\label{sec:JohnsonPrelim}

To conclude the proof of our lower bounds, we are left with two tasks. The first one is to define the morphisms $\Phi_j$ used in the construction of the adversary matrix $\Gamma$. 
The second one is to prove Lemmata~\ref{lem:CoeffsUnderA}--\ref{lem:DeltaINorm}. 
In this section, we introduce the basics of representation theory of the symmetric group, and use them to define morphisms $\Phi_j$. 
We leave the proofs of Lemmata~\ref{lem:CoeffsUnderA}--\ref{lem:DeltaINorm} to the final three sections.

\subsection{Representation theory}
\label{sec:representation}

In this section, we introduce basic notions from representation theory of finite groups.  For more background, the reader may refer to~\cite{curtis:representationTheory, serre:representation}.

Let $G$ be a finite group.
The \emph{group algebra} $\bC G$ consists of formal linear combinations of the form $\sum_{g\in G}\alpha_g g$ with $\alpha_g\in\bC$.
It is clearly a linear space, but it is also an algebra because we can extend the multiplication law of $G$ to all $\bC G$ by linearity.
%
A \emph{left module} over $\bC G$ is a vector space $\cM$ with a left multiplication operation by the elements of $\bC G$ satisfying the usual associativity and distributivity axioms and such that $ev=v$ for $e$ the identity in $G$ and every $v\in\cM$.
Such modules are also known as representations of $G$.
We will use the term \emph{$G$-module}, which is a standard~\cite{sagan:symmetricGroup, james:symmetricGroup} misnomer, since what is actually meant is a $\bC G$-module. 
We can treat elements of $\bC G$ as linear operators acting on $\cM$.
Due to linearity, it suffices to specify these operators for $g\in G$. 

We assume the module $\cM$ is equipped with a \emph{$G$-invariant} inner product, that is, $\ip<gv,gu> = \ip<u,v>$ for all $u,v\in \cM$ and $g\in G$.
Thus, the linear operators corresponding to $g\in G$ are unitary.
If $\cV$ and $\cW$ are two $G$-modules, then the direct sum $\cV\oplus \cW$ and the tensor product $\cV\otimes \cW$ are also $G$-modules defined by $g(v,w) = (gv, gw)$ and $g(v\otimes w) = (gv)\otimes (gw)$ for all $v\in\cV$, $w\in\cW$, and $g\in G$.

\mycutecommand{\Hom}{\mathop{\mathrm{Hom}}\nolimits}
\mycutecommand{\Iso}{\mathop{\mathrm{Iso}}\nolimits}
\mycutecommand{\Slice}{\mathop{\mathrm{Slice}}\nolimits}

A \emph{$G$-morphism} (or just morphism, if $G$ is clear from the context) between two $G$-modules $\cV$ and $\cW$ is a linear operator $\theta\colon \cV\to \cW$ that commutes with all $\alpha\in\bC G$: $\theta\alpha = \alpha\theta$. 
By linearity, it suffices to check commutativity with all $\alpha\in G$. 
Let us denote by $\Hom_G(\cV, \cW)$ the linear space of $G$-morphisms from $\cV$ to $\cW$.

An important special case of a $G$-module is $\bC^X$, where $X$ is a finite set with a \emph{group action of $G$} on it.
A group action is a map $(g,x)\mapsto g(x)$ from $G\times X$ onto $X$ satisfying $g(h(x))=(gh)(x)$ for all $g,h\in G$ and $x\in X$.
By linearity, this gives a $G$-module.
The linear operators on $\bC^X$ corresponding to $g\in G$ are given by permutation matrices. 
Hence, the standard inner product in $\bC^X$ is $G$-invariant.
We have the following easy characterisation of morphisms in this case:

\begin{prp}
\label{prp:groupAction}
Assume $X$ and $Y$ are two sets with group action of $G$ defined on them.
A linear operator $A\colon \bC^Y\to \bC^X$ is a $G$-morphism if and only if $A\elem[x,y] = A\elem[g(x), g(y)]$ for all $x\in X$, $y\in Y$, and $g\in G$.
\end{prp}

A $G$-module is called \emph{irreducible} (or irrep) 
if it does not contain a non-trivial $G$-submodule.
Schur's Lemma is an essential result in representation theory, stated as follows.
\begin{lem}[Schur's Lemma]
\label{lem:schur}
Assume $\theta\colon \cV\to \cW$ is a morphism between two irreducible $G$-modules $\cV$ and $\cW$. 
If $\cV$ and $\cW$ are non-isomorphic, then  $\theta=0$.  
Otherwise, $\theta$ is uniquely determined up to a scalar multiplier.
\end{lem}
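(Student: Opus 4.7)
The plan is to run the classical kernel--image argument. Given any $G$-morphism $\theta\colon\cV\to\cW$, the subspace $\ker\theta\subseteq\cV$ is a $G$-submodule, since $v\in\ker\theta$ implies $\theta(gv)=g\theta(v)=0$ for every $g\in G$, and similarly $\mathrm{im}\,\theta\subseteq\cW$ is a $G$-submodule. By irreducibility of $\cV$, either $\ker\theta=0$ or $\ker\theta=\cV$; by irreducibility of $\cW$, either $\mathrm{im}\,\theta=0$ or $\mathrm{im}\,\theta=\cW$. Hence any morphism between irreducibles is either zero or an isomorphism, which immediately yields the first half of the lemma: non-isomorphic $\cV$ and $\cW$ admit only the zero morphism.

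For the second half, assume $\cV\cong\cW$ and fix one isomorphism $\theta_0\colon\cV\to\cW$. Any other morphism $\theta\colon\cV\to\cW$ corresponds to the endomorphism $\theta_0^{-1}\theta\colon\cV\to\cV$, so it suffices to show that every $G$-endomorphism of an irreducible $\cV$ is a scalar multiple of the identity. I would pass to the complexification $\cV_{\bC}:=\cV\otimes_{\bR}\bC$, which carries an induced $G$-action, and pick any eigenvalue $\lambda\in\bC$ of the $\bC$-linear extension of $\theta_0^{-1}\theta$ (guaranteed to exist since $\bC$ is algebraically closed). Then $\theta_0^{-1}\theta-\lambda I$ is a $G$-endomorphism of $\cV_{\bC}$ with nonzero kernel, and applying the first-half dichotomy inside $\cV_{\bC}$ forces it to vanish, yielding $\theta_0^{-1}\theta=\lambda I$.

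The main obstacle is ensuring $\lambda\in\bR$, since a priori the endomorphism ring of a real irrep is only guaranteed to be a division algebra over $\bR$ (so $\bR$, $\bC$, or $\bH$, by the Frobenius--Schur classification). The extra input needed---and the reason the lemma is stated without qualification here---is that we are in the \emph{absolutely irreducible} setting: complexifying an irreducible real module yields an irreducible complex module. As noted in the preamble to \rf{sec:representation}, this holds for all irreducible representations of the symmetric group (they are of real type), which simultaneously legitimises the application of the first-half argument inside $\cV_{\bC}$ and forces $\mathrm{End}_G(\cV)=\bR$. Consequently $\lambda$ is real, and the space of $G$-morphisms $\cV\to\cW$ is exactly one-dimensional, as claimed.
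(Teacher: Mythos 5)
The paper states Schur's Lemma without proof, referring the reader to standard textbooks (\cite{curtis:representationTheory, serre:representation}) in the section's preamble, so there is no internal proof to compare against. Your kernel--image argument for the first half is the standard one and is correct. For the second half, your diagnosis of the subtlety is exactly right: since the paper works over $\bR$, the scalar-multiple conclusion is not automatic---a real irrep can have endomorphism ring $\bC$ or $\bH$, and in those cases the uniqueness claim fails. What rescues the lemma in the paper's context is absolute irreducibility of the $S_n$-irreps (real type), which the paper's preamble invokes when asserting that $S_n$ representation theory is the same over all characteristic-zero fields. One small sharpening to the presentation: the actual obstacle is whether $\cV_{\bC}$ stays irreducible so that the first-half dichotomy can be applied there, not whether $\lambda$ is real; once the complex argument yields $\theta_0^{-1}\theta=\lambda I$ on $\cV_{\bC}$, realness of $\lambda$ is automatic because $\theta_0^{-1}\theta$ is a real operator, so any diagonal entry of its (real) matrix equals $\lambda$.
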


In other words, the second part of the above lemma states that $\Hom_G(\cV, \cV)\cong \bC$ for irreducible $\cV$.
Let us note that while the first part holds for any base field, for the second part it is essential that the base field $\bC$ is algebraically closed.

Schur's Lemma has a large number of consequences.
First, on an irrep $\cV$, there is only one, up to a scalar multiplier, $G$-invariant inner product.
Hence, any isomorphism between two irreducible $G$-modules is an isometry times a scalar.
Second, if $\cM$ is a $G$-module, and $\cV$ and $\cW$ are two its non-isomorphic irreducible submodules, then $\cV$ and $\cW$ are orthogonal as subspaces.

Let $\cV$ be an irrep.
A submodule of $\cM$ isomorphic to $\cV$ is called a \emph{copy} of $\cV$ in $\cM$.
The dimension of $\Hom_G(\cV,\cM)$ is known as the \emph{multiplicity} of $\cV$ in $\cM$.
It is equal to the number of pairwise orthogonal copies of $\cV$ that can be embedded in $\cM$.

Next, let $\Iso_G(\cV, \cM)$ denote the span of all the copies of $\cV$ in $\cM$, which is known as the \emph{isotypical subspace}.
The mapping $\theta\otimes v \mapsto \theta(v)$ defines an isomorphism between $\Hom_G(\cV, \cM)\otimes\cV$ and $\Iso_G(\cV, \cM)$ as linear spaces.
The whole module $\cM$ can be expressed as $\cM = \bigoplus_{\cV}\Iso_G(\cV, \cM)$ as $\cV$ ranges over the irreps of $G$.
This is called the isotypical decomposition of $\cM$.
The terms in the isotypical decomposition are pairwise orthogonal.

We use the following palpable way to represent $\Hom_G(\cV,\cM)$.
Fix some reference non-zero vector $v_\cV\in\cV$, and define the \emph{slice} of $v_\cV$ in a module $\cM$, denoted $\Slice_G(v_\cV, \cM)$, as the set of all $\theta(v_\cV)$ as $\theta$ ranges over $\Hom_G(\cV,\cM)$.
The space $\Hom_G(\cV,\cM)$ is isomorphic to $\Slice_G(v_\cV, \cM)$, the isomorphism being $\theta\mapsto \theta(v_\cV)$.
Combining the two above isomorphisms, we get that $\Iso_G(\cV, \cM)$ is isomorphic to $\Slice_G(v_\cV,\cM)\otimes \cV$.
This allows us to write down the following isotypical decomposition  of a morphism $\theta\colon\cM\to\cN$ between two $G$-modules $\cM$ and $\cN$:
\begin{equation}
\label{eqn:isotypical}
\theta = \bigoplus_\cV \theta_\cV\otimes I_\cV.
\end{equation}
Here $\cV$ ranges over all irreps of $G$,
a linear operator $\theta_\cV\colon \Slice_G(v_\cV, \cM)\to\Slice_G(v_\cV, \cN)$ is the restriction of $\theta$ to $\Slice_G(v_\cV, \cM)$ for a fixed non-zero $v_\cV\in \cV$, and $I_\cV$ is the identity on $\cV$.
In particular,~\rf{eqn:isotypical} implies 
\begin{equation}
\label{eqn:isotypicalNorm}
\|\theta\| = \max_\cV \|\theta_\cV\|.
\end{equation}

We will use group algebra extensively in our proofs.
Its elements are important as they give linear operators that are preserved under isomorphisms.
In particular, if $\alpha\in \bC G$ is a group element, and $\alpha_\cM$ is the corresponding linear operator in a $G$-module $\cM$, we have the following decomposition complementing~\rf{eqn:isotypical}:
\begin{equation}
\label{eqn:GroupElement}
\alpha_\cM = \bigoplus_\cV I_{\Hom_G(\cV,\cM)}\otimes \alpha_\cV,
\end{equation}
where again $\cV$ ranges over all irreps of $G$.
An important use-case is provided by an $\alpha$ that is zero on all irreps but one (call it $\cV$) and is a 1-dimensional projector on $\cV$.
In this case, by~\rf{eqn:GroupElement}, $\alpha$ projects onto the slice of $v$ in every $G$-module $\cM$, where $v$ is a non-zero vector in $\alpha(\cV)$.

\mycutecommand{\grS}{\EuScript{S}}

\subsection{Symmetric group}
In this section, we consider a special case of the symmetric group.
If $A$ is a finite set, $S_A$ denotes the \emph{symmetric group} on $A$, that is, the group with the permutations of $A$ as elements, and composition as the group operation.
We will write $S_n$ instead of $S_{[n]}$.
If $A\subseteq B$, then we consider $S_A$ as a subgroup of $S_B$, where each permutation $\pi\in S_A$ is extended to $B\setminus A$ by identity.
We also use the following elements of the group algebra $\bC S_A$:
\[
\grS_A^+ = \sum_{\pi\in S_A} \pi
\qqand
\grS_A^- = \sum_{\rho\in S_A} (\sgn\rho)\rho,
\]
where $\sgn\rho$ stands for the sign of $\rho$.
Both of them are scalar multiples of orthogonal projectors.

Representation theory of $S_n$ is closely related to \emph{partitions of integers}.
A \emph{partition} $\lambda$ of an integer $n$ is a non-increasing sequence $(\lambda_1,\dots,\lambda_\ell)$ of positive integers satisfying $\lambda_1+\dots+\lambda_\ell = n$.  
For each partition $\lambda$ of $n$, one assigns an irreducible $S_n$-module $\Spe{\lambda}$, called the \emph{Specht module}.  All these modules are pairwise non-isomorphic, and give a complete list of all the irreps of $S_n$.

\mycutecommand{\grC}{\EuScript{C}}
\mycutecommand{\grR}{\EuScript{R}}

A partition $\lambda = (\lambda_1,\dots,\lambda_\ell)$ of $n$ is often represented in the form of a \emph{Young diagram} that consists, from top to bottom, of rows of $\lambda_1,\lambda_2,\dots,\lambda_\ell$ boxes aligned by the left side.
We often identify $\lambda$ with the corresponding diagram.
A \emph{Young tableau} of shape $\lambda$ is a Young diagram of $\lambda$ with each box containing an element from $[n]$, each element used exactly once.
As an example, the following is a Young tableau of shape $(4,3,1)$:
\[
t = \begin{ytableau}
4&6&5&1 \\ 2&3&8\\7
\end{ytableau}
\]

For a Young tableau $t$, define its \emph{row permutations} $R_t$ and its \emph{column permutations} $C_t$ as the permutations in $S_n$ that permute the elements within each row or column of $t$, respectively.
In our example above, $R_t = S_{\{1,5,6,4\}}\times S_{\{2,3,8\}}$ and $C_t = S_{\{2,4,7\}}\times S_{\{3,6\}}\times S_{\{5,8\}}$.
This gives rise to the following two elements of the group algebra $\bC S_n$:
\[
\grR^+_{t} = \sum_{\pi\in R_{t}}\pi 
= \prod_{\text{$R$ is a row of $t$}} \grS^+_R ,
\qqand
\grC^-_{t} = \sum_{\rho\in C_{t}}(\sgn\rho)\rho
= \prod_{\text{$C$ is a column of $t$}} \grS^-_C.
\]

The following is one of the key results in representation theory of the symmetric group.
In fact, this is the only result from this theory we rely on in our forthcoming proofs.

\begin{thm}[{\cite[Chapter 3]{james:symmetricGroup}}, {\cite[\S 28]{curtis:representationTheory}}]
\label{thm:Et}
Let $t$ be a Young tableau of shape $\lambda$.
The element $\grC^-_{t}\grR^+_t$ of the group algebra annihilates (is identical zero on) every Specht module $\cS^\mu$ with $\mu\ne\lambda$.
On $\cS^\lambda$, it is a 1-dimensional projector times a scalar.
\end{thm}

This result is important as $\grC^-_t\grR^+_t$ is the projector onto $\Slice_{S_n}(v, \cM)$ in every $S_n$-module $\cM$, where $v$ lies in the image of $\grC^-_{t}\grR^+_t$ in $\cS^\lambda$.
Note that while both $\grC^-_{t}$ and $\grR^+_t$ are non-normalized orthogonal projectors, they generally do not commute, hence, $\grC^-_{t}\grR^+_t$ is not Hermitian, and is \emph{not} an \emph{orthogonal} projector.

\newcommand{\Johnson}[2]{\bC^{\binom{[#1]}{#2}}}
\mycutecommand{\GrProjector}{\grC^-_{t_{j}} \grR^+_{t_j}}

\subsection{The Module \texorpdfstring{$\Johnson nk$}{C([n] choose k)}} \label{sec:AssocPrelim}

For the remaining part of this paper, we assume the value of $n$ is fixed, so our notation will often implicitly depend on $n$.

For positive integers $n\ge 2k$, the set $\binom{[n]}{k}$ of all subsets of $[n]$ of size $k$ admits group action of $S_n$ on it.
Namely, $x\in\binom{[n]}{k}$ is mapped by $\pi\in S_n$ to $\pi(x)=\sfig{\pi(i)\mid i\in x}$.
Let $\Johnson nk$ be the corresponding $S_n$-module.
We will write the elements of $\Johnson nk$ as formal linear combinations of subsets similar as we write group algebra elements.
This module has close connection to the Johnson association scheme~\cite{bannai:algebraicCombinatorics}, and both spaces $\bC^X$ and $\bC^Y$ in \rf{sec:lower} are of this form.

In this section, we will use \rf{thm:Et} to analyse its structure.
Although the results are well-known, this serves as a warm-up for the next sections.
We will need the following piece of notation.
For two disjoint sets $A, B\subseteq [n]$, let $A\sqtimes B$ denote their disjoint union.
We extend this notation to formal linear combinations of sets in a straightforward way. 
For instance,
\[
\sB[\{1\} - \{2\}]\sqtimes \sB[\{3,5\}-\{4,5\}] = \{1,3,5\}-\{1,4,5\}-\{2,3,5\}+\{2,4,5\}.
\]
This notation is similar to tensor products with the distinction that the sets are unordered and the promise that the multipliers are disjoint.
The symbol $\sqtimes$ is a combination of the symbols for the disjoint union $\sqcup$ and the product $\times$.

Take any 3 distinct elements $a,b,c\in [n]$ and $T\in {\binom{[n]}k}$.
Either $\absA|\{a,b,c\}\cap T|\ge 2$, or $\absA|\{a,b,c\}\cap ([n]\setminus T)|\ge 2$.
In either case, $\grS^-_{\{a,b,c\}}T=0$, hence, $\grS^-_{\{a,b,c\}}$ annihilates the whole $\Johnson nk$.
By \rf{thm:Et}, $\Johnson nk$ does not contain a copy of any $S^\lambda$ for $\lambda$ with more than two rows.

Now consider two-row diagrams $\lambda_j = (n-j,j)$, which also includes the one-row case of $\lambda_0 = (n)$.
For notational convenience, we choose the following tableau of shape $\lambda_j$:
\begin{equation}
\label{eqn:t_j}
\ytableausetup{mathmode, boxsize=2em}
t_j = \begin{ytableau}
a_1 & a_2 & a_3 & \none[\dots] & a_j & 1 & 2 & 3 & \none[\dots] & \scriptstyle n-2j \\
b_1 & b_2 & b_3 & \none[\dots] & b_j
\end{ytableau}\;\;,
\ytableausetup{boxsize=normal}
\end{equation}
where $a_i = n-2i+2$ and $b_i = n-2i+1$ for $i=1,\dots,j$.
If $T\subseteq [n]$, then $\grC^-_{t_j}T=0$ unless it satisfies the following intersection condition:
\begin{equation}
\label{eqn:intersectionCondition}
\forall i\in [j]\colon \absA|T\cap \{a_i,b_i\}|=1.
\end{equation}
By \rf{thm:Et}, there is no irrep isomorphic to $\Spe{(n-j,j)}$ in $\Johnson nk$ for $k<j$.

Now let us consider the case $k=j$.
If $T\in {\binom{[n]}j}$ and satisfies the intersection condition~\rf{eqn:intersectionCondition}, then $\grC^-_{t_j}T = (-1)^{|T\cap\{b_1,\dots,b_j\}|}C_j$, where
\begin{equation}
\label{eqn:C_j}
C_j = \sB[\{a_1\}-\{b_1\}]\sqtimes \sB[\{a_2\}-\{b_2\}]\sqtimes\cdots\sqtimes \sB[\{a_j\}-\{b_j\}].
\end{equation}

For any $\pi\in R_{t_j}$, the sets $T\cap\{b_1,\dots,b_j\}$ and $\pi(T)\cap\{b_1,\dots,b_j\}$ have the same size, hence, the image of $\grC^-_{t_j}\grR^+_{t_j}$ is non-empty on $\Johnson nj$, as there are no cancellations.
The image is spanned by $C_j$, which, by \rf{thm:Et}, means that $C_j$ belongs to the only copy of $\Spe{(n-j,j)}$ contained in $\Johnson nj$.
We will use the latter as our reference instance of $\Spe{(n-j,j)}$, and $C_j$ as our reference vector in $\Spe{(n-j, j)}$.
Note that, while they are related, $C_j$ is an element of the module $\Johnson nj$, and $\grC^-_{t_j}$ is an element of the group algebra $\bC S_n$.

For the general case $k\ge j$, we can use the same logic as above with $\grC_{t_j}^-$ replaced by $\grC^-_{t_j}\grS^+_{[n-2j]}$.
As above, for any $T\in {\binom{[n]}k}$, the vector $\grC^-_{t_j}\grS^+_{[n-2j]} T$ is a scalar multiple of the vector
\begin{equation}
\label{eqn:R^A_ell}
C_j\sqtimes R^{[n-2j]}_{k-j}
\qquad\text{with}\qquad
R^A_s = \sum_{B\subset A, |B|=s} B,
\end{equation}
where the scalar is 0 unless the intersection condition~\rf{eqn:intersectionCondition} is met.
If the latter is satisfied, the scalar is $(-1)^{|T\cap\{b_1,\dots,b_j\}|} (k-j)! (n-j-k)!$.
We have that $S_{[n-2j]}$ is a subgroup of  $R_{t_j}$, thus, $\grS^+_{[n-2j]} \grR^+_{t_j} = (n-2j)!\, \grR^+_{t_j}$.
Using the same reasoning as above, there are no cancellations again, and the image $\grC^{-}_{t_j} \grR^+_{t_j}\s[\Johnson nk] = \grC^{-}_{t_j} \grS^+_{[n-2j]} \grR^+_{t_j}\s[\Johnson nk] $ is one-dimensional and spanned by $C_j\sqtimes R^{[n-2j]}_{k-j}$ of~\rf{eqn:R^A_ell}.
In light of \rf{thm:Et}, we have proven the following result:

\begin{thm}
\label{thm:Johnson}
We have the following decomposition into irreps, each irrep having multiplicity 1:
\begin{equation}
\label{eqn:JohnsonIrreps}
\Johnson nk \cong \Spe{(n)} \oplus \Spe{(n-1,1)} \oplus \Spe{(n-2,2)} \oplus\ldots\oplus \Spe{(n-k,k)}.
\end{equation}
Also, for $j\le k$, $\Slice_{S_n}\sB[C_j, \Johnson nk]$ is one-dimensional and spanned by the following normalised vector 
\begin{equation}
\label{eqn:v_jk}
v_j^k = \frac{1}{\sqrt{2^j{\binom{n-2j}{k-j}}}}\; C_j\sqtimes R^{[n-2j]}_{k-j}.
\end{equation}
On $\Johnson nk$, both $\grC^-_{t_j} \grR^+_{t_j}$ and $\grC^-_{t_j}\grS^+_{[n-2j]}$ are a scalar times a projector onto $v_j^k$.
\end{thm}

We will need the following small technical result.
\begin{clm}
\label{clm:Sabc}
Assume $a,b,c\in [n]$ are 3 distinct elements.
Then, $\grS^-_{\{a,b,c\}}$ annihilates any irrep of the form $S^{(n-j,j)}$.
\end{clm}

\begin{proof}
As we saw earlier, $\grS^-_{\{a,b,c\}}$ annihilates the whole $\Johnson nk$.
Hence, it annihilates all the terms in~\rf{eqn:JohnsonIrreps}, which proves the claim.
\end{proof}

We can also describe the isometric isomorphisms $\Phi^{\ell \to k}_j$ between the copies of $S^{(n-j,j)}$ in $\Johnson n\ell$ and $\Johnson nk$.
They transform the corresponding normalised vectors~\rf{eqn:v_jk}:
\begin{equation}
\label{eqn:Phil->k}
\Phi^{\ell\to k}_j \colon 
v_j^\ell \mapsto v_j^k.
\end{equation}
Since it is a morphism, it also transforms 
$
\Phi^{\ell\to k}_j \colon 
\pi\sA[v_j^\ell] \mapsto \pi\sA[v_j^k]
$
for every choice of $\pi\in S_n$.

For the ease of notation, we will prove Lemmata~\ref{lem:CoeffsUnderA}--\ref{lem:DeltaINorm} for the operator $\Gamma\colon \Johnson n\ell \to \Johnson nk$ given by 
\begin{equation}
\label{eqn:GammaWithell}
\Gamma = \bigoplus_j \gamma_j \Phi_j^{\ell\to k}.
\end{equation}
The statements of the Lemmata are then obtained substituting $k'$ instead of $\ell$.

We will also need a more convenient way of going from $\Johnson n\ell$ to $\Johnson nk$, where $k > \ell$.  Define the following map
\begin{equation}
\label{eqn:W_ell^k}
W^{\ell\to k}\colon \Johnson n\ell \to \Johnson nk,\quad T\mapsto T\sqtimes R^{[n]\setminus T}_{k-\ell} = \sum_{S: T\subseteq S\subseteq [n], |S|=k} S.
\end{equation}
In other words, $W^{\ell\to k}\elem[S,T] = 1_{T\subseteq S}$.  The latter condition is preserved under the action of any $\pi\in S_n$, hence, by \rf{prp:groupAction}, $W^{\ell\to k}$ is a morphism.

\begin{clm}
\label{clm:Waction}
For a subset $A\subseteq [n-2j]$ of size $\ell-j$, we have
\begin{equation}
\label{eqn:Waction}
W^{\ell\to k}\colon\quad C_j\sqtimes A\; \mapsto\;  C_j\sqtimes A \sqtimes R^{[n-2j]\setminus A}_{k-\ell}.
\end{equation}
\end{clm}

\pfstart
Indeed, take any basis element $T$ used on the left-hand side of~\rf{eqn:Waction}.
By construction, $|T\cap \{a_i,b_i\}|=1$ for every $i$.
Assume $W^{\ell\to k}$ extends $T$ to a superset $S\supseteq T$ such that $\{a_i,b_i\}\subseteq S$.
But $S$ can also then be obtained from $T\triangle \{a_i,b_i\}$, where $\triangle$ is the symmetric difference.
The subsets $T$ and $T\triangle\{a_i,b_i\}$ have opposite signs in $C_j\sqtimes A$, hence, $S$ cancels out.

The only terms that are not cancelled out in this fashion come from extensions by the elements in $[n-2j]\setminus A$, which gives~\rf{eqn:Waction}.
\pfend

\begin{prp}
\label{prp:WIsotypical}
The morphism $W^{\ell\to k}$ satisfies
\begin{equation}
\label{eqn:WIsotypicalA}
W^{\ell\to k }\colon\quad C_j\sqtimes R^{[n-2j]}_{\ell-j} \mapsto {\binom{k-j}{\ell-j}} C_j\sqtimes R^{[n-2j]}_{k-j}.
\end{equation}
Therefore, it has the following isotypical decomposition:
\begin{equation}
\label{eqn:WIsotypicalB}
W^{\ell\to k} = \sum_{j=0}^\ell \sqrt{ {\binom{n-j-\ell}{k-\ell}}{\binom{k-j}{\ell-j}}} \Phi^{\ell\to k}_j .
\end{equation}
\end{prp}

\pfstart
Eq.~\rf{eqn:WIsotypicalA} follows from \rf{clm:Waction}, as each $B$ used in $R^{[n-2j]}_{k-j}$ has exactly $\binom{k-j}{\ell-j}$ choices of $A$ in $R^{[n-2j]}_{\ell-j}$ it can be obtained from.
Taking into account the normalisation factors from~\rf{eqn:v_jk}, the coefficient of $\Phi^{\ell\to k}_j$ in the isotypical decomposition of $W^{\ell\to k}$ is
\[
\binom{k-j}{\ell-j} \sqrt{\frac{\binom{n-2j}{k-j}}{\binom{n-2j}{\ell-j}}} = \sqrt{ {\binom{n-j-\ell}{k-\ell}}{k-j\choose \ell-j}}.\qedhere
\]
\pfend

\newcommand{\Bigmodule}[2]{\Johnson{#1}{#2}\otimes \bC^{#1}}
\mycutecommand{\hV}{{V'}}

\section{Proof of \rf{lem:DeltaINorm}}
\label{sec:DeltaINormProof}

Due to symmetry, the norms of all $\Gamma\circ\Delta_i$ are the same, thus, we may consider $\Gamma\circ\Delta_1$.
Recall that we assume that $\Gamma$ in an $\binom{[n]}k\times \binom{[n]}\ell$-matrix from~\rf{eqn:GammaWithell}.
Let $\Pi_{k}^\circ$ and $\Pi_{k}^\bullet$ denote orthogonal projections in $\Johnson nk$ onto the span of 
$\sfigA{T\in \binom{[n]}k \midA 1\notin T}$ 
and 
$\sfigA{T\in \binom{[n]}k \midA 1\in T}$,
respectively.
Then $\Gamma\circ\Delta_1$ decomposes as the following direct sum
\begin{equation}
\label{eqn:GammaDelta1}
\Gamma\circ \Delta_1 = \Pi_{k}^\circ \Gamma \Pi_{\ell}^\bullet \oplus \Pi_{k}^\bullet \Gamma \Pi_{\ell}^\circ,
\end{equation}
and it suffices to estimate the norms of both terms on the right-hand side independently.

This matrix is no longer symmetric with respect to the whole group $S_n$, but it is an $S_{[2..n]}$-morphism, where $[2..n]$ denotes $\{2,3,\dots,n\}$.
The group $S_{[2..n]}$ is clearly isomorphic to $S_{n-1}$, and the vector $C_j$ from~\rf{eqn:C_j} still acts as a reference vector in the irrep $\Spe{(n-1-j,j)}$ of $S_{[2..n]}$.

Let us denote
\[
\cA_j^k = \Slice_{S_{[2..n]}} \sA[C_j, \Johnson nk],
\]
where $\Johnson nk$ is considered as an $S_{[2..n]}$-module.
The proof of the following proposition is essentially identical to \rf{thm:Johnson}.

\renewcommand{\O}{{[\mathrm{O}]}_j}
\newcommand{\I}{{[\mathrm{I}]}_j}

\begin{prp}
If $j< k$, the space $\cA^k_j$ is two-dimensional with the following orthogonal basis:
\[
\O^k = C_j \sqtimes R^{[2..n-2j]}_{k-j}
\qqand
\I^k = C_j \sqtimes \{1\}\sqtimes R^{[2..n-2j]}_{k-j-1},
\]
with the vectors being in the image of $\Pi_{k}^\circ$ and $\Pi_{k}^\bullet$, respectively.
If $j=k$, the space is 1-dimensional and spanned by $\O^j$.
If $j>k$, the space is empty.
\end{prp}

\pfstart
As in \rf{thm:Johnson}, we have that $\cA_j^k$ is equal to the image of $\grC^-_{t_j}\grS^+_{[2..n-2j]}$ in $\Johnson nk$.
For $T\in {\binom{[n]}k}$, the vector $\grC^-_{t_j}\grS^+_{[2..n-2j]} T$ is a scalar multiple (possibly, zero) of either $\I^k$ or $\O^k$ in dependence on whether $1\in T$ or not, respectively.
\pfend

We have the following estimates on the norms of these vectors:
\[
\normA|\O^k|^2 = 2^j \binom{n-2j-1}{k-j}
\qqand
\normA|\I^k|^2 = 2^j \binom{n-2j-1}{k-j-1}.
\]

We will establish relation between this basis and the $S_n$-isotypical structure of $\Johnson nk$ because the latter is what the morphism $\Gamma$ is using.
Let us rewrite the decomposition~\rf{eqn:JohnsonIrreps} as
\[
\Johnson nk = \cB_0^k\oplus \cB_1^k \oplus\cdots\oplus \cB_k^k,
\]
where $\cB_i^k$ is isomorphic to $\Spe{(n-i,i)}$ as an $S_n$-module.
Now our goal is to find vectors $\widetilde{v_{j,1}}^k \in \cA_j^k\cap \cB_{j}^k$ and $\widetilde{v_{j,2}}^k \in \cA_j^k\cap \cB_{j+1}^k$.

Let us start with the first vector.
If $k=j$, we have
\[
\widetilde {v_{j,1}}^j = \O^j = C_j \in \cA_j^j \cap \cB_j^j.
\]
For $k>j$, we define
\begin{equation}
\label{eqn:wvj1}
\widetilde {v_{j,1}}^k = W^{j\to k} \widetilde {v_{j,1}}^j = C_j\sqtimes R^{[n-2j]}_{k-j}= \O^k + \I^k \in \cA_j^k \cap \cB_j^k,
\end{equation}
where $W^{j\to k}$ is as in~\rf{eqn:W_ell^k}, and we use \rf{eqn:WIsotypicalA}.
The inclusion holds here because $W^{j\to k}$ is an $S_n$-morphism, hence, also an $S_{[2..n]}$-morphism, therefore, it preserves membership in both $\cA_j$ and $\cB_j$.

For the second vector, observe that, for $c\in [2..n-2j]$, the vector
$
C_j\sqtimes \sA[\{c\}- \{1\}]
$
is in $\cB_{j+1}^{j+1}$ by \rf{thm:Johnson}.
Summing over all $c$, we obtain
\[
\widetilde {v_{j,2}}^{j+1} = \sum_{c=2}^{n-2j} C_j\sqtimes \sA[\{c\}- \{1\}] = \O^{j+1} - (n-2j-1) \I^{j+1} \in \cA_j^{j+1} \cap \cB_{j+1}^{j+1}.
\]
Using \rf{clm:Waction}, we get:
\[
W^{j+1\to k}\O^{j+1} = (k-j) \O^k + (k-j-1) \I^k
\qqand
W^{j+1\to k}\I^{j+1} = \I^k.
\]
Hence, we obtain the following vector
\begin{equation}
\label{eqn:wvj2}
\widetilde {v_{j,2}}^k = W^{j+1\to k} \widetilde {v_{j,2}}^{j+1} 
= (k-j) \O^k - (n-j-k) \I^k
\in \cA_j^k \cap \cB_{j+1}^k.
\end{equation}

The normalised versions of these vectors
\[
v_{j,1}^k = \frac{\widetilde {v_{j,1}}^k}{\normA|\widetilde {v_{j,1}}^k|}
\qqand
v_{j,2}^k = \frac{\widetilde {v_{j,2}}^k}{\normA|\widetilde {v_{j,2}}^k|}
\]
form an orthonormal basis of $\cA_j^k$.
Also, by~\rf{eqn:WIsotypicalB}, $W^{\ell\to k}$ is a positive multiple of $\Phi_j^{\ell\to k}$ on $\Spe{(n-j,j)}$, which means that
\[
\Phi_j^{\ell\to k} v_{j,1}^\ell = v_{j,1}^k
\qqand
\Phi_{j+1}^{\ell\to k} v_{j,2}^\ell = v_{j,2}^k.
\]
Therefore, we can write the following isotypical decomposition of $\Gamma$ as an $S_{[2..n]}$-morphism in the sense of~\rf{eqn:isotypical}:
\begin{equation}
\label{eqn:GammaIsotypical}
\Gamma = \bigoplus_j \begin{pmatrix}\gamma_j &  \\  & \gamma_{j+1}  \end{pmatrix} \otimes I_{\Spe{(n-j-1,j)}},
\end{equation}
where the column basis is $v_{j,1}^\ell, v_{j,2}^\ell$, and the row basis is $v_{j,1}^k, v_{j,2}^k$.

The operator $\Pi^\circ_k$ restricted to $\cA_j^k$ is an orthogonal projector onto the vector $\O^k$.
Let us find the coordinates
\[
\phi^{k,\circ}_j = \begin{pmatrix} \phi^{k,\circ}_{j,1} \\ \phi^{k,\circ}_{j,2} \end{pmatrix}
\]
of the normalised $\O^k$ in the basis $\sfigA{v^k_{j,1}, v^k_{j,2}}$.
For that, it suffices to find the inner product between the corresponding normalised vectors.  For the first coordinate, we have
\[
\phi^{k,\circ}_{j,1} = \frac{\ip<\widetilde{v_{j,1}}^k , \O^k>}{\normA|\widetilde{v_{j,1}}^k| \normA|\O^k|} 
=\frac{\normA|\O^k|}{\normA|\widetilde{v_{j,1}}^k|}
=\sqrt{\frac{2^j\binom{n-2j-1}{k-j}} {2^j\binom{n-2j}{k-j} }} 
= \sqrt{ \frac{n-j-k}{n-2j} }.
\]
We can also explicitly calculate the second coordinate.
First, 
\[
\|\widetilde {v_{j,2}}^{j+1}\|^2 
= \|\O^{j+1}\|^2 + (n-2j-1)^2 \|\I^{j+1}\|^2
= 2^j (n-2j-1) + 2^j (n-2j-1)^2 = 2^j(n-2j)(n-2j-1).
\]
Then, using~\rf{eqn:WIsotypicalB} and that $\widetilde {v_{j,2}}^{j+1}\in\cB_{j+1}$:
\[
\|\widetilde {v_{j,2}}^{k}\|^2
= \|W^{j+1\to k} \widetilde {v_{j,2}}^{j+1}\|^2
= \binom{n-2j-2}{k-j-1}\cdot 2^j (n-2j)(n-2j-1).
\]
Which gives us
\[
\phi^{k,\circ}_{j,2} = \frac{\ip<\widetilde{v_{j,2}}^k , \O^k>}{\normA|\widetilde{v_{j,2}}^k| \normA|\O^k|} 
=\frac{(k-j) \normA|\O^k|}{\normA|\widetilde{v_{j,2}}^k|}
=(k-j) \sqrt{\frac{2^j \binom{n-2j-1}{k-j}} {2^j\binom{n-2j-2}{k-j-1}(n-2j)(n-2j-1) }} 
= \sqrt{ \frac{k-j}{n-2j} }.
\]
Thus, the restrictions of $\Pi^\circ_k$ and $\Pi^\bullet_k$ to $\cA_j^k$ project onto the following orthonormal vectors
\[
\phi^{k,\circ}_j = \frac{1}{\sqrt{n-2j}} \begin{pmatrix} \sqrt{n-k-j} \\ \sqrt{k-j} \end{pmatrix}
\qqand
\phi^{k,\bullet}_j = \frac{1}{\sqrt{n-2j}} \begin{pmatrix}\sqrt{k-j} \\ -\sqrt{n-k-j} \end{pmatrix},
\]
respectively, where we use the basis $\sfigA{v^k_{j,1}, v^k_{j,2}}$ as before.
Utilising~\rf{eqn:GammaIsotypical}, we get
\[
\Pi_{k}^\circ \Gamma \Pi_{\ell}^\bullet 
= \bigoplus_j \skB[{\phi^{k,\circ}_j(\phi^{k,\circ}_j)^* 
\begin{pmatrix}\gamma_j &  \\  & \gamma_{j+1}  \end{pmatrix} \phi^{\ell,\bullet}_j(\phi^{\ell,\bullet}_j)^*}] \otimes I_{\Spe{(n-j-1,j)}}.
\]
Therefore,
\[
\norm|\Pi_{k}^\circ \Gamma \Pi_{\ell}^\bullet|
=
\max_j \abs|(\phi^{k,\circ}_j)^* {\begin{pmatrix}\gamma_j &  \\  & \gamma_{j+1}  \end{pmatrix}} \phi^{\ell,\bullet}_j| 
=
\max_j \frac{\abs|\gamma_j \sqrt{(n-k-j)(\ell-j)} - \gamma_{j+1}\sqrt{(k-j)(n-\ell-j)}|} {n-2j}.
\]
Similarly,
\[
\norm|\Pi_{k}^\bullet \Gamma \Pi_{\ell}^\circ|
=
\max_j \abs|(\phi^{k,\bullet}_j)^* {\begin{pmatrix}\gamma_j &  \\  & \gamma_{j+1}  \end{pmatrix}} \phi^{\ell,\circ}_j| 
=
\max_j \frac{\abs|\gamma_j \sqrt{(k-j)(n-\ell-j)} - \gamma_{j+1}\sqrt{(n-k-j)(\ell-j)}|} {n-2j}.
\]
By~\rf{eqn:GammaDelta1}, the norm of $\Gamma\circ\Delta_1$ is the maximum of the two, which gives us \rf{lem:DeltaINorm}.

\section{Proofs of Lemmata~\ref{lem:CoeffsUnderA}--\ref{lem:DeltaReflNorm} }
\label{sec:FirstProofs}

Here we will prove the first three of the main lemmata from \rf{sec:adversaryGeneral}.
For greater clarity, we use $e_x$ to denote the element of the standard basis of $\Johnson nk$ corresponding to $x\subseteq [n]$ in this section.

In the context of Lemmata~\ref{lem:CoeffsUnderA}--\ref{lem:DeltaReflNorm}, the $S_n$-module $\Bigmodule nk$ becomes important.
It is a tensor product of two $S_n$-modules $\Johnson nk$ and $\bC^n = \Johnson n1$.
It can be also seen as an $S_n$-module corresponding to the set $\binom{[n]}k \times [n]$ with the obvious group action of $S_n$ on it.

We will express all the operators appearing in the above lemmata using the following isometry 
\begin{equation}
\label{eqn:Vk}
V_k\colon \Johnson nk \to \Bigmodule nk, 
\quad
e_x \mapsto e_x\otimes \psi_x,
\end{equation}
where $\psi_x = \frac1{\sqrt{|x|}}\sum_{i\in x} \ket |i>$ is as defined in~\rf{eqn:psix}.
In other words, $V_k[(x,i), x] = 1_{i\in x}/\sqrt{k}$, from which \rf{prp:groupAction} tells us $V_k$ is a morphism.

\begin{prp}
\label{prp:Alter}
For an $\binom{[n]}{k}\times \binom{[n]}{\ell}$-matrix $\Gamma$, we have
\begin{align}
\Gamma \circ \Psi &= V_k^* (\Gamma\otimes I_n) V_\ell \label{eqn:PsiAlter},\\
\Gamma \circ \Dp &= V_k\Gamma - (\Gamma\otimes I_n) V_\ell \label{eqn:DpAlter}, \\
\Gamma \circ \Dps &= V_k^*(\Gamma\otimes I_n) - \Gamma V_\ell^*,  \label{eqn:DpsAlter}\\
\Gamma \circ \Dpp &= V_kV_k^* (\Gamma\otimes I_n) - (\Gamma\otimes I_n) V_\ell V_\ell^*, \label{eqn:DppAlter}
\end{align}
where $\Psi_{x,y} = \ip<\psi_x,\psi_y>$, $\Dp_{x,y} = \psi_x - \psi_y$, $\Dps_{x,y} = \psi_x^* - \psi_y^*$, and $\Dpp_{x,y} = \psi_x\psi_x^* - \psi_y\psi_y^*$ are $\binom{[n]}{k}\times \binom{[n]}{\ell}$ (block) matrices defined as in~\rf{eqn:Psi}, \rf{eqn:Deltapsi} and~\rf{eqn:Deltapsipsi}.
Also, $I_n$ stands for the identity on $\bC^n$.
\end{prp}

\pfstart
Everywhere in this proof $x$ ranges over $\binom{[n]}{k}$ and $y$ over $\binom{[n]}{\ell}$.
Let us denote $\gamma_{x,y} = \Gamma\elem[x,y]$ so that 
$\Gamma = 
\sum_{x,y} \gamma_{x,y} e_xe_y^*
$.
Also, we have 
$V_k = 
\sum_x (e_x\otimes \psi_x) e_x^*
$ and 
$V_\ell = 
\sum_y (e_y\otimes \psi_y) e_y^*
$.
Thus,
\[
V_k^* (\Gamma\otimes I_n) V_\ell
= \sum_x e_x (e_x\otimes \psi_x)^*
\sum_{x,y} \gamma_{x,y} (e_xe_y^*\otimes I_n)
\sum_y (e_y\otimes \psi_y) e_y^*
= \sum_{x,y} \gamma_{x,y} e_xe_y^*\cdot \psi_x^*\psi_y
= \Gamma\circ\Psi.
\]
Also,
\[
V_k\Gamma 
= \sum_x (e_x\otimes \psi_x) e_x^* \sum_{x,y} \gamma_{x,y} e_xe_y^* 
= \sum_{x,y} \gamma_{x,y} (e_x\otimes \psi_x) e_y^*
\]
and
\[
(\Gamma\otimes I_n) V_\ell
= \sum_{x,y} \gamma_{x,y} (e_xe_y^*\otimes I_n) \sum_y (e_y\otimes \psi_y) e_y^*
= \sum_{x,y} \gamma_{x,y} (e_x\otimes \psi_y) e_y^*,
\]
from which~\rf{eqn:DpAlter} follows since
\[
\Gamma\circ\Dp 
= \sum_{x,y} \gamma_{x,y} \sA[e_x\otimes (\psi_x - \psi_y)] e_y^*.
\]
Similarly,
\[
V_k^*(\Gamma\otimes I_n)
= \sum_x e_x (e_x\otimes \psi_x)^* \sum_{x,y} \gamma_{x,y} (e_xe_y^*\otimes I_n)
= \sum_{x,y} \gamma_{x,y} e_x(e_y\otimes \psi_x)^*
\]
and
\[
\Gamma V_\ell^*
= \sum_{x,y} \gamma_{x,y} e_xe_y^* \sum_y e_y (e_y\otimes \psi_y)^*
= \sum_{x,y} \gamma_{x,y} e_x(e_y\otimes \psi_y)^*,
\]
which implies~\rf{eqn:DpsAlter}.
Finally,
\[
V_kV_k^* (\Gamma\otimes I_n) 
= \sum_x (e_x\otimes\psi_x)(e_x\otimes\psi_x)^* \sum_{x,y} \gamma_{x,y} (e_xe_y^*\otimes I_n)
= \sum_{x,y} \gamma_{x,y} e_xe_y^*\otimes \psi_x\psi_x^*
\]
and
\[
(\Gamma\otimes I_n) V_\ell V_\ell^*
= \sum_{x,y} \gamma_{x,y} (e_xe_y^*\otimes I_n) \sum_y (e_y\otimes\psi_y)(e_y\otimes\psi_y)^*
= \sum_{x,y} \gamma_{x,y} e_xe_y^*\otimes \psi_y\psi_y^*,
\]
from which we get~\rf{eqn:DppAlter}.
\pfend

Let us explore the structure of $\Bigmodule nk$.
We know from \rf{thm:Johnson} that $\Johnson nk$ decomposes into irreducible submodules as
\[
\Johnson nk = \cB_0^k\oplus \cB_1^k \oplus\cdots\oplus \cB_k^k,
\]
where $\cB_i^k$ is isomorphic to $\Spe{(n-i,i)}$.
Similarly, as a special case of $k=1$, we have the following decomposition into submodules:
\[
\bC^n = \cE_0 \oplus \cE_1
\]
with $\cE_i$ is isomorphic to $\Spe{(n-i,i)}$.
We are interested in the relation between tensor products of these submodules and the isotypical decomposition of $\Bigmodule nk$.
We will prove the following two results in \rf{sec:bigModule}.

\begin{lem}
\label{lem:basis}
Assume $j>0$ and $k>j$.
For the vector $C_j\in \Spe{(n-j,j)}$ from~\rf{eqn:C_j}, the space
\begin{equation}
\label{eqn:A}
\cA_j^k = \Slice_{S_n}\sA[C_j, \Bigmodule nk]
\end{equation}
is 4-dimensional with an orthonormal basis $w_{j,1}^{k}, w_{j,2}^{k}, w_{j,3}^{k}, w_{j,4}^{k}$, where
\begin{equation}
\label{eqn:v_j0k}
w_{j,1}^{k} \in \cB_{j-1}^{k}\otimes\cE_1,\quad
w_{j,2}^{k} \in \cB_{j}^{k}\otimes\cE_0,\quad
w_{j,3}^{k} \in \cB_{j}^{k}\otimes\cE_1,\quad\text{and}\quad
w_{j,4}^{k} \in \cB_{j+1}^{k}\otimes\cE_1.
\end{equation}
If $j=0$, the space is 2-dimensional and spanned by $w_{j,2}^{k}$ and $w_{j,4}^{k}$.
We additionally have, for all $\ell$ and $k$ where the corresponding operators exist:
\begin{equation}
\label{eqn:PhiNaBaze}
\begin{aligned}
(\Phi_{j-1}^{\ell\to k}\otimes I_n)w_{j,1}^\ell &= w_{j,1}^{k},\qquad&
(\Phi_{j}^{\ell\to k}\otimes I_n)w_{j,2}^\ell &= w_{j,2}^{k},\qquad\\
(\Phi_{j}^{\ell\to k}\otimes I_n)w_{j,3}^\ell &= w_{j,3}^{k},&
(\Phi_{j+1}^{\ell\to k}\otimes I_n)w_{j,4}^\ell &= w_{j,4}^{k}.
\end{aligned}
\end{equation}
\end{lem}

\begin{lem}
\label{lem:operatorV}
For the operator $V_k$ defined in~\rf{eqn:Vk}, we have the following isotypical decomposition in the sense of~\rf{eqn:isotypical}:
\begin{equation}
\label{eqn:VkDecompose}
V_k = \bigoplus_j \fita_j^k \otimes I_{\Spe{(n-j,j)}}
 = \bigoplus_j 
\begin{pmatrix} \fita^k_{j,1} \\ \fita^k_{j,2} \\ \fita^k_{j,3} \\ \fita^k_{j,4} \end{pmatrix}
\otimes I_{\Spe{(n-j,j)}},
\end{equation}
where $\fita^k_j$ are defined in~\rf{eqn:fita}.
The column basis is $v_j^k$ from~\rf{eqn:v_jk} and the row basis is $w_{j,1}^k, \dots, w_{j,4}^k$ from~\rf{eqn:v_j0k}.
\end{lem}

Now we are in position to prove the lemmata from~\rf{sec:adversaryGeneral}.
We prove them for an $\binom{[n]}k\times \binom{[n]}\ell$ matrix $\Gamma$ given by
\begin{equation}
\label{eqn:GammaDecompose}
\Gamma = \bigoplus_j \gamma_j \Phi_j^{\ell\to k}
= \bigoplus_j (\gamma_j)\otimes I_{\Spe{(n-j,j)}},
\end{equation}
where the last one is its isotypical decomposition in the sense of~\rf{eqn:isotypical}.
In the latter, we assume that the $1\times 1$-matrix has the column basis $v_j^\ell$ and the row basis $v_j^k$.
Similarly, using~\rf{eqn:PhiNaBaze}:
\begin{equation}
\label{eqn:GammaIDecompose}
\Gamma\otimes I_n 
= \bigoplus_j \Gamma_j \otimes I_{\Spe{(n-j,j)}}
= \bigoplus_j 
\begin{pmatrix}
\gamma_{j-1} & & &\\
& \gamma_j & & \\
&& \gamma_j &\\
&&& \gamma_{j+1}
\end{pmatrix}
\otimes I_{\Spe{(n-j,j)}},
\end{equation}
where the notation $\Gamma_j$ is borrowed from~\rf{eqn:fita}.
Here the column basis is $w_{j,1}^\ell, \dots, w_{j,4}^\ell$ and the row basis is $w_{j,1}^k, \dots, w_{j,4}^k$.

Now, by \rf{prp:Alter} and using the decompositions~\rf{eqn:VkDecompose},~\rf{eqn:GammaDecompose} and~\rf{eqn:GammaIDecompose} above, we have
\[
\begin{alignedat}{3}
\Gamma \circ \Psi 
&= V_k^* (\Gamma\otimes I_n) V_\ell
&&= \bigoplus_j 
\sA[(\fita_{j}^k)^* \Gamma_j \fita_j^\ell] \otimes I_{\Spe{(n-j,j)}}.
\\
\Gamma \circ \Dp 
&= V_k\Gamma - (\Gamma\otimes I_n) V_\ell 
&&=
\bigoplus_j (\gamma_j\fita^k_j - \Gamma_j\fita^{\ell}_j) \otimes I_{\Spe{(n-j,j)}}.
\\
\Gamma \circ \Dps 
&= V_k^*(\Gamma\otimes I_n) - \Gamma V_\ell^*
&&= 
\bigoplus_j (\Gamma_j\fita_j^k - \gamma_j\fita_j^\ell)^* \otimes I_{\Spe{(n-j,j)}}.\\
\Gamma \circ \Dpp 
&= V_kV_k^* (\Gamma\otimes I_n) - (\Gamma\otimes I_n) V_\ell V_\ell^*
&& = \bigoplus_j \sA[ \fita_j^k (\fita_j^k)^*\Gamma_j  - \Gamma_j \fita_j^\ell(\fita_j^\ell)^* ]\otimes I_{\Spe{(n-j,j)}} .
\end{alignedat}
\]
Lemmata~\ref{lem:CoeffsUnderA}, \ref{lem:DeltaGenNorm} and~\ref{lem:DeltaReflNorm} follow from the above four equations with $\ell=k'$ and with the use of~\rf{eqn:isotypicalNorm} to bound the norms where necessary.

\section{The Module \texorpdfstring{$\Bigmodule nk$}{C([n] choose k) times Cn}}
\label{sec:bigModule}

In this section, we analyse the $S_n$-module $\Bigmodule nk$ similarly as we did for the $S_n$-module $\Johnson nk$ in \rf{sec:AssocPrelim}.
We will denote the basis elements of this module by $T\otimes \{d\}$ with $T\in \binom{[n]}k$ and $d\in[n]$.


%

\mycutecommand{\grD}{\EuScript{D}}

By \rf{thm:Et}, to get a slice of $\Bigmodule nk$, it suffices to apply $\grC^-_{t_j}\grR^+_{t_j}$ with $t_j$ in~\rf{eqn:t_j}.
However, directly dealing with this group algebra element is complicated.
We will use the following group algebra element instead:
\begin{equation}
\label{eqn:Dj}
\grD_j = \grC^-_{t_j} \grS_{[n-2j]}^+ \grS_{(a_1,b_1),\dots,(a_j,b_j)}^+,
\end{equation}
where $S_{(a_1,b_1),\dots,(a_j,b_j)}$ is the group permuting the pairs $\{{(a_1,b_1),\dots,(a_j,b_j)}\}$.
The latter is isomorphic to $S_j$, where $\sigma\in S_j$ corresponds to the permutation mapping $a_i \mapsto a_{\sigma(i)}$, $b_i\mapsto b_{\sigma(i)}$ for $i\in[j]$, and the identity elsewhere.

\begin{prp}
\label{prp:Dj}
The operator $\grD_j$ is a scalar times an orthogonal projector.
On every $S_n$-module $\cM$, its image contains the image of $\grC^-_{t_j}\grR_{t_j}^+$.
Moreover, if $\cM\cong\Spe{(n-i,i)}$, the two images coincide.
In particular, $\grD_j$ annihilates all $\Spe{(n-i,i)}$ with $i\ne j$.
\end{prp}

\begin{proof}
First, all the multipliers in~\rf{eqn:Dj} are orthogonal projectors up to a constant, and they commute.
Hence, $\grD_j$ is also a constant times an orthogonal projector.

Both $S_{[n-2j]}$ and $S_{(a_1,b_1),\dots,(a_j,b_j)}$ are subgroups of $R_{t_j}$, hence
\[
(n-2j)!j! \grC^-_{t_j}\grR^+_{t_j} =
\grC^-_{t_j} \grS_{[n-2j]}^+\grS_{(a_1,b_1),\dots,(a_j,b_j)}^+ \grR^+_{t_j} = \grD_j\grR^+_{t_j}.
\]
Thus, the image of $\GrProjector$ is contained in the image of $\grD_j$.

Also, $\grD_j$ contains $\grC^-_{t_j}\grS_{[n-2j]}^+$, hence, by~\rf{thm:Johnson}, the image of $\grD_j$ is contained in the image of $\GrProjector$ in all $\Spe\lambda$ with $\lambda$ having 2 or less rows.
\end{proof}

One problem is that $\grD_j$ is not guaranteed to annihilate irreps $\Spe\lambda$ with $\lambda$ having more than 2 rows.
Because of that, the image of $\grD_j$ turns out to be strictly larger than that of $\GrProjector$, which annihilates all such irreps.

The main reason for our choice of $\grD_j$ is that all the multipliers in~\rf{eqn:Dj} commute, therefore, it is very easy to describe its image.
For that, we will need the following version of the vector $C_j$ from~\rf{eqn:C_j} with one of the multipliers dropped:
\newcommand{\Cs}{C^{{\curvearrowright}{s}}_{j-1}}
\[
\Cs = 
\sB[\{a_1\}-\{b_1\}]\sqtimes\cdots \sqtimes \sB[\{a_{s-1}\}-\{b_{s-1}\}]\sqtimes\sB[\{a_{s+1}\}-\{b_{s+1}\}]\sqtimes \cdots\sqtimes \sB[\{a_j\}-\{b_j\}].
\]

\renewcommand{\O}{{[\mathrm{O}]}_j}
\newcommand{\IA}{{[\mathrm{I}_{\mathrm A}]}_j}
\newcommand{\IB}{{[\mathrm{I}_{\mathrm B}]}_j}
\newcommand{\II}{{[\mathrm{II}]}_j}
\newcommand{\IIIA}{{[\mathrm{III}_{\mathrm A}]}_j}
\newcommand{\IIIB}{{[\mathrm{III}_{\mathrm B}]}_j}

\begin{prp}
\label{prp:basis}
Assuming $k\ge j+1$, the image of $\grD_j$ in $\Bigmodule nk$ has the following orthogonal basis:
\begin{align*}
\O^k &= \sum_{s=1}^j \Cs \sqtimes R^{[n-2j]}_{k-j+1} \otimes (\{a_s\}-\{b_s\}) ,\\
\IA^k &= \sum_{s=1}^j \Cs \sqtimes R^{[n-2j]}_{k-j} \sqtimes (\{a_s\}\otimes\{a_s\}-\{b_s\}\otimes\{b_s\}),\\
\IB^k &= \sum_{s=1}^j \Cs \sqtimes R^{[n-2j]}_{k-j} \sqtimes (\{a_s\}\otimes\{b_s\}-\{b_s\}\otimes\{a_s\}) ,\\
\II^k &= \sum_{s=1}^j \Cs \sqtimes \{a_s,b_s\} \sqtimes R^{[n-2j]}_{k-j-1} \otimes (\{a_s\}-\{b_s\}) ,\\
\IIIA^k &= C_j \sqtimes \sum_{T\subseteq [n-2j], |T|=k-j} T\otimes \sum_{d\in T} \{d\},\\
\IIIB^k &= C_j \sqtimes \sum_{T\subseteq [n-2j], |T|=k-j} T\otimes \sum_{d\in [n-2j]\setminus T} \{d\}.\\
\end{align*}
If $k=j$, the basis consists of $\O^j$, $\IA^j$, $\IB^j$ and $\IIIB^j$.
If $k=j-1$, the basis consists only of $\O^{j-1}$.
For $k<j-1$, the image is zero.
\end{prp}

\pfstart
Take any basis element $T\otimes \{d\}$ of $\Bigmodule nk$ and consider its image under $\grD_j$.
If $d\in[n-2j]$, the derivation is similar to that of~\rf{eqn:R^A_ell} and the image is a scalar multiple of either $\IIIA^k$ or $\IIIB^k$ in dependence on whether $d\in T$ or not.

Now consider the case $d>n-2j$.  We may assume $d=a_j$, the other cases being similar.
Again, we have that $\grD_j T = 0$ unless $\absA|T\cap \{a_i, b_i\}|=1$ for all $i<j$, so we assume the latter condition is satisfied.
For $T\cap\{a_j,b_j\}$ all four options $\emptyset$, $a_j$, $b_j$, or $\{a_j,b_j\}$ are permissible though, and we get that $\grD_j T$ a scalar multiple of $\O^k$, $\IA^k$, $\IB^k$, or $\II^k$,  respectively.

The vectors are orthogonal because they are supported on pairwise disjoint subsets of basis vectors.
The cases of $k=j$ and $k=j-1$ follow from observing that, in these cases, the unlisted vectors are 0.
\pfend

We will need few properties of these vectors.
First, simple calculation reveals that the norms of these vectors satisfy
\begin{align*}
\normB|\O^k|^2 &= j2^j {n-2j\choose k-j+1}&\qquad
\normB|\IA^k|^2 &= \normB|\IB^k|^2 = j2^j {n-2j\choose k-j}\\
\normB|\II^k|^2 &= j2^j {n-2j\choose k-j-1}\\
\normB|\IIIA^k|^2 &= (k-j)2^j {n-2j\choose k-j}&\qquad
\normB|\IIIB^k|^2 &= (n-k-j)2^j {n-2j\choose k-j}.
\end{align*}
Second, using \rf{clm:Waction}, we get the following action of $W^{\ell\to k}\otimes I_n$ on these vectors, where $I_n$ acts on $\bC^n$:
\begin{equation}
\label{eqn:WOnBasis}
\begin{aligned}
\O^\ell &\mapsto {k-j+1\choose \ell-j+1} \O^k + {k-j\choose \ell-j+1} \sA[\IA^k - \IB^k] + {k-j-1\choose \ell-j+1} \II^k,\\
\IA^\ell &\mapsto {k-j\choose \ell-j}\IA^k + {k-j-1\choose \ell-j}\II^k,\\
\IB^\ell &\mapsto {k-j\choose \ell-j}\IB^k - {k-j-1\choose \ell-j}\II^k,\\
\II^\ell &\mapsto {k-j-1\choose \ell-j-1} \II^k,\\
\IIIA^\ell &\mapsto {k-j-1\choose \ell-j-1} \IIIA^k,\\
\IIIB^\ell &\mapsto {k-j-1\choose \ell-j}\IIIA^k + {k-j\choose \ell-j}\IIIB^k.
\end{aligned}
\end{equation}


\subsection{Proof of \rf{lem:basis}}
\label{sec:basis}

\myfigure{\label{fig:proof}}
{
The structure of the image of $\grD_j$ in various modules.
The columns correspond to the modules $\Bigmodule n\ell$ for various $\ell$ starting with $\ell=j-1$.
If $\ell>j$, the image of $\grD_j$ in this module is 6-dimensional, but only 4 of these dimensions are in $\cA_j^\ell$ (highlighted in yellow).
We find the first vector in each row (indicated by the framed vectors), and use the morphism $W^{\ell\to k}$ to move further in each row (indicated by the arrows).
The bullets and the stars correspond to vectors perpendicular to $\cA_j^\ell$.
They are identified in~\rf{eqn:triple j otimes 1} and~\rf{eqn:tripe j+1 otimes 1}.
}
{
\negbigskip
\negbigskip
\[
\def\nLabel{m-\the\pgfmatrixcurrentrow-\the\pgfmatrixcurrentcolumn}
\def\nB{\node[nB](\nLabel){}; \node}
\def\nY{\node[nB, nY](\nLabel){}; \node}
\def\nG{\node[nB](\nLabel){$\bullet$}}
\def\nH{\node[nB](\nLabel){$*$}}
\begin{tikzpicture}[
    nA/.style={minimum height=0.8cm},
    nB/.style={nA, minimum width=2.5cm},
    nD/.style={nA, minimum width=1.5cm},
    nY/.style={fill=yellow!50},
    nBrace/.style={decorate,decoration={brace,amplitude=0.2cm,mirror}},
    tBrace/.style={left=0.2cm,align=right}
]
\matrix[ampersand replacement=\&]
{
\nB{$\Bigmodule n{j-1}$}; \& \nB{$\Bigmodule n{j}$}; \& \nB{$\Bigmodule n{j+1}$}; \& \node[nD]{$\cdots$}; \& \nB{$\Bigmodule n{k}$}; \& \node[nD]{$\cdots$};\\
\nY[draw](A1){$\widetilde{w_{j,1}}^{j-1}$}; \& \nY(A2){$\widetilde{w_{j,1}}^{j}$}; \& \nY(A3){$\widetilde{w_{j,1}}^{j+1}$}; \& \node[nD,nY]{$\cdots$}; \& \nY(A4){$\widetilde{w_{j,1}}^{k}$}; \& \node[nD,nY]{$\cdots$};\\
\& \nY[draw](B2){$\widetilde{w_{j,2}}^{j}$}; \& \nY(B3){$\widetilde{w_{j,2}}^{j+1}$}; \& \node[nD,nY]{$\cdots$}; \& \nY(B4){$\widetilde{w_{j,2}}^{k}$}; \& \node[nD,nY]{$\cdots$};\\
\& \nG; \& \nG; \& \node[nD]{$\cdots$}; \& \nG; \& \node[nD]{$\cdots$};\\
\& \nY[draw](C2){$\widetilde{w_{j,3}}^{j}$}; \& \nY(C3){$\widetilde{w_{j,3}}^{j+1}$}; \& \node[nD,nY]{$\cdots$}; \& \nY(C4){$\widetilde{w_{j,3}}^{k}$}; \& \node[nD,nY]{$\cdots$};\\
\&\& \nH; \& \node[nD]{$\cdots$}; \& \nH; \& \node[nD]{$\cdots$};\\
\&\& \nY[draw](D3){$\widetilde{w_{j,4}}^{j+1}$}; \& \node[nD,nY]{$\cdots$}; \& \nY(D4){$\widetilde{w_{j,4}}^{k}$}; \& \node[nD,nY]{$\cdots$};\\
};
\draw[->] (A1) to (A2);
\draw[->, bend left=9] (A1) to (A3);
\draw[->, bend left=11] (A1) to (A4);
\draw[->] (B2) to (B3);
\draw[->, bend left=10] (B2) to (B4);
\draw[->] (C2) to (C3);
\draw[->, bend left=10] (C2) to (C4);
\draw[->, bend left=5] (D3) to (D4);
\draw (m-2-1.north west) rectangle (m-2-1.south east);
\draw (m-2-2.north west) rectangle (m-5-2.south east);
\draw (m-2-3.north west) rectangle (m-7-3.south east);
\draw (m-2-5.north west) rectangle (m-7-5.south east);
\draw [nBrace] (m-2-1.north west) --node[tBrace]{$\cB_{j-1}^\ell\otimes\cE_1$} (m-2-1.south west);
\draw [nBrace] (m-3-2.north west) --node[tBrace]{$\cB_{j}^\ell\otimes\cE_0$} (m-3-2.south west);
\draw [nBrace] (m-4-2.north west) --node[tBrace]{$\cB_{j}^\ell\otimes\cE_1$} (m-5-2.south west);
\draw [nBrace] (m-6-3.north west) --node[tBrace]{$\cB_{j+1}^\ell\otimes\cE_1$} (m-7-3.south west);
\end{tikzpicture}
\]
\negmedskip
}

Our overall proof strategy is as follows, see also \rf{fig:proof}.
First, we will identify non-zero vectors
\begin{align*}
&&
\widetilde{w_{j,2}}^j &\in \cA_j^j\cap\sA[\cB_{j}^{j}\otimes\cE_0], &
\\
\widetilde{w_{j,1}}^{j-1} &\in \cA_j^{j-1}\cap\sA[\cB_{j-1}^{j-1}\otimes\cE_1],&
\widetilde{w_{j,3}}^j &\in \cA_j^j\cap\sA[\cB_{j}^{j}\otimes\cE_1], &
\widetilde{w_{j,4}}^{j+1} &\in \cA_j^{j+1}\cap\sA[\cB_{j+1}^{j+1}\otimes\cE_1]
\end{align*}
which feature the smallest values of $k$, where the corresponding spaces are non-zero.
Let, correspondingly, $\ell_{j,1} = j-1$, $\ell_{j,2}=\ell_{j,3} = j$, and $\ell_{j,4} = j+1$.
Then, we define
\[
w^k_{j,i} = \frac{
\widetilde{w_{j,i}}^k 
} {\normA|
\widetilde{w_{j,i}}^k
|}
\qquad\text{with}\qquad
\widetilde{w_{j,i}}^k = \sA[W^{\ell_{j,i}\to k}\otimes I_n] \widetilde{w_{j,i}}^{\ell_{j,i}}.
\]
The operation $W^{\ell\to k}\otimes I_n$ is well-suited for that since it is a morphism from $\Bigmodule n\ell$ to $\Bigmodule nk$, but it also maps $\cB^{\ell}_{j}\otimes\cE_i$ into $\cB^{k}_{j}\otimes\cE_i$.
Also, by \rf{prp:WIsotypical}, $W^{\ell\to k}$ is a positive multiple of $\Phi^{\ell\to k}_j$ on $\Spe{(n-j,j)}$, hence, Eq.~\rf{eqn:PhiNaBaze} is automatically satisfied.

First, by \rf{prp:basis}, the image $\grD_j\sB[\Bigmodule n{j-1}]$ is 1-dimensional and is spanned by
\[
\widetilde{w_{j,1}}^{j-1} = \O^{j-1} = \sum_{j=1}^s \Cs \otimes (\{a_s\}-\{b_s\}) \in \cA_j^{j-1}\cap\sA[\cB_{j-1}^{j-1}\otimes\cE_1].
\]
For the inclusion, note that each term in the sum belongs to $\cB_{j-1}^{j-1}\otimes\cE_1$, and it is the image of $C_j$ under the morphism $T\mapsto \sum_{d\in T} (T\setminus\{d\})\otimes\{d\}$, hence, belongs to $\cA_j^{j-1}$.

Now consider the image $\grD_j\sB[\Bigmodule n{j}]$, which, by the same \rf{prp:basis}, is 4-dimensional.
We already know one its vector $\widetilde{w_{j,1}}^j \in \cA_j^j\cap \sA[\cB^j_{j-1}\otimes \cE_1]$.
Also,
\begin{equation*}
\widetilde{w_{j,2}}^j = \IA^j + \IB^j + \IIIB^j
= C_j\otimes R^{[n]}_{1}
\in
\cA_j^j\cap \sA[\cB^j_{j}\otimes \cE_0].
\end{equation*}
Indeed, membership in $\cB^j_{j}\otimes \cE_0$ is obvious, and this vector is the image of $C_j$ under the morphism $T\mapsto T\otimes R^{[n]}_{1}$.

The remaining two dimensions of $\grD_j\sB[\Bigmodule n{j}]$ come from $B^j_j\otimes \cE_1$.
It is easier to identify the one orthogonal to $\cA_j^j$.
Consider the vector
\[
\Cs \sqtimes \sB[\{a_s\}\otimes\{b_s\} - \{b_s\}\otimes\{a_s\} - \{a_s\}\otimes\{d\} - \{d\}\otimes\{b_s\} + \{b_s\}\otimes\{d\} + \{d\}\otimes\{a_s\}],
\]
where $d\in [n-2j]$.
It is an eigenvector of $\grS^-_{\{a_s, b_s, d\}}$, 
hence, it is orthogonal to $\cA_j^j$ by \rf{clm:Sabc}.
We can pair the terms in the sum in two different ways: 
\begin{align*}
&\Cs \sqtimes \skB[{\sB[\{a_s\} - \{d\}]\otimes\{b_s\} + \sB[\{d\}- \{b_s\}]\otimes\{a_s\} + \sB[\{b_s\}-\{a_s\}]\otimes\{d\} }]\\
= &\Cs \sqtimes \skB[{\{a_s\}\otimes\sB[\{b_s\}- \{d\}] + \{b_s\}\otimes\sB[\{d\} - \{a_s\}] +  \{d\}\otimes\sB[\{a_s\}-\{b_s\}]}],
\end{align*}
which shows that this vector is in $\cB^j_j\otimes\cE_1$.
By summing over all $s\in[j]$ and $d\in[n-2j]$, we obtain the following vector in the image of $\grD_j$:
\begin{equation}
\label{eqn:triple j otimes 1}
\O^j + (n-2j)\IB^j - j \IIIB^j \in (\cA^j_j)^\perp \cap \sA[\cB_j^j\otimes\cE_1].
\end{equation}
The remaining fourth vector can be obtained by finding the vector orthogonal to~$\widetilde{w_{j,1}}^j$, $\widetilde{w_{j,2}}^j$ and~\rf{eqn:triple j otimes 1}.
However, it is also possible to use the vector
\[
(n-2j)\IA^j +(n-2j)\IB^j -2j \IIIB^j  = C_j\otimes\sC[ (n-2j)\sum_{a=n-2j+1}^n \{a\} - 2j\sum_{d\in [n-2j]} \{d\}  ] \in \cB_j^j\otimes\cE_1
\]
and perform Gram-Schmidt orthogonalisation process with~\rf{eqn:triple j otimes 1}.
This gives us the following vector in $\cB^j_j\otimes\cE_1$ orthogonal to~\rf{eqn:triple j otimes 1}:
\begin{equation}
\label{eqn:j otimes 1}
\widetilde{w_{j,3}}^j = 
- n\O^j + (n-j+1)(n-2j) \IA^j - (j-1)(n-2j)\IB^j - j(n-2j+2) \IIIB^j.
\end{equation}
Now we show that this vector belongs to $\cA_j^j$.
Note that $\IA^j\in A_j^j$ as the image of $C_j$ under the morphism $T\mapsto T\otimes\sum_{d\in T} d$.
The vector $\widetilde{w_{j,3}}^j\in \cB^j_j\otimes\cE_1$ has non-zero inner product with $\IA^j$, hence, $\cB^j_j\otimes\cE_1$ contains a copy of the irrep $\Spe{(n-j,j)}$.
Thus, by \rf{prp:Dj}, $\grD_j(\cB^j_j\otimes\cE_1)$ must contain a vector from $\cA_j^j$.
As the space is 2-dimensional, and the vector in~\rf{eqn:triple j otimes 1} is orthogonal to $\cA_j^j$, the vector $\widetilde{w_{j,3}}^j$, which is orthogonal to~\rf{eqn:triple j otimes 1}, lies in $\cA_j^j$.
Therefore, we have $\widetilde{w_{j,3}}^j \in \cA_j^j\cap\sA[\cB_{j}^{j}\otimes\cE_1]$ as required.


Now let us move to $\grD_j\sB[\Bigmodule n{j+1}]$.
The situation is similar.  The space is 6-dimensional, and we have identified four dimensions.
The remaining two come from $\cB_{j+1}^{j+1}\otimes \cE_1$.
Again, it is easier to identify the one perpendicular to $\cA_{j}^{j+1}$.
Consider the vector
\begin{align*}
\Cs\sqtimes \biggl[&(\{a_s\}-\{b_s\})\sqtimes (\{c\}-\{d\}) \otimes (\{c\}-\{d\}) \\
+ &(\{b_s \}-\{ c \})\sqtimes (\{ a_s \}-\{ d \}) \otimes (\{ a_s \}-\{ d \})\\
+ &(\{ c \}-\{ a_s \})\sqtimes (\{ b_s \}-\{ d \}) \otimes (\{ b_s \}-\{ d \}) \biggr],
\end{align*}
where $c,d\in [n-2j]$ are distinct.
The vector is in $\cB_{j+1}^{j+1}\otimes \cE_1$.
It is an eigenvector or $\grS^-_{\{a_s, b_s, c\}}$, hence, it is orthogonal to  $\cA_{j}^{j+1}$ by \rf{clm:Sabc}.
Summing over $s\in [j]$ and $c,d\in [n-2j]$ gives us the following vector in the image of $\grD_j$:
\begin{equation}
\label{eqn:tripe j+1 otimes 1}
\begin{aligned}
2\O^{j+1} &- (n-2j-1)\IA^{j+1} + (n-2j-1) \IB^{j+1} + (n-2j)(n-2j-1)\II^{j+1} \\
&+ (n-2j-1)j\IIIA^{j+1} - j \IIIB^{j+1} \in (\cA_{j}^{j+1})^\perp \cap \sA[\cB^{j+1}_{j+1}\otimes \cE_1].
\end{aligned}
\end{equation}
In principle, it is possible to find the sixth vector by solving a $5\times 6$ linear system, but there is a more computationally efficient way.
Observe that
\[
C_j\sqtimes (\{c\}-\{d\}) \otimes (\{c\}-\{d\}) \in \cB_{j+1}^{j+1}\otimes\cE_1.
\]
Summing over $c,d\in [n-2j]$ gives
\[
(n-2j-1) \IIIA^{j+1} - \IIIB^{j+1} \in \cB_{j+1}^{j+1}\otimes\cE_1.
\]
Performing Gram–Schmidt orthogonalization on two vectors gives us the following vector in $\cB_{j+1}^{j+1}\otimes \cE_1$ orthogonal to~\rf{eqn:tripe j+1 otimes 1}:
\begin{equation*}
\begin{aligned}
\widetilde{w_{j,4}}^{j+1} &= - 2 \O^{j+1} + (n-2j-1)\IA^{j+1} -(n-2j-1) \IB^{j+1} -(n-2j)(n-2j-1) \II^{j+1} \\ &+(n-2j+1)(n-2j-1) \IIIA^{j+1} - (n-2j+1) \IIIB^{j+1}.
\end{aligned}
\end{equation*}
This vector belongs to $\cA_j^{j+1}$ because it has non-zero inner product with $\IIIA^{j+1} - \II^{j+1}$, which is the image of $C_j$ under the morphism $T\mapsto T\sqtimes \sum_{d\notin T} \{d\}\otimes \{d\}$.

Let us note in passing, although we will not need this, that the vectors from~\rf{eqn:j otimes 1} and~\rf{eqn:tripe j+1 otimes 1} are from the isotypical subspace of $\Spe{(n-j-1,j,1)}$.

\subsection{Proof of \rf{lem:operatorV}}
\label{sec:operatorV}

To find decomposition of $V_k$, it suffices to find the projections of the normalized
\[
u^k_j = \sqrt{k} V_k\sA[C_j\sqtimes R^{[n-2j]}_{k-j}] = \IA^k + \IIIA^k
\]
onto the vectors $w_{j,1}^{k}, w_{j,2}^{k}, w_{j,3}^{k}, w_{j,4}^{k}$ from~\rf{sec:basis}.
We have
\[
\|u_j^k\|^2 = k 2^j {n-2j\choose k-j}.
\]

First, take the vector
\[
\widetilde{w_{j,1}}^k = \sA[W^{j-1\to k}\otimes I_n]\widetilde{w_{j,1}}^{j-1}  = 
\O^k + \IA^k - \IB^k + \II^k .
\]
We have
\[
\ip<u_j^k, \widetilde{w_{j,1}}^k> = \|\IA^k\|^2 = j2^j {n-2j\choose k-j}
\qqand
\normA|\widetilde{w_{j,1}}^k|^2 = j 2^j {n-2j+2\choose k-j+1},
\]
which gives
\[
\fita_{j,1}^k = 
\frac{\ip<u_j^k, \widetilde{w_{j,1}}^k >}{\|u_j^k\|\cdot \|\widetilde{w_{j,1}}^k\|} 
= \sqrt{\frac{j(k-j+1)(n-k-j+1)}{k(n-2j+2)(n-2j+1)}}.
\]

Second, take the vector
\[
\widetilde{w_{j,2}}^k = \sA[W^{j\to k}\otimes I_n]\widetilde{w_{j,1}}^{j}
= C_j\sqtimes R^{[2n-j]}_{k-j} \otimes R^{[n]}_{1}
\]
We have
\[
\ip<u_j^k , \widetilde{w_{j,2}}^k> =  k 2^j {n-2j\choose k-j}
\qqand
\normA|\widetilde{w_{j,2}}^k|^2 = n 2^j {n-2j\choose k-j}
\]
which gives
\[
\fita_{j,2}^k = 
\frac{\ip<u_j^k, \widetilde{w_{j,2}}^k >}{\|u_j^k\|\cdot \|\widetilde{w_{j,2}}^k\|} 
= \sqrt{\frac kn}.
\]

Third, consider the vector
$\widetilde{w_{j,3}}^k = (W^{j\to k}\otimes I_n)\widetilde{w_{j,3}}^j$.
Eq.~\rf{eqn:WOnBasis} and some simplifications give us
\begin{align*}
\ip<u_j^k, \widetilde{w_{j,3}}^k> &= \sA[-(k-j)n+(n-2j)(n-j+1)] \normA|\IA^k|^2 - j(n-2j+2) \normA|\IIIA^k|^2 \\
& = 2^j j(n-j+1)(n-2k) {n-2j\choose k-j}.
\end{align*}
To get the norm of $\widetilde{w_{j,3}}^k$, we can use \rf{eqn:WIsotypicalB}.  After some calculations, we arrive at
\[
\norm|\widetilde{w_{j,3}}^k|^2 = {n-2j\choose k-j}  \|\widetilde{w_{j,3}}^j\|^2 = 2^j n j(n-2j)(n-j+1)(n-2j+2) {n-2j\choose k-j}.
\]
Therefore,
\[
\fita_{j,3}^k = 
\frac{\ip<u_j^k, \widetilde{w_{j,3}}^k >}{\|u_j^k\|\cdot \|\widetilde{w_{j,3}}^k\|} 
= \frac{n-2k}{\sqrt{nk}} \sqrt{\frac{j(n-j+1)}{(n-2j)(n-2j+2)}}.
\]

Finally, consider the vector
$\widetilde{w_{j,4}}^k = (W^{j+1\to k}\otimes I_n)\widetilde{w_{j,4}}^{j+1}$.
Eq.~\rf{eqn:WOnBasis} gives us
\begin{align*}
\ip<u_j^k, \widetilde{w_{j,4}}^k > &= (n-j-k)(k-j) \normA|\IA^k|^2 + (n-j-k)(n-2j+1) \normA|\IIIA^k|^2 \\
& = 2^j (n-j-k)(k-j)(n-j+1) {n-2j\choose k-j}.
\end{align*}
For the norm of $\widetilde{w_{j,4}}^k$, we again use \rf{eqn:WIsotypicalB}.  After some calculations we arrive at
\begin{align*}
\norm|\widetilde{w_{j,4}}^k|^2 &= {n-2j-2\choose k-j-1}  \|\widetilde{w_{j,4}}^{j+1}\|^2 = 2^j (n-j+1)(n-2j+1)(n-2j)^2(n-2j-1)  {n-2j-2\choose k-j-1} \\
&= 2^j (n-j+1)(n-2j+1)(n-2j)(n-j-k)(k-j) {n-2j\choose k-j}.
\end{align*}
Therefore,
\[
\fita_{j,4}^k = 
\frac{\ip<u_j^k, \widetilde{w_{j,4}}^k >}{\|u_j^k\|\cdot \|\widetilde{w_{j,4}}^k\|} 
= \sqrt{\frac {(k-j)(n-j-k)(n-j+1)} {k(n-2j)(n-2j+1)} }.
\]

One can verify that 
\begin{equation*}
(\fita_{j,1}^k)^2+(\fita_{j,2}^k)^2+(\fita_{j,3}^k)^2+(\fita_{j,4}^k)^2 = 1,
\end{equation*}
which is the content of \rf{clm:phi_j is unit}.
This identity is obvious now as they are coordinates of a unit vector in an orthonormal basis.


\section*{Acknowledgements}
We thank anonymous referees for 
their useful suggestions on the previous versions of this paper.

A.B. is supported by the Latvian Quantum Initiative under European Union Recovery and Resilience Facility project no. 2.3.1.1.i.0/1/22/I/CFLA/001.
Part of this research was supported by the ERDF project number 1.1.1.2/I/16/113.

A.R. was supported by JSPS KAKENHI Grant No.~JP20H05966 and MEXT Quantum Leap Flagship Program (MEXT Q-LEAP) Grant No.~JPMXS0120319794. Part of this work was done while he was at Kyoto University and Nagoya University as a JSPS International Research Fellow supported by the JSPS KAKENHI Grant No.~JP19F19079, and while he was at the Centre for Quantum Technologies at the National University of Singapore supported by the Singapore Ministry of Education and the National Research Foundation under grant R-710-000-012-135.

\phantomsection
\addcontentsline{toc}{section}{Bibliography}

{
\small
\bibliographystyle{habbrvM}
\bibliography{belov}
}

\appendix

\section{Upper bounds}
\label{app:upper}

In this section, we briefly describe the algorithms matching our lower bounds in \rf{thm:main}.
None of them are particularly complicated, but some can be found interesting in their own merit.
Some of the algorithms are folklore, and some are taken from~\cite{aaronson:counting}.
Unless specified otherwise, all algorithms in this section are supposed to give correct answer with probability at least $2/3$.
By \rf{rem:reduce}, it is possible to reduce error by repetition.

\subsection{Algorithmic Preliminaries}

Before we proceed, we need some well-known results both from the theory of randomized and quantum algorithms.
They all are only needed for \rf{sec:algorithms}, and are not used anywhere else in the paper.
We assume the reader is familiar with basic probability theory.

\begin{thm}[Chebyshev's Inequality]
Let $X$ is a probability distribution with variance $\sigma^2$.
It is possible to get an estimate of the mean $\bE[X]$ with additive precision $O(\sigma)$ using 1 sample from $X$.
\end{thm}

Recall that in the above theorem, as elsewhere, we assume here that the algorithm can err with probability up to $1/3$.
A distribution that is concentrated on values 0 and 1 is called Bernoulli distribution.
For that, we have an important corollary:

\begin{thm}
\label{thm:bernoulli}
Let $X$ be a Bernoulli distribution with unknown mean $p$.
Then, $O(1/(p\eps^2))$ samples from $X$ are enough to estimate $p$ with multiplicative precision $\eps$.
\end{thm}

\begin{thm}[Coupon collector]
\label{thm:coupon}
Assume we are sampling uniformly at random elements out of a set $S$ of size $s$.
The expected number of samples needed to obtain $t$ distinct elements of $S$ is
\begin{itemize}
\item $O(t)$, if $t\le s/2$;
\item $O(s\log s)$, if $t=s$.
\end{itemize}
\end{thm}

For the quantum part, we will need the following version of the quantum amplitude amplification and estimation, which follows the original design of~\cite{brassard:amplification}.

\begin{thm}
\label{thm:amplification}
Let $\phi\in\cH$ be some state known to the algorithm.
It has a decomposition of the form $\phi = \alpha \ket|\phi_0> + \beta \ket|\phi_1>$ for some orthogonal and normalised $\phi_0, \phi_1 \in \cH$ and non-negative reals $\alpha$ and $\beta$ with $\beta < 1/\sqrt{2}$.
The decomposition is unknown to the algorithm, but it has black-box access to a unitary $R$ satisfying $R\phi_0=\phi_0$ and $R\phi_1 = -\phi_1$.
Under these assumptions, the following two procedures are available:
\begin{itemize}
\item{(Amplitude Estimation)} It is possible to get an estimate of $\beta$ with multiplicative precision $\eps$ using $O(1/(\eps\beta))$ executions of $R$.
\item{(Amplitude Amplification)} 
With probability $\Omega(1)$, it is possible to get a state $\phi'_1$ such that $|\ip<\phi_1,\phi'_1>|=\Omega(1)$ using $O(1/\beta)$ executions of $R$.
\end{itemize}
\end{thm}

We will call $\phi_1$ the \emph{marked part} of the state $\phi$, and $\beta$ the \emph{marked amplitude}.

\begin{rem}
\label{rem:aboutGrover}
Amplitude amplification works by applying the operator $SR$ to $\phi$ some specific number of times, where $S$ is reflection about $\phi$.
This has two consequences:
\begin{itemize}\itemsep=0pt
\item The state of the algorithm is guaranteed to stay in the span of $\phi_0$ and $\phi_1$.
\item In the extreme case $\beta=0$, the algorithm stays in the state $\phi=\phi_0$.
\end{itemize}
\end{rem}

\begin{cor}
\label{cor:amplification}
We can use \rf{thm:amplification} in one of the following three ways.
In all of them, $\phi_1 = \psi_x = \frac{1}{\sqrt{|x|}} \sum_{i\in x}\ket |i>$ is the state from~\rf{eqn:psix}.
\begin{itemize}
\item[(a)]
(From the outside, Grover's search~\cite{grover:search} and quantum counting~\cite{brassard:counting}).
We have $\phi = \frac1{\sqrt n}\sum_{i\in[n]} \ket |i>$.
In this case, it is possible to get an element in $x$ after $O(\sqrt{n/|x|})$ executions of the input oracle, and estimate $|x|$ to multiplicative precision $\eps$ in $O\s[\frac1\eps \sqrt{n/|x|}]$ queries, where the input oracle can be any of the following: state-generating, reflecting, or membership.

\item[(b)]
(From the inside).
We have $\phi = \frac1{\sqrt{|A|}} \sum_{i\in A}\ket |i>$, where $A$ is some subset of $x$ with $|A|\le |x|/2$ which is known to the algorithm.
In this case, it is possible to get an element in $x\setminus A$ in $O(\sqrt{|x|/|A|})$ queries, and estimate $|x|$ with multiplicative precision $\eps$ in $O\s[\frac1\eps \sqrt{|x|/|A|}]$ queries, where the input oracle can be any of the following: state generating or reflecting.

Moreover, the first algorithm is guaranteed to output an element of $x$ (but possibly lying in $A$).

\item[(b')]
(Incorrect execution of (b)).
The settings are as in $(b)$, but $\phi = \ket |i>$ with $i\notin x$.
In this case, amplitude amplification will stay in the state $\ket |i>$.
\end{itemize}
\end{cor}

\pfstart
For (a), we have $\phi_0 = \frac1{\sqrt{n-|x|}} \sum_{i\notin x} \ket|i>$.
The membership oracle implements $R$ by definition.
$R$ can be also implemented as the negation of the reflecting oracle, or using 2 queries to the state-generating oracle.
The marked amplitude $\beta = \sqrt{|x|/n}$.

For (b), we have $\phi_0$ as a vector orthogonal to $\phi_1$ in the span of $\phi$ and $\phi_1$.
Again, $R$ can be implemented as the negation of the reflecting oracle, or using 2 queries to the state-generating oracle.
The marked amplitude is $\beta = \sqrt{|A|/|x|}$.

In either (a) or (b), to get an element of $x$, we use amplitude amplification and measure $\phi_1'$ in the computational basis.
Estimate of $\beta$ with multiplicative precision $\eps$ converts to an estimate of $|x|$ with multiplicative precision $O(\eps)$, hence, we can use amplitude estimation to obtain the latter.

To get the ``moreover'' part of (b), we use the first point of \rf{rem:aboutGrover} to observe that the state of the amplitude amplification algorithm stays in the span of $\phi_0$ and $\phi_1$, hence, is not supported on the elements outside of $x$.

Point (b') follows from the second point of \rf{rem:aboutGrover}, as in this case $\phi$ is orthogonal to $\phi_1$ and we have $\beta=0$.
\end{proof}

\subsection{Algorithms}
\label{sec:algorithms}
In this section, we prove that \rf{thm:main} is tight.
We first describe the algorithms, and then show how they correspond to the entries of \rf{tbl:main}.

In the following two algorithms we assume classical samples from $x$ for clarity.
Clearly, quantum samples $\psi_x$ as well as the state-generating oracle also work.

\begin{prp}
\label{prp:coupon}
It is possible to solve the approximate counting problem using $O(k\log k)$ classical samples from $x$.
\end{prp}

\pfstart
The algorithm is similar to the classical coupon collector problem.
Sample the elements out of $x$ sufficiently many times.
Output that $|x| = k$ if the number of distinct elements observed is at most $k$, otherwise output that $|x| = k'$.  The algorithm has 1-sided error.

For the analysis, we apply \rf{thm:coupon} with $t=s=k'$ to obtain an upper bound of $O(k'\log k') = O(k\log k)$.
\pfend

\begin{prp}[\cite{aaronson:counting}]
\label{prp:classical2}
It is possible to solve the approximate counting problem using $O\sB[\frac{\sqrt k}{\eps}]$ classical samples from $x$.
\end{prp}

\pfstart
The idea of the algorithm is to sample from $x$ and count the number of pairs of equal samples.
Assume we have $\ell$ classical samples: $s_1,s_2,\dots,s_{\ell}$.  
For $1\le i<j\le \ell$, let $Z_{ij} = 1_{s_i = s_j}$.
The expectation and the variance satisfy $\bE[Z_{ij}] = 1/|x|$ and $\Var[Z_{ij}] = O(1/k)$.
The events $Z_{ij}$ are not independent, but they are \emph{pairwise} independent, which allows us to write
\[
\bE\skC[\sum_{i,j} Z_{ij} ] = \frac{\ell(\ell-1)}{2|x|}
\qqand
\Var\skC[\sum_{i,j} Z_{ij}] = O\sC[\frac{\ell^2}{k}].
\]
By Chebyshev's inequality, we can distinguish whether $|x| = k$ or $|x| = k'$ if
\[
\frac{\ell(\ell-1)}{2k} - \frac{\ell(\ell-1)}{2k'} = \Omega\sC[\frac{\ell}{\sqrt{k}}].
\]
Since
\[
\frac{\ell(\ell-1)}{2k} - \frac{\ell(\ell-1)}{2k'} = \Omega\sC[\frac {\ell^2}k\eps],
\]
this happens when $\ell \ge C\sqrt{k}/\eps$ for a sufficiently large constant $C$.
\pfend

\begin{prp}[\cite{aaronson:counting}]
\label{prp:classical3}
It is possible to solve the approximate counting problem using $O\sB[\frac{n}{k\eps^2}]$ copies of the state $\psi_x$ from~\rf{eqn:psix}.
\end{prp}

\pfstart
Consider the following procedure.  Take a copy of the state $\psi_x$ and measure it against the uniform superposition $\frac1{\sqrt n}\sum_{i\in[n]} \ket|i>$.
The probability of measuring the uniform superposition is exactly $|x|/n$.
We have to detect whether this probability is $k/n$ or $(1+\eps)k/n$. 
By \rf{thm:bernoulli}, this takes $O\sB[\frac{n}{k\eps^2}]$ samples.
\pfend

\begin{prp}[\cite{aaronson:counting}]
\label{prp:algk^1/3}
It is possible to solve the approximate counting problem in $O\s[{k^{1/3}}/{\eps^{2/3}}]$ queries to the state-generating oracle.
\end{prp}

\begin{proof}
Let $t\le k/2$ be some parameter to be specified later. 
We first obtain $t$ different elements out of $x$ using $O(t)$ classical samples from $x$ by \rf{thm:coupon} (implemented via the state-generating oracle).
Next, we execute the estimation algorithm of \rf{cor:amplification}(b).
Altogether, it takes
\[
O\sC[ t + \frac1\eps \sqrt{\frac kt} ]
\]
queries to the state-generating oracle.
The optimal value of $t$ is ${k^{1/3}}/(2\eps^{2/3})\le k/2$.
\end{proof}

\begin{prp}
\label{prp:algk/eps}
Assume the algorithm is given an element $j\in x$.
Then, it is possible to solve the approximate counting problem using $O\s[\sqrt{k/\eps}]$ queries to the reflecting oracle.

Moreover, if it actually turns out that $j\notin x$, the algorithm will report this after $O(\sqrt{k})$ queries with probability 1.
\end{prp}

\begin{proof}
We repeatedly execute the search algorithm of \rf{cor:amplification}(b) starting with $A=\{j\}$, and extending $A$ with newly found elements until $|A|=t$, where $t<k/2$ is some parameter to be specified later.
If $j\in x$, we are guaranteed that $A\subseteq x$ throughout the whole procedure.

When $|A|=t$, we use the estimation algorithm of Point (b) of \rf{cor:amplification} with this choice of $A$.
The total number of queries to the input oracle is
\[
O\sC[\sqrt{k} + \sqrt{\frac k2} + \cdots + \sqrt{\frac k {t-1}} + \frac1\eps\sqrt{\frac kt} ]
= O\sC[\sqrt{kt} + \frac1\eps\sqrt{\frac kt} ].
\]
The optimal choice for $t$ is $1/(2\eps) \le k/2$.

If $j\notin x$, by Point (b') of \rf{cor:amplification}, we will always get the element $j$ back on the first execution of the search algorithm.
Therefore, if we keep on getting $j$, we can end and report that $j\notin x$.
\end{proof}

\begin{prp}
\label{prp:algreflecting2}
The approximate counting problem can be solved using either $O\s[\sqrt{\frac k\eps} + \sqrt{\strut\frac nk}]$ queries to the reflecting oracle, or $O\s[\sqrt{\frac k\eps}]$ queries to the reflecting oracle and $O\s[\sqrt{\strut\frac nk}]$ queries to the membership oracle.
\end{prp}

\pfstart
The idea is to use \rf{prp:algk/eps}, but we need an element of $x$ beforehand.
For that we use Grover's search of \rf{cor:amplification}(a).
Moreover, if Grover's search fails, the algorithm of \rf{prp:algk/eps} is able to detect this, in which case we run Grover's search again.  After $O(1)$ tries, we will succeed with high enough probability.

Grover's search requires $O(\sqrt{n/k})$ applications of the reflecting or membership input oracle.
Together with the estimate of \rf{prp:algk/eps}, this gives the required complexity.
\pfend

\begin{thm}
\label{thm:upperMain}
The \rf{tbl:main} is tight except for $k$ replaced by $k\log k$ in the first item of the first row.
\end{thm}

By this we mean that it is possible to solve the approximate counting problem if the resources allocated to the algorithm satisfy one of the eight rows of \rf{tbl:main} with the meaning of $\Omega$ as follows.
We consider Row 4 for concreteness, the other rows being similar.
There exists a universal constant $C$ such that for every choice of $\QG$ and $\ell$ satisfying $\QG\sqrt{\ell} \ge \sqrt{k}/\eps$, it is possible to solve the approximate counting problem using at most $C\ell$ copies of the state $\psi_x$ and at most $C\QG$ executions of the state-generating oracle.

\pfstart[Proof of \rf{thm:upperMain}]
We will go through the table and comment on which propositions the corresponding entries are based.

In Row 1, these are Propositions~\ref{prp:coupon}, \ref{prp:classical2} and~\ref{prp:classical3}, respectively, where, as mentioned above, $k$ is replaced by $k\log k$ in the first case.

In Row 2, it is the estimation algorithm of \rf{cor:amplification}(a).

In Row 3, the first item is from \rf{cor:amplification}(a) again, and the second one is~\rf{prp:algk^1/3}.

In Row 4, we can assume that $\ell = O(k)$ since otherwise we fall under the scope of Row 1.  By \rf{thm:coupon}, we can obtain $t = \Omega(\ell)$ distinct elements of $x$ using $\ell$ classical samples.  After that, we refer to the estimation algorithm from  \rf{cor:amplification}(b).

The first item of Row 5 is from \rf{cor:amplification}(a), and the second one is~\rf{prp:algreflecting2}.

Row 6 is obtained in the similar way as Row 4, where we use $\ell+\QG$ classical samples from $x$.

Row 7 is \rf{prp:algk/eps}.  Note that in this case it suffices to have only a single copy of $\psi_x$ to reduce error probability via repetition, \emph{cf.}~\rf{rem:reduce}.

Row 8 is~\rf{prp:algreflecting2}.
\pfend

\section{Proofs omitted from \rf{sec:adv}}
\label{app:proofs}

\mycutecommand{\subgamma}{\tilde\gamma_2}
Let us first define the subrelative and the relative $\gamma_2$-norms.

\begin{defn}[Subrelative $\gamma_2$-norm]
\label{defn:subrelative}
Let $A = (A_{x,y})$ and $\Delta = (\Delta_{x,y})$ be two families of matrices labelled by $x\in X$ and $y\in Y$.
All matrices $A_{x,y}$ are of the same dimension, and all matrices $\Delta_{x,y}$ are of the same dimension.

The \emph{subrelative $\gamma_2$-norm} is defined as 
\begin{equation}
\label{eqn:subrelative}
\subgamma(A | \Delta) = \subgamma(A_{x,y} \mid \Delta_{x,y})_{x\in X,\; y\in Y}
=\max_\Gamma\frac{\|\Gamma\circ A\|}{\|\Gamma\circ\Delta\|}
\end{equation}
\end{defn}

\begin{defn}[Relative $\gamma_2$-norm]
\label{defn:relative}
Under the same assumptions as in~\rf{defn:subrelative}, the \emph{relative $\gamma_2$-norm},
\[
\gamma_2(A | \Delta) = \gamma_2(A_{x,y} \mid \Delta_{x,y})_{x\in X,\; y\in Y},
\]
is defined as the optimal value of the following optimisation problem, where $\Upsilon_x$ and $\Phi_y$ are linear operators of appropriate size and $\cW$ is a vector space:
\begin{subequations}
\label{eqn:relative}
\begin{alignat}{2}
&\mbox{\rm minimise} &\quad& \max \sfigB{ \max\nolimits_{x\in X} \norm|\Upsilon_x|^2, \max\nolimits_{y\in Y} \norm|\Phi_y|^2 } \\
& \mbox{\rm subject to}&&  
A_{x,y} = \Upsilon_x^* (\Delta_{x,y}\otimes I_{\cW}) \Phi_y \quad \text{\rm for all $x\in X$ and $y\in Y$.}  \label{eqn:relativeCondition}
\end{alignat}
\end{subequations}
\end{defn}

These norms are related by the following simple inequality.

\mycutecommand{\diag}{\mathop{\mathrm{diag}}}
\begin{prp}
\label{prp:subgamma<gamma}
For all $A$ and $\Delta$, we have $\subgamma(A|\Delta) \le \gamma_2(A|\Delta)$.
\end{prp}

\pfstart
From~\rf{eqn:relativeCondition}, we get that for every $X\times Y$-matrix $\Gamma$:
\[
\Gamma\circ A = \diag(\Upsilon_x^*) ((\Gamma\circ\Delta)\otimes I_\cW) \diag(\Phi_y),
\]
where $\diag(\Phi_y)$ is a block-diagonal matrix with blocks $\Phi_y$ on the diagonal.
Hence,
\[
\norm|\Gamma\circ A| \le \max_{x\in X}\|\Upsilon_x\|\cdot \|\Gamma\circ\Delta\| \cdot\max_{y\in Y} \|V_y\|  \le \gamma_2(A|\Delta)\|\Gamma\circ\Delta\|.\qedhere
\]
\pfend

Let us note that the norms are equal in an important special case when $A_{x,y}$ are $1\times 1$-matrices~\cite{belovs:variations}.
In particular, we recover the usual $\gamma_2$-norm of an $X\times Y$ matrix $A = (a_{x,y})$:
\[
\gamma_2(A) 
= \subgamma(a_{x,y}\mid 1)_{x\in X, y\in Y} 
= \gamma_2(a_{x,y}\mid 1)_{x\in X, y\in Y}.
\]

Now let us prove \rf{lem:advFunctionEvaluation}.

\pfstart[Proof of~\rf{lem:advFunctionEvaluation}]
Let $T$ be the $X\times Y$-matrix defined by $T\elem[x,y] = \ip<\tau_x, \tau_y>$.
We claim that $\gamma_2(T)\le 2\sqrt\delta$.
From this the lemma follows, as
\[
\normA|\Gamma\circ E| 
=\normA|\Gamma\circ (\Xi - T)| 
\ge \normA|\Gamma\circ \Xi| - \normA|\Gamma\circ T|
\ge 3\sqrt{\delta} - 2\sqrt{\delta} = \sqrt\delta,
\]
using the dual formulation of the $\gamma_2$-norm.

Let us prove the claim.
Recall that we assume that the workspace $\cH = \bC^{2}\otimes \cH'$.
Let $\Pi_0$ and $\Pi_1$ be the projectors on the values $0$ and $1$ of the first qubit, respectively.
By condition on the error of the algorithm, $\|\Pi_1 \tau_x\|^2, \|\Pi_0\tau_y\|^2\le \delta$ for all $x\in X$ and $y\in Y$.  Thus,
\[
\ip<\tau_x, \tau_y> = 
\ip<\sqrt[4]\delta\Pi_0 \tau_x \oplus \frac{\Pi_1\tau_x}{\sqrt[4]\delta}  , \frac{\Pi_0\tau_y}{\sqrt[4]\delta} \oplus \sqrt[4]\delta\Pi_1 \tau_y>.
\]
The norm of the both vectors on the right-hand side of this equation is at most $\sqrt2\sqrt[4]{\delta}$, and the claim follows by the primal formulation of the $\gamma_2$-norm.
\pfend

Let us now move to the proofs of $\gamma_2$-equivalences from \rf{sec:advInputOracles}.
By \rf{prp:subgamma<gamma}, to prove that $A$ and $\Delta$ are $\gamma_2$-equivalent as in \rf{defn:equivalence}, it suffices to show that
\[
\gamma_2(A|\Delta),\gamma_2(\Delta,A) = O(1).
\]

\begin{prp}[\cite{belovs:distributions}]
For the state-generating input oracle $O_x$ as defined in~\rf{eqn:oracleStateGenerating}, we have that the family $(O_x-O_y)_{x,y\in D}$ is $\gamma_2$-equivalent to the family $\Delta^{\psi}\oplus \Delta^{\psi^*}$ defined in~\rf{eqn:Deltapsi}.
\end{prp}

\pfstart
In particular, the state $\ket |0>$ is orthogonal to all $\psi_x$.

In one direction, we have that
\[
\psi_x - \psi_y = (O_x-O_y)\ket|0>
\qqand
\psi_x^* - \psi_y^* = -(O_x\ket|0>)^*(O_x-O_y)O_y^*,
\]
implying that
\[
\gamma_2\sA[\Delta^{\psi}_{x,y}\oplus \Delta^{\psi^*}_{x,y} \mid O_x - O_y]_{x,y\in D} \le 1.
\]

Now let us prove the opposite direction.
Let $L_x = \psi_x \oplus \psi_x^* = \ket|\psi_x>\langle 0| + \ket|0>\langle \psi_x|$.
Note that $L_x - L_y = \Delta^{\psi}_{x,y}\oplus \Delta^{\psi^*}_{x,y}$ 
and
$L_x^2$ is the projector onto the span of $\ket|0>$ and $\ket|\psi_x>$.
The latter gives us that $O_x = L_x +I - L_x^2$.
Hence,
\[
O_x - O_y = L_x - L_y + L_y^2 - L_x^2 = (L_x - L_y) - (L_x - L_y)L_y - L_x(L_x - L_y),
\]
which implies
\[
\gamma_2\sA[O_x - O_y \mid \Delta^{\psi}_{x,y}\oplus \Delta^{\psi^*}_{x,y} ]_{x,y\in D} 
= \gamma_2\sA[O_x - O_y \mid L_x - L_y ]_{x,y\in D} 
\le 3.\qedhere
\]
\pfend

\begin{prp}[\cite{belovs:variations}]
For the membership oracle $O_x$ as defined in~\rf{eqn:oracleMembership}, we have that the family $(O_x - O_y)_{x,y\in D}$ is $\gamma_2$-equivalent to the family $(\Dmem_{x,y})_{x,y\in D}$ defined in~\rf{eqn:Delta_xy}.
\end{prp}

\pfstart
Let us denote
\[
O_0 = \begin{pmatrix}
1 & 0 \\ 0 &1
\end{pmatrix}
\qqand
O_1 = \begin{pmatrix}
0 & 1 \\ 1 & 0
\end{pmatrix},
\]
so that $O_x = \bigoplus_{i\in[n]} O_{x_i}$.
We have
\[
1_{x_i\ne y_i} = (O_{x_i}\ket|0>)^*(O_{x_i}-O_{y_i})\ket|0>
\]
implying that
\[
\gamma_2\sA[\Dmem_{x,y} \mid O_x - O_y]_{x,y\in D}\le 1.
\]
On the other hand,
\[
O_{x_i} - O_{y_i} = O_{x_i} 1_{x_i\ne y_i} - 1_{x_i\ne y_i}O_{y_i},
\]
which gives
\[
\gamma_2\sA[O_x - O_y \mid \Dmem_{x,y} ]_{x,y\in D}\le 2.\qedhere
\]
\pfend

\section{Proof of \rf{thm:multipleGamma2}}
\label{app:multipleProof}

The proof closely follows the proof of Theorem 10 from~\cite{belovs:variations}.

Let $\psi_{t,x}$ and $\psi^{(i)}_{t,x}$ be like in \rf{sec:multiple}.
Let $T$ be the total number of invocations of the combined input oracle.
We have $\ip<\psi_{1,x}, \psi_{1,y}> = \ip<\xi_x, \xi_y>$, and we define
$\psi_{T+1, x} = \tau_x$.  This gives
\[
\ip<\xi_x, \xi_y> - \ip<\tau_x, \tau_y>
=
\sum_{t=1}^T \sB[\ip<\psi_{t,x}, \psi_{t,y}> - \ip<\psi_{t+1,x}, \psi_{t+1,y}>]
\]

Let $\phi_{t,x}$ be the state of the algorithm just after the $t$-th application of the combined input oracle.
Let $\phi^{(i)}_{t,x}$ be the part of $\phi_{t,x}$ that was processed by $O^{(i)}_x$.
Here we consider the case when the application of the oracle is direct:
the inverse case is similar.
We have
$
\phi^{(i)}_{t,x} = (I \otimes O^{(i)}_x) \psi^{(i)}_{t,x}.
$
This gives us
\begin{align*}
\ip<\psi_{t,x}, \psi_{t,y}> -& \ip<\psi_{t+1,x}, \psi_{t+1,y}>
=
\ip<\psi_{t,x}, \psi_{t,y}> - \ip<\phi_{t,x}, \phi_{t,y}>\\
&=
\sum_{i=1}^s \sB[\ip<\psi^{(i)}_{t,x}, \psi^{(i)}_{t,y}> - \ip<\phi^{(i)}_{t,x}, \phi^{(i)}_{t,y}>]\\
&=
\sum_{i=1}^s \sB[
\ip<(I \otimes O^{(i)}_x)\psi^{(i)}_{t,x}, (I \otimes O^{(i)}_x)\psi^{(i)}_{t,y}> - 
\ip<(I \otimes O^{(i)}_x)\psi^{(i)}_{t,x}, (I \otimes O^{(i)}_y)\psi^{(i)}_{t,y}>]\\
&=
\sum_{i=1}^s 
\ip<(I \otimes O^{(i)}_x)\psi^{(i)}_{t,x}, {\sA[I \otimes(O^{(i)}_x-O^{(i)}_y)]}\psi^{(i)}_{t,y}>.
\end{align*}
Let us define
\[
u_x^{(i)} = \bigoplus_{t=1}^T (I \otimes O^{(i)}_x)\psi^{(i)}_{t,x}
\qqand
v_y^{(i)} =  \bigoplus_{t=1}^T  \psi^{(i)}_{t,y}.
\]
In particular, we have $\normA|u^{(i)}_x|^2 = \normA|v^{(i)}_x|^2 = L_x^{(i)}$.
Recall that $E\elem[x,y] = \ip<\xi_x, \xi_y> - \ip<\tau_x, \tau_y>$ and $\Delta^{(i)}_{x,y} = O_x - O_y$.
Therefore, for the properly-sized identity matrix $I$:
\[
E\elem[x,y]
=
\sum_{i=1}^s {u_x^{(i)}}^* \sA[I \otimes \Delta^{(i)}_{x,y} ] v_y^{(i)}.
\]
Now we proceed, similarly as in the proof of \rf{prp:subgamma<gamma}.
For every $D\times D$-matrix $\Gamma$, we have
\[
\Gamma\circ E = \sum_{i=1}^s \diag\sA[{u_x^{(i)}}^*] \sA[\Gamma\circ (I \otimes \Delta^{(i)})] \diag\sA[v_x^{(i)}].
\]
Therefore,
\[
\|\Gamma\circ E\| 
\le \sum_{i=1}^s \max_{x\in D} \normA|u^{(i)}_x|\cdot \normB|\Gamma\circ\Delta^{(i)}|\cdot \max_{y\in D} \normA|v^{(i)}_y|
\le \sum_{i=1}^s \normA|\Gamma\circ\Delta^{(i)}| \max_{x\in D}L^{(i)}_x. 
\]

\section{Limitations of the positive-weighted adversary}
\label{app:posWeighted}
The approximate counting problem with the usual membership oracle is a prime example of the power of the original formulation of the adversary method due to Ambainis~\cite{ambainis:adv}.
For completeness, we briefly restate the formulation of the bound and its application to approximate counting.

\begin{thm}
\label{thm:advBasic}
Let $f: \{0,1\}^n\supseteq D\to\{0,1\}$ be a (possibly partial) function.  Suppose $X\subseteq f^{-1}(1)$, $Y\subseteq f^{-1}(0)$, $m,m',\ell,\ell'>0$, and a relation $\sim$ between $X$ and $Y$ are such that
\itemstart
\item for each $x\in X$, there are at least $m$ different $y\in Y$ such that $x\sim y$;
\item for each $y\in Y$, there are at least $m'$ different $x\in X$ such that $x\sim y$;
\item for each $x\in X$ and $j\in [n]$, there are at most $\ell$ different $y\in Y$ such that $x\sim y$ and $x_j\ne y_j$;
\item for each $y\in Y$ and $j\in [n]$, there are at most $\ell'$ different $x\in X$ such that $x\sim y$ and $x_j\ne y_j$.
\itemend
Then, any quantum algorithm evaluating $f$ has to use $\Omega\s[ \sqrt{\frac{mm'}{\ell\ell'}} ]$ queries to the standard input oracle.
\end{thm}

\begin{thm}
Consider the same counting problem as in~\rf{thm:main}, but assume that the algorithm only has access to the standard membership oracle.
Then, in order to solve the problem, the algorithm has to use 
$\Omega \s[\frac1\eps\sqrt{\frac nk}]$ queries.
\end{thm}

\pfstart
Let $X$ contain all the sets of size $k$, and $Y$ contain all the sets of size $(1+\eps)k$.
Define the relation $\sim$ so that two inputs $x\in X$ and $y\in Y$ satisfy $x\sim y$ iff $x\subseteq y$.
Then, it is easy to check that
\[
m = \binom{n-k}{\eps k},\quad
m' = \binom{(1+\eps)k}{\eps k},\quad
\ell = \binom{n-k-1}{\eps k-1},\quad
\text{and}\quad
\ell' = \binom{(1+\eps)k-1}{\eps k-1}.
\]
Hence,
\[
\sqrt{\frac{mm'}{\ell\ell'}} = \sqrt{\frac{(n-k)}{\eps k} \frac{(1+\eps)k}{\eps k}} = 
\Omega \s[\frac1\eps\sqrt{\frac nk}].\qedhere
\]
\pfend

This is a particularly short and nice combinatorial proof, and the question arises whether it is possible to adapt a variant of this technique for a more general \rf{thm:main}.
We argue that it is most likely impossible, in the sense that the most natural yet quite general adaptation provably fails.

Previous research suggests that it is possible to substitute a general lower bound like in~\rf{prp:formulation} with a simple combinatorial argument like in~\rf{thm:advBasic}  when~\rf{prp:formulation} admits a solution with $\Gamma$ having only non-negative real entries, see e.g.~\cite[Section 3.2.3]{belovs:phd}.
The latter is known as \emph{positive-weighted adversary}, and it is subject to some limitations, see~\cite[Section 3.3.2]{belovs:phd}.

We will prove that it is impossible to obtain a good lower bound via~\rf{prp:formulation} if we restrict $\Gamma$ to have non-negative entries.
For simplicity, we consider the case $\eps=1$ and $n\gg k$.
Moreover, we will restrict the algorithm.
We assume the algorithm only has access to the copies of the state $\psi_x$ and the usual membership oracle, and we only consider the exact version of the problem (with no error allowed).

Repeating the reasoning of \rf{sec:technicalFormulation} but for the case when we only have the membership oracle and assuming that there is no error, we get that the lower bound is given by
\begin{alignat*}{2}
&\mbox{\rm maximise} &\quad& \| \Gamma\circ \Psi^{\circ\ell}\| \\
& \mbox{\rm subject to}&&  \|\Gamma\circ \Delta_i\|\le 1\quad\text{for all $i\in [n]$}.
\end{alignat*}
The usual dual formulation of this problem is
\begin{alignat*}{3}
&\mbox{\rm minimise} &\quad& \max_{z\in D}\sum\nolimits_{j \in [n]} X_j\elem[z,z] \\
& \mbox{\rm subject to}&& \sum\nolimits_{j\colon x_j \ne y_j} X_j\elem[x,y] = \Psi\elem[x,y]^\ell &\quad& \text{\rm for all $x\in X$ and $y\in Y$;} \\
&&& X_j\succeq 0 && \mbox{\rm for all $j\in [n]$,}
\end{alignat*}
where $D = X\cup Y$ and the optimisation is over $D\times D$ positive semi-definite matrices $X_j$.
However, if we restrict $\Gamma$ to have non-negative entries, the bound becomes (the proof is essentially the same as in~\cite[Section 3.3.2]{belovs:phd}):
\begin{subequations}
\label{eqn:pos}
\begin{alignat}{3}
&\mbox{\rm minimise} &\quad& \max_{z\in D}\sum\nolimits_{j \in [n]} X_j\elem[z,z] \\
& \mbox{\rm subject to}&& \sum\nolimits_{j\colon x_j \ne y_j} X_j\elem[x,y] \ge \Psi\elem[x,y]^\ell &\quad& \text{\rm for all $x\in X$ and $y\in Y$;} \label{eqn:posCondition}\\
&&& X_j\succeq 0 && \mbox{\rm for all $j\in [n]$.}
\end{alignat}
\end{subequations}

Now we will show that this optimisation problem admits a feasible solution with objective value $O\sB[\sqrt{\frac n{k2^\ell}}]$.
This therefore rules out a non-trivial lower bound on the number of queries to the membership oracle whenever $\ell = \omega(\log n)$.

Indeed, when $\ell=0$, the approximate counting problem can be solved in $O(\sqrt{n/k})$ queries.
This means that~\rf{eqn:pos} has a feasible solution with objective value $O(\sqrt{n/k})$ when the right-hand side of~\rf{eqn:posCondition} is replaced by 1.
(It is also not hard to come up with an explicit solution.)
Now note that 
\[
\Psi[x,y] = \ip<\psi_x, \psi_y> \le 1/\sqrt{2}.
\]
This means that, if we scale down the solution by $\sqrt{2^\ell}$, we get a feasible solution to the original variant of~\rf{eqn:pos} with the objective value $O\sB[\sqrt{\frac n{k2^\ell}}]$.

\end{document}